\newcommand{\proposedMethod}{Neon}
\pgfplotsset{compat=1.15}
\pgfplotsset{
  /pgfplots/legend image code/.code={%
    \draw[mark repeat=2,mark phase=2,#1]
      plot coordinates {(0cm,0cm) (0.19cm,0cm) (0.38cm,0cm)};
  },
}
\pgfplotsset{
  cycle list/Dark2,
  cycle multiindex* list={mark list*\nextlist Dark2\nextlist},
}
\def\eqref#1{equation~\ref{#1}}
\def\1{\bm{1}}
\DeclareMathAlphabet{\mathsfit}{\encodingdefault}{\sfdefault}{m}{sl}
\SetMathAlphabet{\mathsfit}{bold}{\encodingdefault}{\sfdefault}{bx}{n}
\newcommand{\E}{\mathbb{E}}
\newcommand{\R}{\mathbb{R}}
\newtheorem{theorem}{Theorem}
\newtheorem{proposition}[theorem]{Proposition}
\newtheorem{lemma}[theorem]{Lemma}
\newtheorem{corollary}[theorem]{Corollary}
\newtheorem*{assumption*}{Assumption}
\theoremstyle{definition}
\theoremstyle{remark}
\newtheorem{remark}[theorem]{Remark}
\let\save@mathaccent\mathaccent
\newcommand*\if@single[3]{%
  \setbox0\hbox{${\mathaccent"0362{#1}}^H$}%
  \setbox2\hbox{${\mathaccent"0362{\kern0pt#1}}^H$}%
  \ifdim\ht0=\ht2 #3\else #2\fi
}
\newcommand*\rel@kern[1]{\kern#1\dimexpr\macc@kerna}
\newcommand*\widebar[1]{\@ifnextchar^{{\wide@bar{#1}{0}}}{\wide@bar{#1}{1}}}
\newcommand*\wide@bar[2]{\if@single{#1}{\wide@bar@{#1}{#2}{1}}{\wide@bar@{#1}{#2}{2}}}
\newcommand*\wide@bar@[3]{%
  \begingroup
  \def\mathaccent##1##2{%
    \let\mathaccent\save@mathaccent
    \if#32 \let\macc@nucleus\first@char \fi
    \setbox\z@\hbox{$\macc@style{\macc@nucleus}_{}$}%
    \setbox\tw@\hbox{$\macc@style{\macc@nucleus}{}_{}$}%
    \dimen@\wd\tw@
    \advance\dimen@-\wd\z@
    \divide\dimen@ 3
    \@tempdima\wd\tw@
    \advance\@tempdima-\scriptspace
    \divide\@tempdima 10
    \advance\dimen@-\@tempdima
    \ifdim\dimen@>\z@ \dimen@0pt\fi
    \rel@kern{0.6}\kern-\dimen@
    \if#31
      \overline{\rel@kern{-0.6}\kern\dimen@\macc@nucleus\rel@kern{0.4}\kern\dimen@}%
      \advance\dimen@0.4\dimexpr\macc@kerna
      \let\final@kern#2%
      \ifdim\dimen@<\z@ \let\final@kern1\fi
      \if\final@kern1 \kern-\dimen@\fi
    \else
      \overline{\rel@kern{-0.6}\kern\dimen@#1}%
    \fi
  }%
  \macc@depth\@ne
  \let\math@bgroup\@empty \let\math@egroup\macc@set@skewchar
  \mathsurround\z@ \frozen@everymath{\mathgroup\macc@group\relax}%
  \macc@set@skewchar\relax
  \let\mathaccentV\macc@nested@a
  \if#31
    \macc@nested@a\relax111{#1}%
  \else
    \def\gobble@till@marker##1\endmarker{}%
    \futurelet\first@char\gobble@till@marker#1\endmarker
    \ifcat\noexpand\first@char A\else \def\first@char{}\fi
    \macc@nested@a\relax111{\first@char}%
  \fi
  \endgroup
}
\definecolor{colorA}{HTML}{1B9E77}
\definecolor{colorB}{HTML}{D95F02}
\definecolor{colorC}{HTML}{7570B3}
\definecolor{colorD}{HTML}{E7298A}
\definecolor{colorE}{HTML}{66A61E}
\definecolor{colorF}{HTML}{E6AB02}
\definecolor{colorG}{HTML}{A6761D}
\definecolor{colorH}{HTML}{666666}
\definecolor{fuchsia}{rgb}{1.0, 0.0, 1.0}
\definecolor{headergray}{gray}{0.95}
\definecolor{ourmethodgreen}{HTML}{E6F5E6}
\definecolor{sotablue}{HTML}{EBF5FF}
\definecolor{bestyellow}{HTML}{FFF9C4}
\newcommand{\q}{\vspace*{-1mm}}
\newcommand{\qqq}{\vspace*{-3mm}}
\newcommand\blfootnote[1]{%
  \begingroup
  \renewcommand\thefootnote{}%
  \footnote{#1}%
  \addtocounter{footnote}{-1}%
  \endgroup
}
\definecolor{editone}{HTML}{0000FF}
\definecolor{edittwo}{HTML}{FF0099}
\definecolor{editthree}{HTML}{FF6600}
\definecolor{DarkGreen}{RGB}{158, 6, 151}
\def\hideNotes{0}
  \newcommand\richb[1]{{\color{red}\sf{[richb: #1]}}}
  \newcommand\sa[1]{{\color{purple}\sf{[Sina: #1]}}}
  \renewcommand{\tableofcontents}{}
  \newcommand\lolo[1]{}
  \newcommand\richb[1]{}
  \newcommand\josue[1]{}
\setlist[itemize]{leftmargin=.5cm}
\pgfplotsset{every x tick label/.append style={font=\tiny, yshift=0.5ex}}
\pgfplotsset{every y tick label/.append style={font=\tiny, xshift=0.5ex}}
\newcommand{\Rdata}{\mathcal{R}_{\mathrm{data}}} \newcommand{\Rsyn}{\mathcal{R}_{\mathrm{syn}}}
\title{Neon: Negative Extrapolation From \\ Self-Training Improves Image Generation}
\author{Sina Alemohammad\footnotemark[2]{~~}, ~ Zhangyang Wang\footnotemark[2]{~~},~ Richard G.\ Baraniuk\footnotemark[1]\\
\footnotemark[2]{~~}ECE Department, The University of Texas at Austin \\ \footnotemark[1]{~~}ECE Department, Rice University\\
}
\begin{document}

\maketitle

\begin{abstract}
Scaling generative AI models is bottlenecked by the scarcity of high-quality training data. The ease of synthesizing from a generative model suggests using (unverified) synthetic data to augment a limited corpus of real data for the purpose of fine-tuning in the hope of improving performance. Unfortunately, however, the resulting positive feedback loop leads to model autophagy disorder (MAD, aka model collapse) that results in a rapid degradation in sample quality and/or diversity. In this paper, we introduce Neon (for Negative Extrapolation frOm self-traiNing), a new learning method that turns the degradation from self-training into a powerful signal for self-improvement. Given a base model, Neon first fine-tunes it on its own self-synthesized data but then, counterintuitively, reverses its gradient updates to extrapolate away from the degraded weights.  We prove that Neon works because typical inference samplers that favor high-probability regions create a predictable anti-alignment between the synthetic and real data population gradients, which negative extrapolation corrects to better align the model with the true data distribution. Neon is remarkably easy to implement via a simple post-hoc merge that requires no new real data, works effectively with as few as 1k synthetic samples, and typically uses less than 1\% additional training compute.  We demonstrate Neon’s universality across a range of architectures (diffusion, flow matching, autoregressive, and inductive moment matching models) and datasets (ImageNet, CIFAR-10, and FFHQ). In particular, on ImageNet 256x256, Neon elevates the xAR-L model to a new state-of-the-art FID of 1.02 with only 0.36\% additional training compute. Code is available at \url{https://github.com/VITA-Group/Neon}
\blfootnote{$^{}$Corresponding author: \href{mailto:sina.alemohammad@austin.utexas.edu}{sina.alemohammad@austin.utexas.edu}}
\end{abstract}

\begin{figure}[h]%
    \centering
    \begin{minipage}{0.02\linewidth}
    \vspace{0cm}
        \begin{minipage}{\linewidth}
        \rotatebox{90}{ \sf\scriptsize xAR-L}
        \end{minipage}
        \begin{minipage}{\linewidth}
        \vspace{1cm}
        \rotatebox{90}{ \sf\scriptsize xAR-L + \proposedMethod} 
        \end{minipage}
    \end{minipage}
    \begin{minipage}{0.85\linewidth}
       \centering
        {\includegraphics[width=\linewidth]{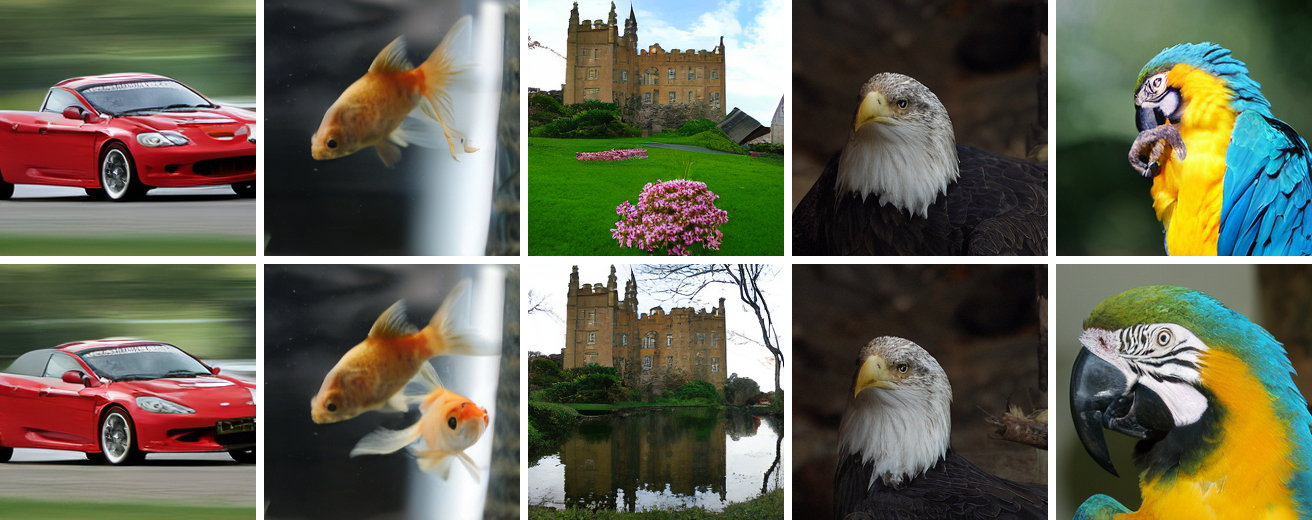} }%
    \end{minipage}
    \caption{\small \textbf{
    Good to great: Neon's state-of-the-art performance on ImageNet-256.} Neon elevates a powerful baseline generative model (xAR-L, top row) to a new level of sharpness and realism (bottom row) with a simple post-hoc merge. This leap in quality, improving the Fréchet Inception Distance (FID) from 1.28 to a record-breaking 1.02, is accomplished with only 0.36\% extra training compute.
    }
    \label{fig:mainfigure}%
\end{figure}

\section{Introduction}

Modern generative models for images have achieved remarkable photorealism through continuous advances in architectures, training methods, and scale. Diffusion models~\citep{ho2020denoising,song2021scorebasedSDE}, flow matching approaches~\citep{lipman2023flow,liu2023flow}, autoregressive architectures~\citep{ding2021cogview,yu2022scaling}, and few-step generators~\citep{song2023consistency,zhou2025inductive} now form the backbone of large-scale image generation systems. Despite these advances, the most reliable path to state-of-the-art performance remains scaling: ever more parameters, ever larger datasets, and ever increasing compute~\citep{kaplan2020scaling,henighan2020scaling}.

Important energy sustainability issues aside, this scaling paradigm faces a {\bf\em fundamental bottleneck}: {\bf\em high-quality training data}. 
Curating diverse, rights-cleared image datasets is expensive and time-consuming, with diminishing returns as existing sources are exhausted~\citep{will_we_run_out,muennighoff2023scaling}. 
As the gap between model capacity and available training data widens, the field must explore alternative paths to model improvement that do not rely on ever-larger real datasets.

The ease of synthesizing data from generative models has inspired a range of model improvement approaches to augment a limited real data set. 
At the simplistic end, one can fine-tune a model on its own generated outputs. 
However, such na\"{i}ve self-training has been shown to lead to ``model autophagy disorder'' (MAD)~\citep{alemohammad2024selfconsuming} or model collapse~\citep{shumailov2023curse}, where diversity and/or quality degrades.
At the complicated end, researchers have avoided collapse through sophisticated workarounds like external verifiers for synthetic data quality~\citep{feng2024beyond}, auxiliary discriminator networks~\citep{kim2023discriminator}, negative guidance during inference~\citep{alemohammad2024self}, and likelihood-based discrimination between distributions~\citep{zheng2025ddo}. While effective, these approaches add significant computational overhead, are restricted to specific architectures, or require complex iterative training.

{\bf Neon.}~
In this paper, {\bf\em we show that there is hidden promise in directly fine-tuning a model on its own generated data}.
Our key insight is that the degradation due to self-training is not random noise but rather a powerful signal that is anti-aligned with the real-data population gradient.
{\em \proposedMethod{}} (Negative Extrapolation from self-traiNing) exploits this anti-alignment through a simple parameter merge. 
Given a base model with parameters $\theta_r$ trained on real data, we first apply the na\"{i}ve self-training approach: we generate synthetic samples and briefly fine-tune to obtain the parameters $\theta_s$ that exhibit degraded performance. 
Then, rather than using $\theta_s$ directly, we perform {\bf \em negative extrapolation}:
\begin{equation}
\label{eq:neon-merge}
\theta_{\text{\proposedMethod}} = \theta_r - w(\theta_s - \theta_r) = (1+w)\theta_r - w\theta_s, \qquad w>0,
\end{equation}
where $w$ controls the extrapolation strength. The vector $\theta_s - \theta_r$ corresponds to the synthetic gradient direction; because this direction is anti-aligned with the (infinite real data) population gradient, reversing it reduces the true data risk and redistributes probability mass to under-represented modes.

{\bf Contributions.}~
[C1] We introduce {\bf\em {\proposedMethod{}}}, a deceptively simple post-processing method that improves generative models by reversing their degradation on self-generated data (Section~\ref{sec:method}).
In contrast to existing methods for synthetic data augmentation, \proposedMethod{} requires no additional real training data, no access to the original training data, no auxiliary models, no likelihood computation, and no inference modifications. 
[C2] We prove rigorously that mode-seeking inference samplers create a predictable anti-alignment between the synthetic and population gradients that guarantees the effectiveness of negative extrapolation  (Section~\ref{sec:theory}).
[C3] We demonstrate Neon's universality across diffusion, flow matching (Section~\ref{sec:diffusion_flow}), autoregressive (Section~\ref{sec:ar_models}), and few-step (Section~\ref{sec:few_steps}) models on CIFAR-10, FFHQ, and ImageNet with $<1$\% additional compute and as few as 1k synthetic samples.
For example, on ImageNet-256, \proposedMethod{} elevates xAR-L from an FID of 1.28 to the state-of-the-art 1.02 using only 0.36\% additional compute.
[C4] We show that \proposedMethod{}'s improvement mechanism operates through a precision-recall trade-off that redistributes probability mass from over- to under-represented modes (Section~\ref{sec:diffusion_flow}). 
[C5] We demonstrate that the Neon degradation signal is transferable, which enables synthetic data from one model architecture to improve another (Section~\ref{ablations}).

\section{Background}

\begin{comment}
    \paragraph{Notation and definitions.}
Let $\mathcal{D}$ be the training set drawn from $p_{\text{data}}$. Given initialization $\theta_0$, a training algorithm $\mathcal{A}$ fitted on $\mathcal{D}$ yields a generator $G_{\theta}:=\mathcal{A}(\mathcal{D},\theta_0)$. We report the training budget $\mathcal{B}$ as \emph{cumulative images seen} (in millions): $\mathcal{B}=\text{(global steps)}\times\text{(global batch size)}$, counting repeats and augmentations. We use $G_{\theta}$ uniformly across model families (its output is a score/velocity/logit depending on the family). An inference routine $\mathcal{I}$ with hyperparameters $\kappa$ induces a deployed sampling distribution $q_{\theta,\kappa}$; we write $x\sim q_{\theta,\kappa}$ for samples from $\mathcal{I}(G_{\theta};\kappa)$. When $\kappa=\varnothing$ (exact, unguided, infinite-step sampling with no truncation/temperature), we write $p_{\theta}:=q_{\theta,\varnothing}$; in particular $p_{\theta_r}:=q_{\theta_r,\varnothing}$. $\mathrm{dist}(\cdot,\cdot)$ denotes a generic divergence or discrepancy between distributions. We use $|\cdot|$ for set cardinality, and finally, for any positive-definite matrix $M$, we use the shorthand
\[
\|x\|_{M} := \|M^{1/2}x\|_2,\quad 
\langle x,y\rangle_{M} := x^\top M y,\quad
\|A\|_{\mathrm{op}, M} := \|M^{1/2} A M^{-1/2}\|_{\mathrm{op}},
\]
where $\|\cdot\|_2$ is the Euclidean norm, $\langle\cdot,\cdot\rangle$ the standard inner product, and $\|\cdot\|_{\mathrm{op}}$ the operator norm. 
\end{comment}
  
{\bf Notation and definitions.}~
Let $\mathcal{D}$ be a training data set drawn from $p_{\text{data}}$. A training algorithm produces the generative model $G_{\theta}$, whose output is a score, velocity, or logit depending on the model family. The training budget $\mathcal{B}$ is the cumulative number of images seen (in millions): $\mathcal{B}=\text{(global steps)}\times\text{(global batch size)}$. An inference routine $\mathcal{I}$ with hyperparameters $\kappa$ induces a sampling distribution $q_{\theta,\kappa}$. 
Denote the idealized distribution without inference-time modifications (e.g., guidance) by $p_{\theta}:=q_{\theta,\varnothing}$. We use $\mathrm{dist}(\cdot,\cdot)$ for a generic divergence, $|\cdot|$ for set cardinality, and the shorthand
\[
\|x\|_{M} := \|M^{1/2}x\|_2,\quad 
\langle x,y\rangle_{M} := x^\top M y,\quad
\|A\|_{\mathrm{op}, M} := \|M^{1/2} A M^{-1/2}\|_{\mathrm{op}},
\]
for any positive-definite matrix $M$, where $\|\cdot\|_2$, $\langle\cdot,\cdot\rangle$, and $\|\cdot\|_{\mathrm{op}}$ are the standard Euclidean norm, inner product, and operator norm. 
Let k denote $10^3$.

{\bf Visual generative models.}~
Many image generators trace a path from noise to data via an affine interpolation
$x_t=\alpha(t)x_0+\sigma(t)\epsilon$ for $t\in[0,1]$,
with $x_0\sim p_{\text{data}}$, $\epsilon\sim\mathcal{N}(0,I)$, and boundary conditions $\alpha(0)=1,\ \sigma(0)=0,\ \alpha(1)=0,\ \sigma(1)=1$, inducing $p_0=p_{\text{data}}$ and $p_1=\mathcal{N}(0,I)$~\citep{song2021scorebasedSDE,lipman2023flow}.

\textbf{Diffusion models}~\citep{ho2020denoising,song2021scorebasedSDE} train $G_{\theta}(x,t)$ to approximate the score $\nabla_x\log p_t(x)$ (or equivalently, predict noise). At inference, the learned score drives the reverse-time SDE or probability-flow ODE.

\textbf{Flow matching}~\citep{lipman2023flow,flowtong1} learns the conditional velocity
$v^{\star}(x_0,\epsilon,t)=\alpha'(t)x_0+\sigma'(t)\epsilon$
by regressing $G_{\theta}(x_t,t)$ with squared error; sampling integrates $\dot{x}_t=G_{\theta}(x_t,t)$ from $t=1$ to $t=0$.

\textbf{Few-step generators} reduce sampling cost by collapsing many steps. Consistency models~\citep{song2023consistency} predict $x_0$ directly from $(x_t,t)$; IMM~\citep{zhou2025inductive} learns direct transitions $x_s=G_{\theta}(x_t,t\!\to\! s)$ with moment-matching, enabling quality with $T{\approx}1$–$8$ steps.

\textbf{Autoregressive models}~\citep{var,ren2025xar} factorize images into tokens $y_{1:N}=\mathcal{T}(x)$ and model
$p(y_{1:N})=\prod_{i=1}^N p(y_{\pi(i)}\mid y_{\pi(<i)})$,
where $G_{\theta}(y_{<i})$ outputs next-token logits trained via cross-entropy. The ordering $\pi$ and decoding choices (temperature, top-$k$) form part of inference hyperparameters $\kappa$.

\begin{comment}
\textbf{Classifier-free guidance (CFG).}~
In conditional generation, training occasionally replaces context $c$ with null token $\emptyset$~\citep{ho2022classifier}. At inference, guidance scale $\gamma$ forms
$G_{\theta}^{\text{cfg}}(\cdot,c)=(1+\gamma)G_{\theta}(\cdot,c)-\gamma G_{\theta}(\cdot,\emptyset)$,
increasing condition adherence at some diversity cost. 
\end{comment}

\textbf{Self-training and collapse.}~ 
When models iteratively train on their own synthetic outputs, they exhibit what has been termed MADness or model collapse: $\mathbb{E}[\text{dist}(p_{\text{data}}, p_{\theta_t})]$ grows over time \citep{alemohammad2024selfconsuming,shumailov2023curse,dohmatob2024a}. Pure self-training diverges, while mixing real and synthetic data converges to degraded equilibria \citep{bertrand2023stability,gerstgrasser2024model}. While external signals beyond the training data can prevent collapse \citep{feng2024beyond,alemohammad2024self}, these methods require additional resources such as verifiers or fresh data. 

\textbf{Related work on synthetic data training.}~ 
Several recent methods successfully leverage synthetic data for model improvement, but require significant architectural constraints or computational overhead. Discriminator Guidance \citep{kim2023discriminator} trains a post-hoc discriminator on real versus generated samples across diffusion timesteps, using its gradients to correct the score function during sampling. While effective, it adds inference overhead and remains diffusion-specific. SIMS \citep{alemohammad2024self} employs self-generated data as negative guidance to steer diffusion trajectories away from degraded manifolds, but similarly requires inference-time modifications and is limited to diffusion models. Direct Discriminative Optimization (DDO) \citep{zheng2025ddo} reformulates likelihood-based models as implicit discriminators via log-likelihood ratios between target and reference models, enabling strong improvements for diffusion (via ELBO) and autoregressive models, but fundamentally cannot apply to likelihood-free architectures like flow matching \citep{lipman2023flow} or inductive moment matching \citep{zhou2025inductive}. Self-Play Fine-Tuning \citep{yuan2024selfplay} iteratively pits models against earlier checkpoints, surpassing RLHF methods on human preference benchmarks but requiring multiple training rounds and substantial computational overhead. In contrast to these methods, Neon requires no auxiliary models, no inference modifications, no likelihood computations, and works across all architectures with a simple post-hoc parameter merge.

\q
\section{Neon: Negative Extrapolation from Self-Training}
\label{sec:method}
\q
%\subsection{Neon Algorithm and Context}

%{\bf Algorithm.}~ 
When models train on synthetic samples produced by their inference procedure $\mathcal{I}$ (what we call ``self-training''), they predictably degrade. 
Neon exploits this: by reversing the degradation direction, we can improve a model without additional real data. 
Starting from a base generator $G_{\theta_r}$ (typically trained on real data), we: (i) generate the synthetic dataset $\mathcal{S}$ once using test-time inference $\mathcal{I}(G_{\theta_r};\kappa)$, (ii) briefly (e.g., using $<1$\% of the original training budget) fine-tune the generator on $\mathcal{S}$ to obtain the degraded $G_{\theta_s}$, and (iii) negatively extrapolate  via the parameter merge:
\begin{equation}
\label{eq:Neon}
\theta_{\text{Neon}} := \theta_r - w(\theta_s-\theta_r) = (1+w)\theta_r - w\theta_s,
\end{equation}
where $w > 0$ controls the extrapolation strength. Algorithm~\ref{alg:Neon} provides the full details.

\begin{algorithm}[h]
\caption{Neon: Negative Extrapolation from Self-Training}\label{alg:Neon}
\begin{algorithmic}[1]
\Require Base model $G_{\theta_r}$, inference routine $\mathcal{I}$ with hyperparameters $\kappa$
\Statex \textbf{Hyperparameters:} Synthetic dataset size $n_s=|\mathcal{S}|$, extrapolation strength $w$, training budget $\mathcal{B}$
\State $\mathcal{S} \gets \{x_i\}_{i=1}^{n_s}$ where $x_i \sim q_{\theta_r,\kappa}$ induced by $\mathcal{I}(G_{\theta_r};\kappa)$ \Comment{sample using test-time inference}
\State $G_{\theta_s} \gets \text{FineTune}(G_{\theta_r}, \mathcal{S}, \mathcal{B})$ \Comment{briefly fine-tune on synthetic data}
\State $\theta_{\text{Neon}} \gets (1+w)\theta_r - w\theta_s$ \Comment{reverse the degradation}
\Ensure Final generator $G_{\theta_{\text{Neon}}}$
\end{algorithmic}
\end{algorithm}

\q
\subsection{Why Neon Works}
\label{sec:theory}
\q
{\bf Geometric intuition via a toy study.}~
To visualize why negative extrapolation from degradation succeeds, consider a 2D Gaussian example where $p_{\text{data}} = \mathcal{N}(\mu_{\text{true}}, \Sigma_{\text{true}})$. We train a base model $G_{\theta_r}$ on $1$k real samples and then define two directions in parameter space: 
the {\bf \emph{degradation direction}} from fine-tuning the base model on $10^5$ synthetic samples from $G_{\theta_r}$ to obtain $G_{\theta_s}$, 
and an oracle {\bf\emph{improvement direction}} from fine-tuning on $5$k real samples (the original $1$k real data points plus $4$k new ones) to obtain $G_{\theta_o}$. We evaluate models in the 2D span of these directions:
\begin{equation}\label{eq:viz}
\theta(w_s, w_o) \;=\; \theta_r \;+\; w_s \underbrace{(\theta_r - \theta_s)}_{\text{$-$ degradation  direction (Neon)}} \;+\; w_o \underbrace{(\theta_o - \theta_r)}_{\text{oracle improvement direction}}
\end{equation}
where $w_s$ controls the amount of negative extrapolation (Neon) and $w_o$ adds real-data improvement (oracle baseline). 
Figure~\ref{fig:Neon-gauss} visualizes our key finding: moving backwards along the Neon direction alone ($w_o=0$) yields substantial improvement, indicating that the opposite of degradation direction and additional real-data improvement direction both point towards a better approximation of the true data distribution.

\begin{SCfigure}[1.6][t]
 \centering
 \includegraphics[width=0.35\textwidth]{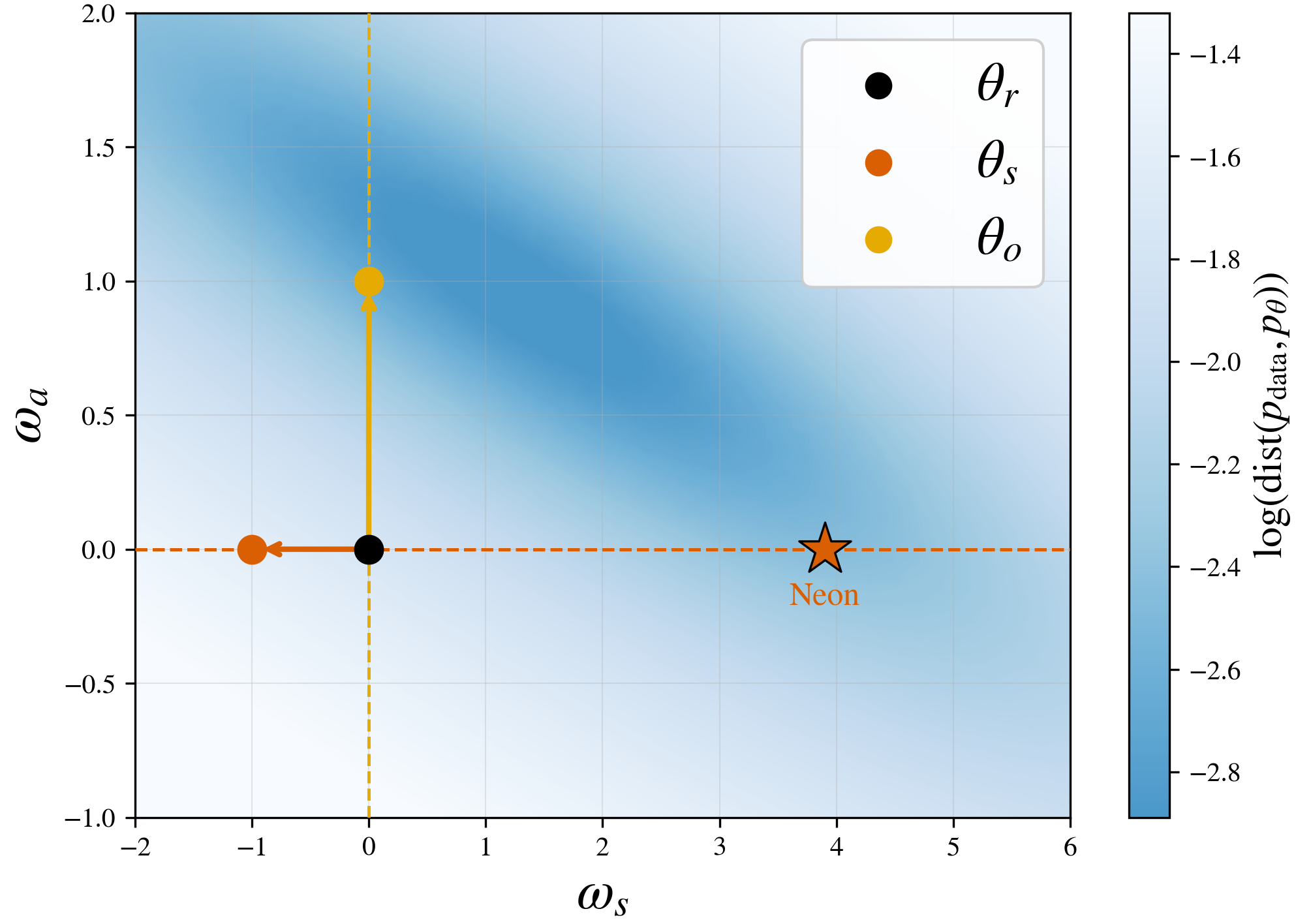}
 \caption{\small \textbf{Neon's key idea: synthetic degradation and real-data improvement point in opposite directions.} This toy 2D Gaussian example plots as a heat map the log Wasserstein distance to the true data distribution $p_{\text{data}}$ from the generative model $G_{\theta(w_s,w_o)}$. 
 We see that updating the model's parameters in the reverse of the direction they would be updated by fine-tuning on self-synthesized data (increasing $w_s$) achieves similar improvements to fine-tuning the base model with $4\times$ more real data (increasing $w_o$).}
 \label{fig:Neon-gauss}
\end{SCfigure}

{\bf Theoretical analysis.}~
We now formalize the intuition provided by the toy example. We prove that typical inference samplers cause the synthetic and real data gradients to point in opposite directions, enabling negative extrapolation to reduce the true data risk.

{\bf Set-up.}~
Let $\ell_\theta(x)$ be differentiable loss function and $\Rdata(\theta):=\E_{p_{\text{data}}}[\ell_\theta(X)]$ the corresponding risk.
Let $\theta^*\in\arg\min_\theta \Rdata(\theta)$ and write $\theta_r=\theta^*+\varepsilon$ with
$\|\varepsilon\|_{H_d}^2=\varepsilon^\top H_d\,\varepsilon$. Let $q_{\theta_r,\kappa}$ denote the fixed sampler constructed once at $\theta_r$. Define
\[
\begin{aligned}
\phi_\theta(x)&:=\nabla_\theta \ell_\theta(x), &
H_d&:=\nabla^2\Rdata(\theta^*)=\E_{p_{\text{data}}}\!\big[\partial_\theta \phi_\theta(X)\big]_{\theta=\theta^*},\\
\Rsyn(\theta)&:=\E_{x\sim q_{\theta_r,\kappa}}\![\ell_\theta(x)], &
r_d&:=\nabla_\theta \Rdata(\theta)\big|_{\theta_r},\qquad
r_s:=\nabla_\theta \Rsyn(\theta)\big|_{\theta_r}.
\end{aligned}
\]
Let $P\succ0$ be a preconditioner and set $K:=H_d^{1/2}PH_d^{1/2}$ with $mI\preceq K\preceq MI$.

We say the synthetic and real data gradients are {\bf\em anti-aligned} at $\theta_r$ if 
their preconditioned inner product is negative
\[
s := \langle r_d,\,P\,r_s\rangle < 0 .
\]

{\bf Neon improves under anti-alignment.}~
Short synthetic fine-tuning yields $\theta_s = \theta_r - \alpha\,P\,r_s + O(\alpha^ 2)$, which Neon reverses: $\theta_{\text{Neon}} = \theta_r + w\alpha\,P r_s + O(w\alpha^2)$.
A Taylor expansion of the risk yields
\begin{equation}\label{eq:neon-taylor-main}
\Rdata(\theta_{\text{Neon}})
= \Rdata(\theta_r)
+ w\alpha\,s
+ \frac{(w\alpha)^2}{2}\, r_s^\top P^\top \nabla^2\Rdata(\theta_r)\, P\, r_s
+ O\!\big((w\alpha)^3\big).
\end{equation}
When $s<0$, the negative linear term dominates for small $w>0$, ensuring that $\Rdata(\theta_{\text{Neon}})<\Rdata(\theta_r)$. When $\mathcal{R}_{\text{data}}$ is locally convex at $\theta_r$ (i.e., $\nabla^2\mathcal{R}_{\text{data}}(\theta_r) \succeq 0$), the optimal $w^* = -s/(\alpha z)>0$, where $z:=r_s^\top P^\top \nabla^2\Rdata(\theta_r) P r_s$.\footnote{Local convexity is sufficient but not necessary. The result holds under the weaker condition of directional smoothness along the step direction $d = Pr_s$. See Appendix~\ref{app:neon-anti-align} for details.} See Appendix~\ref{app:neon-anti-align} for the proof.

{\bf Sampler-induced anti-alignment.}~
Let
\begin{equation}
\label{eq:sampler-bias}
b \;:=\; \E_{q_{\theta_r,\kappa}}\!\big[\phi_{\theta^*}(X)\big],
\quad
\Delta \;:=\; \E_{q_{\theta_r,\kappa}}\!\big[J_{\theta^*}(X)\big]
              \;-\;\E_{p_{\text{data}}}\!\big[J_{\theta^*}(X)\big],
\quad
J_{\theta^*}(x):=\partial_\theta \phi_\theta(x)\big|_{\theta^*},
\end{equation}
and measure their sizes in the $H_d$–geometry by
\[
\eta_0 := \|b\|_{H_d^{-1}},
\qquad
\eta_1 := \|\Delta\|_{\mathrm{op},\,H_d^{-1}}.
\]
Define the angle between the model error $\varepsilon$ and the sampler bias $b$ in the $H_d$–geometry by
\begin{equation}
\label{eq:angle-maintext}
\cos\varphi
\;:=\;
\frac{\langle \varepsilon,\;H_d^{-1} b\rangle_{H_d}}
     {\|\varepsilon\|_{H_d}\,\|H_d^{-1} b\|_{H_d}}
\in[-1,1].
\end{equation}
Intuitively, $\cos\varphi<0$ means that the sampler’s bias points is in a direction opposing the current error, favoring anti-alignment.

\begin{theorem}[Anti-alignment under inference mismatch]\label{thm:alignment-main}
Let $K:=H_d^{1/2}PH_d^{1/2}$ with spectral bounds $mI\preceq K\preceq MI$.
Then the alignment $s=\langle r_d,\,P r_s\rangle$ obeys
\[
s \;\le\; M(1+\eta_1)\,\|\varepsilon\|_{H_d}^2
\;-\; m\,\eta_0\,\|\varepsilon\|_{H_d}\,[-\cos\varphi]_+
\;+\; O(\|\varepsilon\|_{H_d}^3).
\]
Consequently, a \emph{sufficient} condition for $s<0$ is that the leading two terms on the right-hand side be negative.
In particular, for $\cos\varphi<0$ and sufficiently small $\|\varepsilon\|_{H_d}$,
\[
\boxed{\ \|\varepsilon\|_{H_d}\;<\;\frac{m\,\eta_0}{M(1+\eta_1)}\,(-\cos\varphi)\ }\ \Longrightarrow\ s<0.
\]
\end{theorem}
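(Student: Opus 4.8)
\emph{Proof strategy.} The plan is to reduce the claimed bound to two first-order Taylor expansions at $\theta^\ast$ and then estimate the resulting terms with the spectral data $(m,M,\eta_0,\eta_1)$ and the angle $\varphi$. Since $\theta^\ast$ minimizes $\Rdata$ we have $\nabla\Rdata(\theta^\ast)=0$, hence $r_d=\nabla\Rdata(\theta_r)=H_d\varepsilon+O(\|\varepsilon\|^2)$. For $r_s$ the key point is that the sampler $q_{\theta_r,\kappa}$ is \emph{frozen} (constructed once at $\theta_r$ and not re-differentiated), so $r_s=\E_{q_{\theta_r,\kappa}}[\phi_{\theta_r}(X)]$; expanding $\phi_{\theta_r}=\phi_{\theta^\ast}+J_{\theta^\ast}\varepsilon+O(\|\varepsilon\|^2)$ and using $\E_{q_{\theta_r,\kappa}}[\phi_{\theta^\ast}]=b$ together with $\E_{q_{\theta_r,\kappa}}[J_{\theta^\ast}]=\E_{p_{\text{data}}}[J_{\theta^\ast}]+\Delta=H_d+\Delta$ yields $r_s=b+(H_d+\Delta)\varepsilon+O(\|\varepsilon\|^2)$. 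This is the step that isolates how the inference mismatch enters: a zeroth-order bias $b$ (nonzero precisely because mode-seeking samplers do not reproduce $p_{\text{data}}$) plus a first-order Hessian mismatch $\Delta$.

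Next I substitute into $s=r_d^\top P r_s$ and organize by powers of $\varepsilon$:
\[
s \;=\; \varepsilon^\top H_d P\,b \;+\; \varepsilon^\top H_d P(H_d+\Delta)\,\varepsilon \;+\; (\text{remainder}).
\]
For the quadratic ``curvature'' term I whiten with $H_d^{1/2}$: setting $u:=H_d^{1/2}\varepsilon$, $K:=H_d^{1/2}PH_d^{1/2}$ and $\widetilde\Delta:=H_d^{-1/2}\Delta H_d^{-1/2}$, it equals $u^\top K u + u^\top K\widetilde\Delta u \le \|K\|_{\mathrm{op}}(1+\|\widetilde\Delta\|_{\mathrm{op}})\|u\|_2^2 \le M(1+\eta_1)\|\varepsilon\|_{H_d}^2$, using $K\preceq MI$ and identifying $\|\widetilde\Delta\|_{\mathrm{op}}=\eta_1$ (this is what $\|\Delta\|_{\mathrm{op},H_d^{-1}}$ measures: the size of $\Delta$ relative to $H_d$). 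For the ``bias--error coupling'' term the same whitening gives $\varepsilon^\top H_d P b = u^\top K v$ with $v:=H_d^{-1/2}b$, $\|v\|_2=\|b\|_{H_d^{-1}}=\eta_0$, and the Euclidean angle between $u$ and $v$ equal to $\varphi$, since $\cos\varphi = \langle\varepsilon,H_d^{-1}b\rangle_{H_d}/(\|\varepsilon\|_{H_d}\eta_0) = u^\top v/(\|u\|_2\|v\|_2)$. Writing $K=mI+(K-mI)$ with $K-mI\succeq0$, the $mI$-part contributes $m\,u^\top v = m\eta_0\|\varepsilon\|_{H_d}\cos\varphi$, which equals $-m\eta_0\|\varepsilon\|_{H_d}[-\cos\varphi]_+$ exactly when $\cos\varphi<0$ --- the one genuinely favorable contribution. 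Assembling the two estimates gives $s\le M(1+\eta_1)\|\varepsilon\|_{H_d}^2 - m\eta_0\|\varepsilon\|_{H_d}[-\cos\varphi]_+ + O(\|\varepsilon\|_{H_d}^3)$, and the boxed corollary follows by arithmetic: for $\cos\varphi<0$ the leading two terms sum to something strictly negative iff $M(1+\eta_1)\|\varepsilon\|_{H_d} < m\eta_0(-\cos\varphi)$, i.e.\ $\|\varepsilon\|_{H_d} < m\eta_0(-\cos\varphi)/(M(1+\eta_1))$; for $\|\varepsilon\|_{H_d}$ small enough the cubic remainder cannot flip the sign, so $s<0$.

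The step I expect to be the real work (as opposed to the two-line expansion above) is the remainder bookkeeping. The cross term between the $O(\|\varepsilon\|^2)$ Taylor remainder of $r_d$ and the $\Theta(1)$-sized bias $b$ is itself only $O(\|\varepsilon\|^2)$, so folding it into the $O(\|\varepsilon\|^3)$ term is not automatic: I would either assume $\Rdata$ is locally quadratic at $\theta^\ast$, or --- more in the spirit of the excerpt's footnote --- impose a Lipschitz-Hessian / directional-smoothness bound along the step direction $d=Pr_s$ and carry the resulting constant explicitly. A second, milder point is that $u^\top K v$ cannot in general be bounded by $m\,u^\top v$ alone once $K$ is anisotropic (the inequality is tight when $P\propto H_d^{-1}$, i.e.\ $K\propto I$); making exactly the stated constants appear in the general case requires either a conditioning hypothesis on $K$ or absorbing an $O((M-m)\eta_0\|\varepsilon\|_{H_d})$ ``leakage'' into the error term. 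I would fix whichever of these conventions is in force at the outset; with it in place, the proof is precisely the expansion-plus-Cauchy--Schwarz sketched here.
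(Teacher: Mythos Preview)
Your approach is essentially identical to the paper's: first-order expand $r_d$ and $r_s$ at $\theta^\ast$, whiten with $H_d^{1/2}$ to write everything in terms of $u=H_d^{1/2}\varepsilon$, $v=H_d^{-1/2}b$, $K$, and $\widetilde\Delta$, bound the quadratic piece by $M(1+\eta_1)\|\varepsilon\|_{H_d}^2$, and extract the favorable $m\eta_0\cos\varphi$ contribution from the bias--error coupling. The two subtleties you flag---the $R_1^\top Pb=O(\|\varepsilon\|_{H_d}^2)$ cross term and the anisotropic-$K$ leakage in bounding $u^\top Kv$ by $m\,u^\top v$---are genuine and are handled just as informally in the paper's own proof (it identifies the angle in the $K^{1/2}$-geometry with $\varphi$ without justification and lumps all remainders into $O(\|\varepsilon\|_{H_d}^3)$), so your caveats are on point rather than gaps in your argument.
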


%\noindent\emph
{See Appendices~\ref{app:neon-anti-align}--\ref{app:upper-bound-s} for the proof.}

\textbf{Mode-seeking samplers induce $s < 0$.}~
The inference routines of many of today's generative models can be written as a monotone reweighting of the reference model
\[
q(x)\ \propto\ f\!\big(\log p_{\theta_r}(x)\big)\,p_{\theta_r}(x),
\quad \text{with $f$ nondecreasing and not a.e.\ constant.}
\]
Such {\bf\em mode-seeking} samplers emphasize high-density regions and (to first order near $\theta^*$) produce an {\bf\em obtuse} angle with $b$, i.e., $\cos\varphi<0$ in (\ref{eq:angle-maintext}).
%\eqref{eq:angle-maintext}. 
Combining this with Theorem~\ref{thm:alignment-main} yields a transparent sufficient condition for $s<0$ near 
strong base models (i.e small $\|\varepsilon\|_{H_d}$); hence, negative extrapolation ($w>0$ in 
(\ref{eq:Neon})
%Eq.~\eqref{eq:Neon}) 
reduces the real-data risk $\mathcal{R}_{\text{data}}.$

Some concrete instances: (i) AR: temperature $\tau<1$ and top-$p$/$k$ truncation yield nondecreasing reweighting of $\log p_{\theta_r}$; see Appendix~\ref{app:acute-angle} for the proof for AR models. (ii) Diffusion/flow: finite-step ODE solvers (including classifier-free guidance (CFG) \citep{ho2022classifier}) induce monotone terminal reweighting to first order in step size; see Appendix~\ref{app:acute-angle-diff} for the proof for diffusion models.\footnote{For the proof of finite-step
ODE solvers being mode-seeking, we assume curvature–density coupling: contraction $\mathbb{E}[\sum_k \|\nabla_x f(X_{t_k},t_k)\|_{\mathrm{Fr}}^2 | X_0=x_0]$ increases with $\log p_{\theta_r}(x_0)$.}

\textbf{When Neon fails.}~
Neon's success requires $s < 0$ (negative gradient alignment). If the sampler is not mode-seeking but rather diversity-seeking --- meaning that it upweights low-probability regions via $q(x) \propto f(\log p_{\theta_r}(x))p_{\theta_r}(x)$ with $f$ nonincreasing --- then our theory shows that $s > 0$ near good models (small $|\varepsilon|_{H_d}$) and assuming modest curvature tilt (i.e., small $\eta_1$).
In this case, standard self-training (moving toward $\theta_s$, equivalent to negative $w$) would actually improve the model, while Neon's prescription (positive $w$) would harm it.
Diversity-seeking samplers are rare in practice: they require temperature $\tau > 1$ for AR models or specialized samplers that decrease contraction near modes for diffusion models, both of which are rare design choices.
See Appendix~\ref{sec:when-self-helps} for more details.

{\bf Finite $|\mathcal{S}|$ effects.}~ Our analysis assumes that the population synthetic gradients $r_s(\theta_r)$, but in practice we use finite $\mathcal{S}$ with brief fine-tuning from $\theta_r$. For checkpoint $\theta_s$ after $T$ steps with step size $\alpha$, the displacement $d_T:=(\theta_s-\theta_r)/(\alpha T)$ concentrates on $-Pr_s^{(\mathcal{S})}(\theta_r)$ when $T$ is sufficiently large while $\alpha T$ remains small, yielding stable, low-variance Neon directions despite limited $|\mathcal{S}|$. This produces a U-shaped performance in $|\mathcal{S}|$: very small sets are variance-limited, very large sets amplify curvature effects (inflating the quadratic term in our Taylor expansion), while moderate sizes optimally balance these competing factors. 
See Appendix~\ref{app:finiteS-shortFT} for formal bounds and parameter selection guidance.

\q
\section{Experiments}\label{sec:exp}
\q
We evaluate \proposedMethod{} across four model families --- diffusion (EDM~\citep{karasedm}), flow matching~\citep{flowtong1,flowtong2}, autoregressive (VAR~\citep{var}, xAR~\citep{ren2025xar}), and few-step (IMM~\citep{zhou2025inductive}) --- on ImageNet~\citep{deng2009imagenet}, CIFAR-10~\citep{krizhevsky2009learning}, and FFHQ~\citep{karras2019style}.

For each model, starting from a public checkpoint $G_{\theta_r}$, we generate synthetic datasets $\mathcal{S}$ using the FID-optimal inference settings $\kappa$ from each paper. We fine-tune on $\mathcal{S}$ with the original training recipe at reduced learning rate (see Appendix~\ref{appendix:experiments_details} for details). We report FID as our primary metric using 10k/50k samples for hyperparameter search/final evaluation~\citep{heusel2017gans}, with Precision/Recall~\citep{kynkaanniemi2019improved} at $k=5$ nearest neighbors. For a comprehensive comparison of \proposedMethod{} against state-of-the-art generative models across all benchmarks, please see Table~\ref{tab:main_results}.

\begin{figure}[t]
\centering
\begin{tikzpicture}
\pgfplotsset{/pgfplots/group/every plot/.append style = {very thick}};
\begin{groupplot}[group style = {group size = 3 by 1, horizontal sep = 15mm, vertical sep = 4mm}, width = 0.33\linewidth]
  \nextgroupplot[
    title ={\scriptsize EDM-VP | CIFAR-10},
    ylabel={\scriptsize FID},
    xlabel={\scriptsize $\mathcal{B}$ (Mi)},
    axis x line*=bottom,
    axis y line*=left,
    xmin=0,xmax=6,
    ymin=1.2,ymax=2,
    grid, legend style = {at={(0.02,0.02)}, nodes={scale=0.35, transform shape}, column sep = 0pt, legend to name = legend11, text=black, cells={align=left},}]
    \addlegendimage{empty legend}
    \addlegendentry{\hspace{-1.2cm}$|\mathcal{S}|$}
    \addplot[colorA, thick] table [x index=0, y index=1, col sep=comma]{csv/EDM_cifar10/figure1_cifarEDM.csv};
    \addplot[colorB, thick] table [x index=0, y index=3, col sep=comma]{csv/EDM_cifar10/figure1_cifarEDM.csv};
    \addplot[colorC, thick] table [x index=0, y index=5, col sep=comma]{csv/EDM_cifar10/figure1_cifarEDM.csv};
    \addplot[colorD, thick] table [x index=0, y index=7, col sep=comma]{csv/EDM_cifar10/figure1_cifarEDM.csv};
    \addlegendentryexpanded{1k };
    \addlegendentryexpanded{6k };
    \addlegendentryexpanded{25k};
    \addlegendentryexpanded{100k};
%%%%%%%%%% FFHQ
  \nextgroupplot[
    title ={\scriptsize EDM-VP | FFHQ 64},
    ylabel={\scriptsize FID},
    xlabel={\scriptsize $\mathcal{B}$ (Mi)},
    axis x line*=bottom,
    axis y line*=left,
    xmin=0,xmax=3,
    ymin=0.5,ymax=2.5,
    grid, legend style = {at={(0.02,0.02)}, nodes={scale=0.35, transform shape}, column sep = 0pt, legend to name = legend21, text=black, cells={align=left},}]
    \addlegendimage{empty legend}
    \addlegendentry{\hspace{-1.2cm}$|\mathcal{S}|$}
    \addplot[colorA, thick] table [x index=0, y index=8, col sep=comma]{csv/EDM_ffhq/figure1_ffhqEDM.csv};
    \addplot[colorB, thick] table [x index=0, y index=6, col sep=comma]{csv/EDM_ffhq/figure1_ffhqEDM.csv};
    \addplot[colorC, thick] table [x index=0, y index=4, col sep=comma]{csv/EDM_ffhq/figure1_ffhqEDM.csv};
    \addplot[colorD, thick] table [x index=0, y index=2, col sep=comma]{csv/EDM_ffhq/figure1_ffhqEDM.csv};
    \addlegendentryexpanded{1k };
    \addlegendentryexpanded{5k };
    \addlegendentryexpanded{18k};
    \addlegendentryexpanded{75k};
%%%%%%%%% FLOW CIFAR
  \nextgroupplot[
    title ={\scriptsize Flow Matching | CIFAR-10},
    ylabel={\scriptsize FID},
    xlabel={\scriptsize $\mathcal{B}$ (Mi)},
    axis x line*=bottom,
    axis y line*=left,
    xmin=0,xmax=3.75,
    ymax=4.5,ymin=2,
    grid, legend style = {at={(0.02,0.02)}, nodes={scale=0.35, transform shape}, column sep = 0pt, legend to name = legend31, text=black, cells={align=left},}]
    \addlegendimage{empty legend}
    \addlegendentry{\hspace{-1.2cm}$|\mathcal{S}|$}
    \addplot[colorA, thick] table [x index=0, y index=1, col sep=comma]{csv/Flow_cifar10/figure1_cifarFLOW.csv};
    \addplot[colorB, thick] table [x index=0, y index=3, col sep=comma]{csv/Flow_cifar10/figure1_cifarFLOW.csv};
    \addplot[colorC, thick] table [x index=0, y index=5, col sep=comma]{csv/Flow_cifar10/figure1_cifarFLOW.csv};
    \addplot[colorD, thick] table [x index=0, y index=7, col sep=comma]{csv/Flow_cifar10/figure1_cifarFLOW.csv};
    \addlegendentryexpanded{1k };
    \addlegendentryexpanded{6k };
    \addlegendentryexpanded{25k};
    \addlegendentryexpanded{100k};
%%%%%%%%%%
\end{groupplot}
%%%% cifar 10 edm
\node[anchor=north east, xshift=4pt, yshift=4pt] at (group c1r1.north east) {\pgfplotslegendfromname{legend11}};
%%%%% FFHQ edm
\node[anchor=north east, xshift=3pt, yshift=3pt] at (group c2r1.north east) {\pgfplotslegendfromname{legend21}};
%%%%% flow cifar10
\node[anchor=north east, xshift=4pt, yshift=4pt] at (group c3r1.north east) {\pgfplotslegendfromname{legend31}};
\end{tikzpicture}
\vspace{-3.5mm}
\caption{\small \textbf{Neon consistently improves FID with minimal self-training overhead.} Minimum FID (optimized over extrapolation strength $w$) vs.\ self-training budget $\mathcal{B}$ (millions of images seen during fine-tuning on $\mathcal{S}$) for varying synthetic dataset sizes $|\mathcal{S}|$, on EDM-VP (CIFAR-10/FFHQ-64) and flow matching (CIFAR-10). Optimal gains use $\mathcal{B} \le 3$Mi ($<2\%$ of base model training compute for EDM; $<3\%$ for flow), confirming Neon's efficiency. At $\mathcal{B}=0$, FID reflects the base model (no Neon).}
\label{fig:self-improve1}
\end{figure}
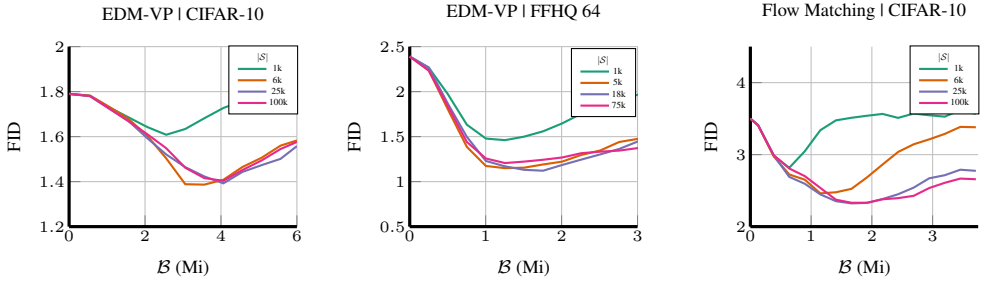

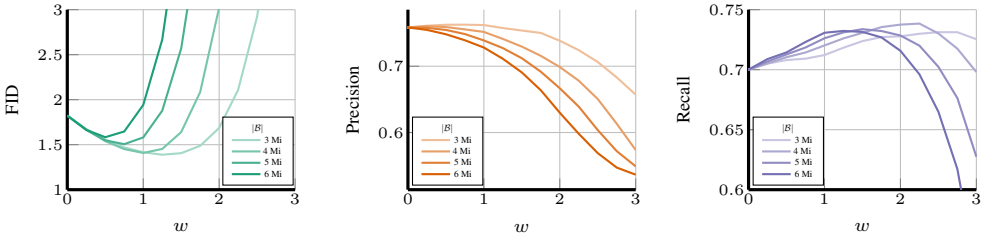
\begin{figure}[t]
\centering
\begin{tikzpicture}
\pgfplotsset{/pgfplots/group/every plot/.append style = {very thick}};
\begin{groupplot}[group style = {group size = 3 by 1, horizontal sep = 15mm, vertical sep = 4mm}, width = 0.33\linewidth]

  \nextgroupplot[
    ylabel ={\scriptsize FID},
    xlabel={\scriptsize $w$},
    axis x line*=bottom,
    axis y line*=left,
    xmin=0,xmax=3,
    ymin=1,ymax=3,
    grid, legend style = {at={(0.02,0.02)}, nodes={scale=0.35, transform shape}, column sep = 0pt, legend to name = legend12, text=black, cells={align=left},}]
    \addlegendimage{empty legend}
    \addlegendentry{\hspace{-1.2cm}$|\mathcal{B}|$}
    %\addplot[draw=colorA!20!white, thick] table [x index=0, y index=1, col sep=comma]{csv/EDM_cifar10/fid_EDM_cifar10.csv};
    \addplot[draw=colorA!40!white, thick] table [x index=0, y index=2, col sep=comma]{csv/EDM_cifar10/fid_EDM_cifar10.csv};
    \addplot[draw=colorA!60!white, thick] table [x index=0, y index=3, col sep=comma]{csv/EDM_cifar10/fid_EDM_cifar10.csv};
    \addplot[draw=colorA!80!white, thick] table [x index=0, y index=4, col sep=comma]{csv/EDM_cifar10/fid_EDM_cifar10.csv};
    \addplot[draw=colorA, thick] table [x index=0, y index=5, col sep=comma]{csv/EDM_cifar10/fid_EDM_cifar10.csv};
    %\addlegendentryexpanded{2 Mi };
    \addlegendentryexpanded{3 Mi};
    \addlegendentryexpanded{4 Mi };
    \addlegendentryexpanded{5 Mi};
    \addlegendentryexpanded{6 Mi};

  \nextgroupplot[
    ylabel ={\scriptsize Precision},
    xlabel={\scriptsize $w$ },
    axis x line*=bottom,
    axis y line*=left,
    xmin=0,xmax=3,
    grid, legend style = {at={(0,0)}, nodes={scale=0.35, transform shape}, column sep = 0pt, legend to name = legend13, text=black, cells={align=left},}]
    \addlegendimage{empty legend}
    \addlegendentry{\hspace{-1.2cm}$|\mathcal{B}|$}
    %\addplot[draw=colorB!20!white, thick] table [x index=0, y index=1, col sep=comma]{csv/EDM_cifar10/precision_EDM_cifar10.csv};
    \addplot[draw=colorB!40!white, thick] table [x index=0, y index=2, col sep=comma]{csv/EDM_cifar10/precision_EDM_cifar10.csv};
    \addplot[draw=colorB!60!white, thick] table [x index=0, y index=3, col sep=comma]{csv/EDM_cifar10/precision_EDM_cifar10.csv};
    \addplot[draw=colorB!80!white, thick] table [x index=0, y index=4, col sep=comma]{csv/EDM_cifar10/precision_EDM_cifar10.csv};
    \addplot[draw=colorB, thick] table [x index=0, y index=5, col sep=comma]{csv/EDM_cifar10/precision_EDM_cifar10.csv};
    %\addlegendentryexpanded{2 Mi };
    \addlegendentryexpanded{3 Mi};
    \addlegendentryexpanded{4 Mi };
    \addlegendentryexpanded{5 Mi};
    \addlegendentryexpanded{6 Mi};

  \nextgroupplot[
    ylabel ={\scriptsize Recall},
    xlabel={\scriptsize $w$ },
    axis x line*=bottom,
    axis y line*=left,
    xmin=0, xmax=3,
    ymin=0.6,ymax=0.75,
    grid, legend style = {at={(0,0)}, nodes={scale=0.35, transform shape}, column sep = 0pt, legend to name = legend14, text=black, cells={align=left},}]
    \addlegendimage{empty legend}
    \addlegendentry{\hspace{-1.2cm}$|\mathcal{B}|$}
    %\addplot[draw=colorC!20!white, thick] table [x index=0, y index=1, col sep=comma]{csv/EDM_cifar10/recall_EDM_cifar10.csv};
    \addplot[draw=colorC!40!white, thick] table [x index=0, y index=2, col sep=comma]{csv/EDM_cifar10/recall_EDM_cifar10.csv};
    \addplot[draw=colorC!60!white, thick] table [x index=0, y index=3, col sep=comma]{csv/EDM_cifar10/recall_EDM_cifar10.csv};
    \addplot[draw=colorC!80!white, thick] table [x index=0, y index=4, col sep=comma]{csv/EDM_cifar10/recall_EDM_cifar10.csv};
    \addplot[draw=colorC, thick] table [x index=0, y index=5, col sep=comma]{csv/EDM_cifar10/recall_EDM_cifar10.csv};
    %\addlegendentryexpanded{2 Mi };
    \addlegendentryexpanded{3 Mi};
    \addlegendentryexpanded{4 Mi };
    \addlegendentryexpanded{5 Mi};
    \addlegendentryexpanded{6 Mi};

\end{groupplot}

%%%% cifar 10 edm

\node[anchor=south east, xshift=3pt, yshift=-2pt] at (group c1r1.south east) {\pgfplotslegendfromname{legend12}};

\node[anchor=south west, xshift=-2pt, yshift=-2pt] at (group c2r1.south west) {\pgfplotslegendfromname{legend13}};

\node[anchor=south west, xshift=-2pt, yshift=-2pt] at (group c3r1.south west) {\pgfplotslegendfromname{legend14}};

\end{tikzpicture}
\vspace{-3.5mm}
\caption{\small \textbf{Neon trades precision for recall, yielding net FID improvement.} For the EDM-VP model trained on CIFAR-10, we plot the FID, precision, and recall vs.\ negative extrapolation strength $w$ for various training budgets $\mathcal{B}$. In each case, $|\mathcal{S}| = 6$k.}
\label{fig:cifar-fpr}
\end{figure}
\q
\subsection{Diffusion and Flow Matching Models} \label{sec:diffusion_flow}
\q

We evaluate \proposedMethod{} with the EDM-VP~\citep{karasedm} (CIFAR-10 conditional, FFHQ-64 unconditional) and flow matching~\citep{flowtong1,flowtong2} (CIFAR-10 unconditional) models using public checkpoints. The synthetic datasets $\mathcal{S}$ were generated with default inference settings.

\textbf{Results.}~ 
Figure~\ref{fig:self-improve1} plots the FID vs.\ the fine-tuning budget $\mathcal{B}$ for various $|\mathcal{S}|$. 
\proposedMethod{} achieves substantial gains with minimal overhead: 
Neon+EDM-VP trained on CIFAR-10 improves the FID from $1.78$ to \textbf{1.38} using only $6$k synthetic samples and $1.75\%$ extra compute compared to training the base model.
Neon+EDM-VP trained on FFHQ-64 improves the FID from $2.39$ to \textbf{1.12} using only $18$k samples and $0.85\%$ additional compute. 
Neon+Flow matching on CIFAR-10 improves the FID from $3.5$ to \textbf{2.32} using only $25$k samples and $3.2\%$ additional compute. 
Neon's performance shows a non-monotonic relationship with the synthetic dataset size $|\mathcal{S}|$, with optimal performance in the range $6$k--$25$k samples. 
Smaller $|\mathcal{S}|$ require more precise $w$ tuning but converge rapidly; larger $|\mathcal{S}|$ support a wider range of $w$'s but slower convergence.

Figure~\ref{fig:cifar-fpr} dissects \proposedMethod{}'s effect on EDM-VP trained on CIFAR-10 using precision-recall metrics with $|\mathcal{S}=6$k. The FID vs.\ weight relationship (left panel) exhibits the unimodal shape predicted by our Taylor series analysis. 
As fine-tuning progresses, the optimal $w^*$ decreases, which is consistent with $w^* \approx -s/(\alpha z)$, where $\alpha$ increases with training steps. 
The precision-recall trade-off (middle/right panels) reveals \proposedMethod{}'s mechanism: precision monotonically decreases with $w$, while recall follows an inverted-U peaking near the FID-optimal weight. This aligns with our analysis: fine-tuning on synthetic data concentrates probability mass on well-captured modes, degrading coverage. 
By reversing this direction, \proposedMethod{} redistributes mass from over-represented to under-represented regions, trading precision for improved recall and yielding net FID improvement. 
These dynamics intensify with longer fine-tuning, with later checkpoints showing sharper recall peaks and steeper precision drops. (See Appendix~\ref{appendix:diffusion_flow} for all models.)

\begin{figure}[t]
\centering
\begin{tikzpicture}
\pgfplotsset{/pgfplots/group/every plot/.append style = {very thick}};
\begin{groupplot}[group style = {group size = 4 by 1, horizontal sep = 4mm}, width = 0.32\linewidth]

  \nextgroupplot[
    title ={\scriptsize xAR-B | ImageNet-256},
    ylabel={\scriptsize FID},
    xlabel={\scriptsize $\mathcal{B}$ (Mi)},
    axis x line*=bottom,
    axis y line*=left,
    xmin=0,xmax=7,
    grid, legend style = {at={(0.02,0.02)}, nodes={scale=0.30, transform shape}, column sep = 0pt, legend to name = legend3, text=black, cells={align=left},}]
    \addplot[colorA, thick] table [x index=0, y index=1, col sep=comma]{csv/autoregressive_FID/figure1_256xrb.csv};
    %\addplot[colorB, thick] table [x index=0, y index=2, col sep=comma]{csv/autoregressive_FID/figure1_256xrb.csv};
    \addplot[colorB, thick] table [x index=0, y index=3, col sep=comma]{csv/autoregressive_FID/figure1_256xrb.csv};
    %\addplot[colorD, thick] table [x index=0, y index=4, col sep=comma]{csv/autoregressive_FID/figure1_256xrb.csv};
    \addplot[colorC, thick] table [x index=0, y index=5, col sep=comma]{csv/autoregressive_FID/figure1_256xrb.csv};
    %\addplot[colorF, thick] table [x index=0, y index=6, col sep=comma]{csv/autoregressive_FID/figure1_256xrb.csv};
    \addplot[colorD, thick] table [x index=0, y index=7, col sep=comma]{csv/autoregressive_FID/figure1_256xrb.csv};

  \nextgroupplot[
    title ={\scriptsize xAR-L | ImageNet-256 },
    xlabel={\scriptsize $\mathcal{B}$ (Mi)},
    axis x line*=bottom,
    axis y line*=left,
    xmin=0,xmax=7,
    grid, legend style = {at={(0.02,0.02)}, nodes={scale=0.35, transform shape}, column sep = 0pt, legend to name = legend1, text=black, cells={align=left},}]
    \addplot[colorA, thick] table [x index=0, y index=1, col sep=comma]{csv/autoregressive_FID/figure1_256xarl.csv};
    %\addplot[colorB, thick] table [x index=0, y index=2, col sep=comma]{csv/autoregressive_FID/figure1_256xarl.csv};
    \addplot[colorB, thick] table [x index=0, y index=3, col sep=comma]{csv/autoregressive_FID/figure1_256xarl.csv};
    %\addplot[colorD, thick] table [x index=0, y index=4, col sep=comma]{csv/autoregressive_FID/figure1_256xarl.csv};
    \addplot[colorC, thick] table [x index=0, y index=5, col sep=comma]{csv/autoregressive_FID/figure1_256xarl.csv};
    %\addplot[colorF, thick] table [x index=0, y index=6, col sep=comma]{csv/autoregressive_FID/figure1_256xarl.csv};
    \addplot[colorD, thick] table [x index=0, y index=7, col sep=comma]{csv/autoregressive_FID/figure1_256xarl.csv};

  \nextgroupplot[
    title ={\scriptsize VAR-d16 | ImageNet-256},
    xlabel={\scriptsize $\mathcal{B}$ (Mi)},
    axis x line*=bottom,
    axis y line*=left,
    xmin=0,xmax=7,
    grid, legend style = {at={(0,0)}, nodes={scale=0.35, transform shape}, column sep = 0pt, legend to name = legend3, text=black, cells={align=left},}]
    \addplot[colorA, thick] table [x index=0, y index=1, col sep=comma]{csv/autoregressive_FID/figure1_256var.csv};
    %\addplot[colorB, thick] table [x index=0, y index=2, col sep=comma]{csv/autoregressive_FID/figure1_256var.csv};
    \addplot[colorB, thick] table [x index=0, y index=3, col sep=comma]{csv/autoregressive_FID/figure1_256var.csv};
    %\addplot[colorD, thick] table [x index=0, y index=4, col sep=comma]{csv/autoregressive_FID/figure1_256var.csv};
    \addplot[colorC, thick] table [x index=0, y index=5, col sep=comma]{csv/autoregressive_FID/figure1_256var.csv};
    %\addplot[colorF, thick] table [x index=0, y index=6, col sep=comma]{csv/autoregressive_FID/figure1_256var.csv};
    \addplot[colorD, thick] table [x index=0, y index=7, col sep=comma]{csv/autoregressive_FID/figure1_256var.csv};

  \nextgroupplot[
    title ={\scriptsize VAR-d30 | ImageNet-512},
    xlabel={\scriptsize $\mathcal{B}$ (Mi)},
    axis x line*=bottom,
    axis y line*=left,
    xmin=0, xmax=3,
    grid, legend style = {at={(0,0)}, nodes={scale=0.35, transform shape}, column sep = 0pt, legend to name = legend0, text=black, cells={align=left},}]
    \addlegendimage{empty legend}
    \addlegendentry{\hspace{-1.2cm}$|\mathcal{S}|$}
    \addplot[colorA, thick] table [x index=0, y index=1, col sep=comma]{csv/autoregressive_FID/figure1_512var.csv};
    %\addplot[colorB, thick] table [x index=0, y index=2, col sep=comma]{csv/autoregressive_FID/figure1_512var.csv};
    \addplot[colorB, thick] table [x index=0, y index=3, col sep=comma]{csv/autoregressive_FID/figure1_512var.csv};
    %\addplot[colorD, thick] table [x index=0, y index=4, col sep=comma]{csv/autoregressive_FID/figure1_512var.csv};
    \addplot[colorC, thick] table [x index=0, y index=5, col sep=comma]{csv/autoregressive_FID/figure1_512var.csv};
    %\addplot[colorF, thick] table [x index=0, y index=6, col sep=comma]{csv/autoregressive_FID/figure1_512var.csv};
    \addplot[colorD, thick] table [x index=0, y index=7, col sep=comma]{csv/autoregressive_FID/figure1_512var.csv};
    \addlegendentryexpanded{1k };
    %\addlegendentryexpanded{3k };
    \addlegendentryexpanded{10k };
    %\addlegendentryexpanded{30k};
    \addlegendentryexpanded{90k};
    %\addlegendentryexpanded{250k};
    \addlegendentryexpanded{750k};

\end{groupplot}
\node at ($(group c4r1) + (25pt,15pt)$) {\pgfplotslegendfromname{legend0}}; 
\end{tikzpicture}
\vspace{-3.5mm}
\caption{\small \textbf{Neon consistently improves autoregressive models across architectures and resolutions.} We plot the minimum FID (optimized over merge weight $w$ and CFG scale $\gamma$) versus the fine-tuning budget $\mathcal{B}$ for various synthetic dataset sizes $|\mathcal{S}|$. From left: xAR-B and xAR-L on ImageNet-256 (with xAR-L achieving a state-of-the-art 1.02 FID), VAR-d16 on ImageNet-256, and VAR-d30 on ImageNet-512.}
\label{fig:self-improve2}
\end{figure}
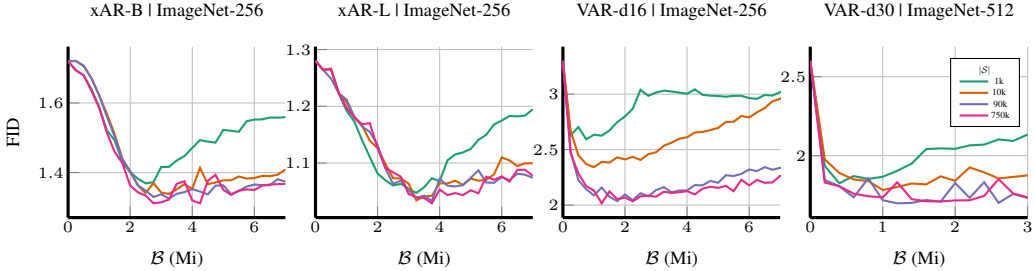

\begin{figure}[t]
\centering
% First minipage with PDF image
\begin{minipage}[t]{0.24\linewidth}
    \centering
    \scriptsize FID \\[0mm]
    \includegraphics[width=\linewidth, height=3cm, keepaspectratio=false]{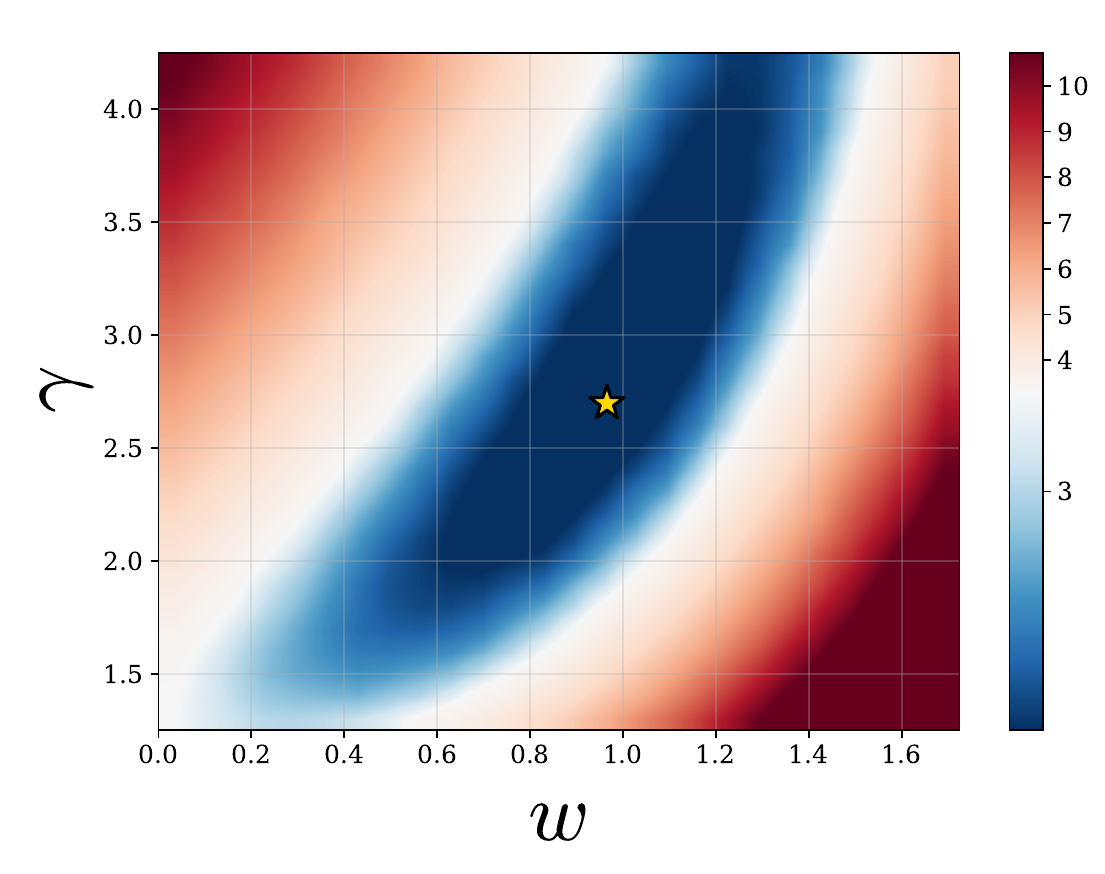}
\end{minipage}%
\hfill
% Second minipage with PDF image
\begin{minipage}[t]{0.24\linewidth}
    \centering
    \scriptsize Precision \\[0mm]
    \includegraphics[width=\linewidth, height=3cm, keepaspectratio=false]{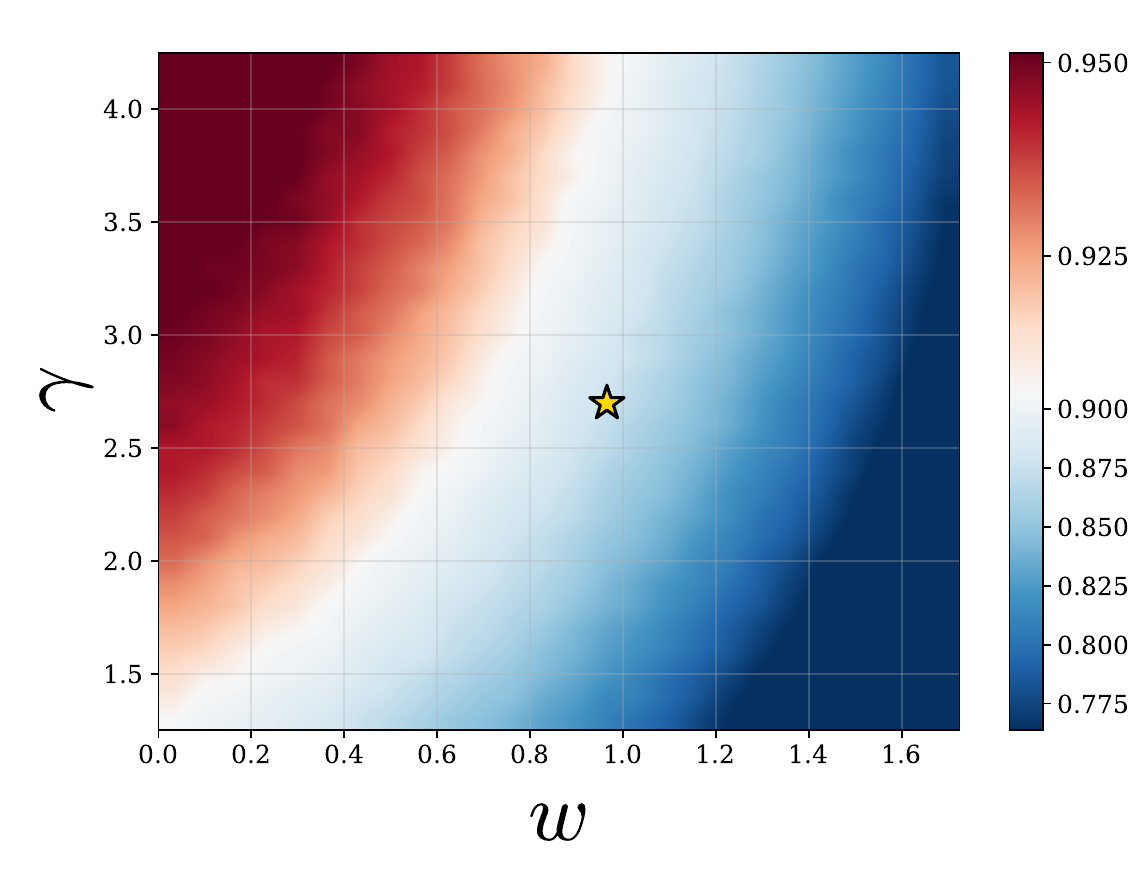}
\end{minipage}%
\hfill
% Third minipage with PDF image
\begin{minipage}[t]{0.24\linewidth}
    \centering
    \scriptsize Recall \\[0mm]
    \includegraphics[width=\linewidth, height=3cm, keepaspectratio=false]{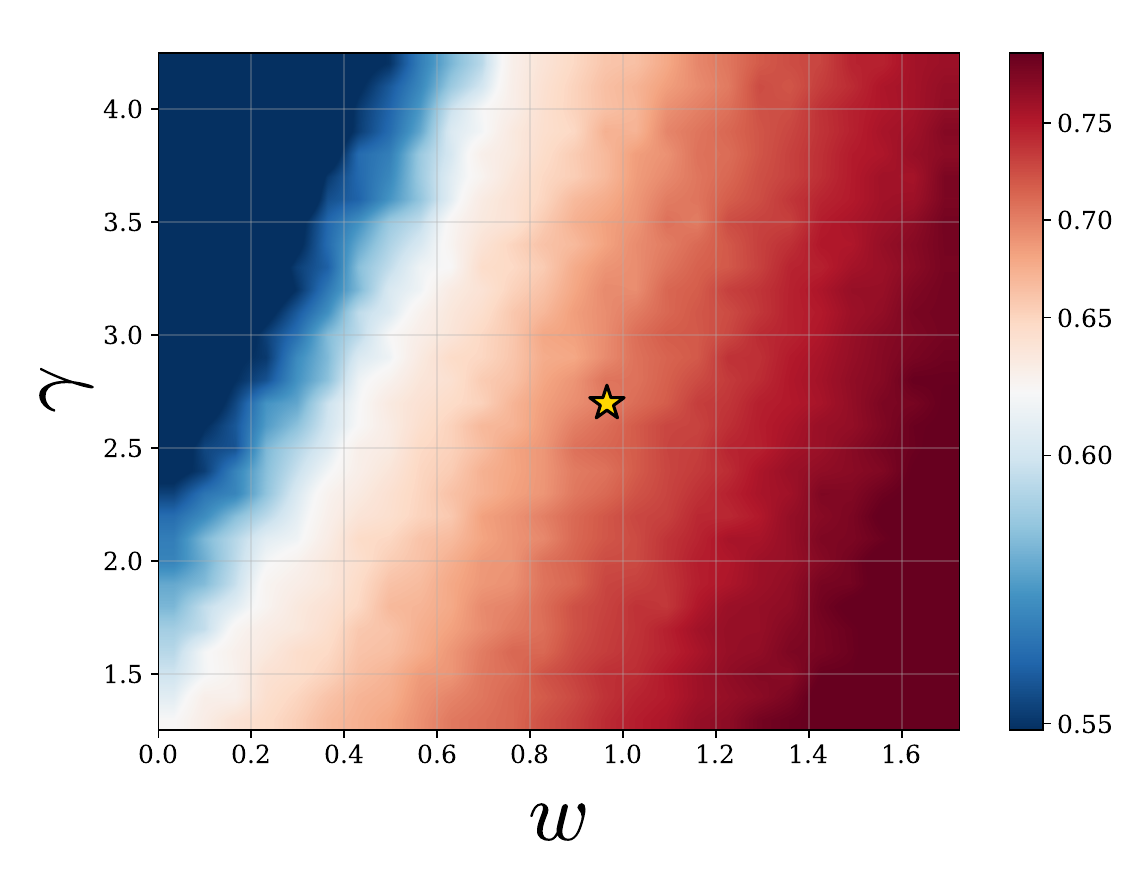}
\end{minipage}
\hfill
% forth minipage with PDF image
\begin{minipage}[t]{0.24\linewidth}
    \centering
    \scriptsize \hspace{0.2cm}Asymptotic Recall-Precision \\[0mm]
    \includegraphics[width=\linewidth, height=3cm, keepaspectratio=false]{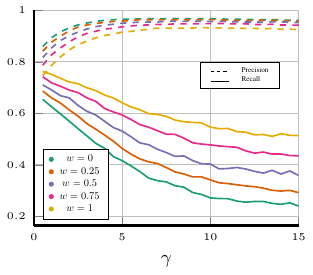}
\end{minipage}
\vspace{-3.5mm}
\caption{\small \textbf{
%Grid search over $w$ and $\gamma$ reveals optimal 
Optimal precision-recall trade-offs for VAR-d16 as a function of $w$ and $\gamma$.} 
Left: Heatmaps for FID, precision, and recall on ImageNet-256 ($|\mathcal{S}|{=}750$k, $\mathcal{B}{=}1.25$Mi) from a grid search over $w$ and $\gamma$.
The star marks the best FID $(w^*{\approx}1.0, \gamma^*{\approx}2.7)$ achieving FID 2.01, unreachable by either parameter alone. 
Right: Asymptotic precision-recall curves showing expanded behavioral range through joint tuning.}
\label{fig:heatmap_comparison}
\end{figure}

\q
\subsection{Autoregressive Models} \label{sec:ar_models}
\q
We evaluate \proposedMethod{}'s impact on xAR-B and xAR-L~\citep{ren2025xar} (ImageNet-256), VAR-d16~\citep{var} (ImageNet-256), and VAR-d30 (ImageNet-512). 
Both model families use CFG, with VAR adding top-$k$/top-$p$ sampling; these are mode-seeking samplers, and so our theory predicts \proposedMethod{} benefits.
At evaluation, we jointly optimize both the merge weight $w$ and CFG scale $\gamma$. 
Co-optimization is crucial to reaching the best FID: $w$ increases recall at precision's expense, while $\gamma$ does the opposite.

\textbf{Results.}~ 
Figure~\ref{fig:self-improve2} depicts the best FID after $(\gamma,w)$ grid search versus fine-tuning budget $\mathcal{B}$, testing up to $|\mathcal{S}|=750$k synthetic samples.
The xAR family FID improves monotonically: xAR-B from $1.72$ to \textbf{1.31} ($750$k synthetic samples, $0.41\%$ additional compute); xAR-L from $1.28$ to the state-of-the-art FID \textbf{1.02} ($750$k samples, $0.36\%$ additional compute), surpassing UCGM's $1.06$~\citep{zheng2025ucgm}. Even with just $1$k samples, the xAR models achieve near-optimal performance (xAR-L: $1.05$, xAR-B: $1.36$), indicating that the degradation direction stabilizes quickly and requires minimal synthetic data to identify. VAR-d16 improves from $3.30$ to \textbf{2.01} ($750$k samples, $0.64\%$ additional compute) but requires larger synthetic datasets—performance degrades with $|\mathcal{S}| < 90$k. VAR-d30 achieves its best FID of \textbf{1.69} with just $90$k samples; adding more synthetic data provides no further meaningful improvement, suggesting the model has reached its capacity for Neon-based enhancement at this checkpoint.

Figure~\ref{fig:heatmap_comparison} visualizes the $(w,\gamma)$ interaction for VAR-d16. The FID landscape's diagonal valley with optimum $(w^*{\approx}1.0, \gamma^*{\approx}2.7)$ yields FID \textbf{2.01}. Independent optimization ($\gamma{=}1.25$) yields FID $3.01$ --- far worse. 
Joint tuning enables precision-recall trade-offs unreachable by either parameter alone: at the optimum, precision drops to ${\sim}0.87$ while recall rises to ${\sim}0.63$. The rightmost panel reveals the asymptotic behavior: as $\gamma$ increases, the models converge to high precision ($>0.95$) but severely degraded recall ($<0.45$), leading to mode collapse. Higher $w$ values provide partial protection --- at $w=2$, the low-recall limit rises to ${\sim}0.55$ vs.\ ${\sim}0.40$ at $w=0$, demonstrating how negative extrapolation counteracts CFG's mode-seeking tendency even at extreme guidance scales.

\subsection{Few-Step Generators} \label{sec:few_steps}

We investigate \proposedMethod{} paired with Inductive Moment Matching (IMM)~\citep{zhou2025inductive} on ImageNet-256. We generated $\mathcal{S}$ using $T{=}8$ steps with CFG scale $\gamma{=}1.5$. At evaluation, we tested the models across inference steps $T{\in}\{1,2,4,8\}$ and jointly searched over $(w,\gamma)$.

\textbf{Results.}~ 
Figure~\ref{fig:self-improve3} plots the FID vs.\ the fine-tuning budget $\mathcal{B}$. 
\proposedMethod{} delivers dramatic improvements across all step counts with minimal overhead relative to IMM's 40,960Mi training budget. Performance scales inversely with the number of inference steps. 
Neon improves $T{=}1$ (single-step) inference to an FID of \textbf{6.67}. 
$T{=}2$ reaches \textbf{2.89}; $T{=}4$ reaches \textbf{1.69}; and $T{=}8$ reaches \textbf{1.46}. 
Remarkably, 4-step inference nearly matches base model with 8-step quality (1.69 vs.\ 1.98), effectively halving the inference cost. 
Unlike IMM's tens of thousands of million-image steps, \proposedMethod{} achieves optimal performance within $2$Mi in all experiments for different $|\mathcal{S}|$, demonstrating rapid degradation direction stabilization for few-step models. The 30k sample sweet spot across all $T$ suggests that few-step generators are particularly well-suited for \proposedMethod{}, as their training already distills multi-step dynamics into compact transitions, making the synthetic degradation signal especially informative.

\begin{figure}[t]
\centering
\begin{tikzpicture}
\pgfplotsset{/pgfplots/group/every plot/.append style = {very thick}};
\begin{groupplot}[group style = {group size = 4 by 1, horizontal sep = 5mm}, width = 0.32\linewidth]

  \nextgroupplot[
    title ={\scriptsize IMM | T = 1},
    ylabel={\scriptsize FID},
    xlabel={\scriptsize $\mathcal{B}$ (Mi)},
    axis x line*=bottom,
    axis y line*=left,
    xmin=0,xmax=6,
    grid, legend style = {at={(0.02,0.02)}, nodes={scale=0.30, transform shape}, column sep = 0pt, legend to name = legend0, text=black, cells={align=left},}]

    \addlegendimage{empty legend}
    \addlegendentry{\hspace{-1.2cm}$|\mathcal{S}|$}
    
    \addplot[colorA, thick] table [x index=0, y index=1, col sep=comma]{csv/imm/imm_ns1k.csv};
    
    \addplot[colorB, thick] table [x index=0, y index=1, col sep=comma]{csv/imm/imm_ns30k.csv};
    
    \addplot[colorC, thick] table [x index=0, y index=1, col sep=comma]{csv/imm/imm_ns150k.csv};
    
    \addplot[colorD, thick] table [x index=0, y index=1, col sep=comma]{csv/imm/imm_ns750k.csv};

    \addlegendentryexpanded{1k };

    \addlegendentryexpanded{30k };
    
    \addlegendentryexpanded{150k};

    \addlegendentryexpanded{750k};

  \nextgroupplot[
    title ={\scriptsize IMM | T = 2 },
    xlabel={\scriptsize $\mathcal{B}$ (Mi)},
    axis x line*=bottom,
    axis y line*=left,
    xmin=0,xmax=6,
    grid, legend style = {at={(0.02,0.02)}, nodes={scale=0.35, transform shape}, column sep = 0pt, legend to name = legend6, text=black, cells={align=left},}]
    \addplot[colorA, thick] table [x index=0, y index=2, col sep=comma]{csv/imm/imm_ns1k.csv};
    
    \addplot[colorB, thick] table [x index=0, y index=2, col sep=comma]{csv/imm/imm_ns30k.csv};
    
    \addplot[colorC, thick] table [x index=0, y index=2, col sep=comma]{csv/imm/imm_ns150k.csv};
    
    \addplot[colorD, thick] table [x index=0, y index=2, col sep=comma]{csv/imm/imm_ns750k.csv};

  \nextgroupplot[
    title ={\scriptsize IMM | T = 4 },
    xlabel={\scriptsize $\mathcal{B}$ (Mi)},
    axis x line*=bottom,
    axis y line*=left,
    xmin=0,xmax=6,
    grid, legend style = {at={(0.02,0.02)}, nodes={scale=0.35, transform shape}, column sep = 0pt, legend to name = legend1, text=black, cells={align=left},}]
    \addplot[colorA, thick] table [x index=0, y index=3, col sep=comma]{csv/imm/imm_ns1k.csv};
    
    \addplot[colorB, thick] table [x index=0, y index=3, col sep=comma]{csv/imm/imm_ns30k.csv};
    
    \addplot[colorC, thick] table [x index=0, y index=3, col sep=comma]{csv/imm/imm_ns150k.csv};
    
    \addplot[colorD, thick] table [x index=0, y index=3, col sep=comma]{csv/imm/imm_ns750k.csv};

  \nextgroupplot[
    title ={\scriptsize IMM | T=8},
    xlabel={\scriptsize $\mathcal{B}$ (Mi)},
    axis x line*=bottom,
    axis y line*=left,
    xmin=0, xmax=6,
    grid, legend style = {at={(0,0)}, nodes={scale=0.35, transform shape}, column sep = 0pt, legend to name = legend1, text=black, cells={align=left},}]

    \addplot[colorA, thick] table [x index=0, y index=4, col sep=comma]{csv/imm/imm_ns1k.csv};
    
    \addplot[colorB, thick] table [x index=0, y index=4, col sep=comma]{csv/imm/imm_ns30k.csv};
    
    \addplot[colorC, thick] table [x index=0, y index=4, col sep=comma]{csv/imm/imm_ns150k.csv};
    
    \addplot[colorD, thick] table [x index=0, y index=4, col sep=comma]{csv/imm/imm_ns750k.csv};

\end{groupplot}
\node at ($(group c1r1) + (25pt,-15pt)$) {\pgfplotslegendfromname{legend0}}; 
\end{tikzpicture}
\qqq
\caption{\small \textbf{Neon dramatically improves few-step inference for IMM on ImageNet-256.} Minimum FID (optimized over $w$ and $\gamma$) vs.\ fine-tuning budget $\mathcal{B}$ for different $|\mathcal{S}|$. Synthetic data were generated using $T{=}8$, $\gamma{=}1.5$. 
From left: $T{=}1,2,4,8$ inference steps. 
Neon achieves substantial FID reductions with near-zero additional compute ($<0.005$\% of IMM's training), with \proposedMethod{} improved model with 4-step nearly matching base model with 8-step generation quality.
}
\label{fig:self-improve3}
\end{figure}
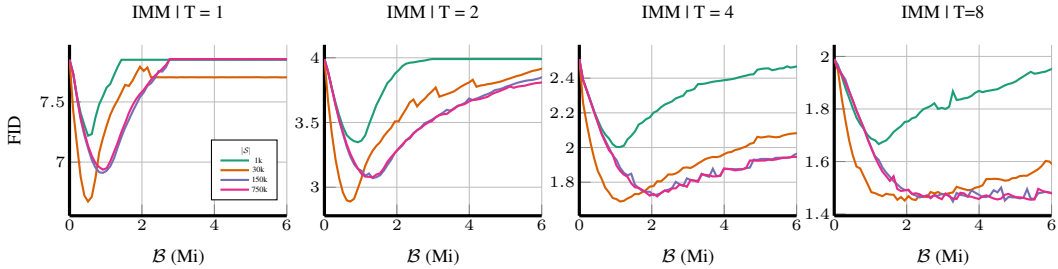

\subsection{Ablation studies} \label{ablations}

%\richb{labeling this `ablation' buries the key insight that you can use syn data from model 1 to fine tune model b. i would rename: `Transferability of Neon'}

\textbf{Neon is transferable across different architectures.}~
A key advantage of \proposedMethod{} is that the degradation signal is transferable across different model architectures. We confirm this empirically in Figure~\ref{fig:donors}, by improving a  baseline unconditional EDM-VP model (FID = 1.97) using synthetic data from different sources.  
While data from the model itself yields the strongest improvement (FID = 1.38), cross-architecture transfer is highly effective. \begin{wrapfigure}{r}{0.40\textwidth}
\vspace{-0.01cm}
\centering
\begin{tikzpicture}
\pgfplotsset{/pgfplots/group/every plot/.append style = {very thick}};
\begin{groupplot}[group style = {group size = 1 by 1}, width = \linewidth]
  \nextgroupplot[
    ylabel={\scriptsize FID},
    xlabel={\scriptsize $\mathcal{B}$ (Mi)},
    axis x line*=bottom,
    axis y line*=left,
    xmin=0,xmax=10,
    ymin=1.35,ymax=2.05,
    grid, legend style = {at={(0.02,0.02)}, anchor=south west, nodes={scale=0.40, transform shape}, column sep = 0pt, text=black, cells={align=left},}]
    
    \addplot[colorA, thick] table [x index=0, y index=1, col sep=comma]{csv/ablations/donor.csv};
    \addlegendentryexpanded{EDM (self)};
    
    \addplot[colorB, thick] table [x index=0, y index=3, col sep=comma]{csv/ablations/donor.csv};
    \addlegendentryexpanded{Flow};
    
    \addplot[colorC, thick] table [x index=0, y index=2, col sep=comma]{csv/ablations/donor.csv};
    \addlegendentryexpanded{IMM};
\end{groupplot}
\end{tikzpicture}
\vspace{-0.3cm}
\caption{\small {\bf Neon supports cross-architecture synthetic data transfer.} 
We illustrate by using synthetic data from an IMM and a Flow model to improve EDM-VP on CIFAR-10. }
\label{fig:donors}
\vspace{-0.5cm}
\end{wrapfigure}
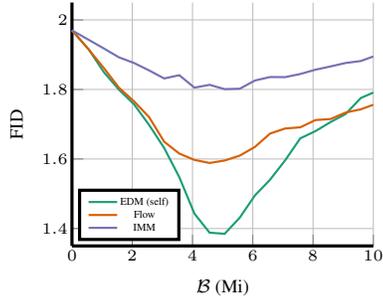
Data from a flow matching model achieves an FID of 1.59, and from an IMM model reaches 1.80. The theory expounded in Appendix~\ref{sec:neighbors} formalizes why Neon is transferable. 
Consider models 
$\mathsf{A}$ and $\mathsf{B}$ that minimize the same objective with Hessians $H_d^{(\mathsf{A})}$ and $H_d^{(\mathsf{B})}$. 
If these Hessians are spectrally close (equivalent norms up to constants $c, C$) and the architectures induce similar sampler biases (small mismatch $\zeta$ in the terms $b, \Delta$ defined in 
(\ref{eq:sampler-bias}), then anti-alignment transfers from one model to the other. 
That is, when model $(\mathsf{A})$ satisfies $s^{(\mathsf{A})} \leq -\mu < 0$, any nearby model $(\mathsf{B})$ inherits $s^{(\mathsf{B})} \leq -\mu/2 < 0$. 
Intuitively, models learning similar representations exhibit similar overconfidence patterns, and so one model's degradation direction corrects another's biases. 
This makes \proposedMethod{} practical when generating samples from the target model is costly.

\begin{wrapfigure}{r}{0.40\textwidth}

\centering
\begin{tikzpicture}
\pgfplotsset{/pgfplots/group/every plot/.append style = {very thick}};
\begin{groupplot}[group style = {group size = 1 by 1}, width = \linewidth]
  \nextgroupplot[
    ylabel={\scriptsize FID},
    xlabel={\scriptsize $|\mathcal{D}|$ (k)},
    axis x line*=bottom,
    axis y line*=left,
    xmin=10,xmax=50,
    ymin=1,ymax=15,
    ymode=log,
    grid, legend style = {at={(0.98,0.98)}, anchor=north east, nodes={scale=0.40, transform shape}, column sep = 0pt, text=black, cells={align=left},}]
    
    \addplot[colorA, thick] table [x index=0, y index=2, col sep=comma]{csv/ablations/nr.csv};
    \addlegendentryexpanded{EDM};
    
    \addplot[colorB, thick] table [x index=0, y index=1, col sep=comma]{csv/ablations/nr.csv};
    \addlegendentryexpanded{EDM + \proposedMethod{}};   
\end{groupplot}
\end{tikzpicture}
\vspace{-0.3cm}
\caption{\small {\bf \proposedMethod{} does not require a near-optimal base model to succeed.}}
\label{fig:quality}
\vspace{-0.5cm}
\end{wrapfigure}
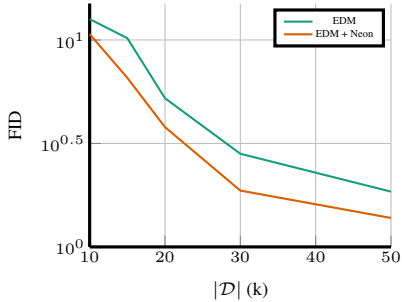

To test if any out-of-distribution dataset provides a useful signal, we replaced the synthetic data with CIFAR-10C \citep{hendrycks2019robustness}, a dataset of corrupted real images.
Neon resulted in no FID improvement. 
This null result confirms that \proposedMethod{} specifically leverages the anti-alignment from a model overemphasizing its own modes --- a bias absent in structured corruptions like CIFAR-10C.

\textbf{How good must the base model be?}~
A key question is whether \proposedMethod{}'s benefits are limited to nearly optimal models, since our theory guarantees anti-alignment only when the model error $\|\varepsilon\|_F$ is small. To test this condition's robustness, we applied \proposedMethod{} to a spectrum of EDM-VP base models trained on CIFAR-10 subsets of varying sizes. Figure~\ref{fig:quality} shows that  Neon offers substantial improvements across the entire quality spectrum. 
Strikingly, a model trained on only $30$k real samples (FID 1.87) and improved with \proposedMethod{} nearly matches the baseline model trained on the full $50$k dataset (FID 1.85). 
This demonstrates that {\bf\em Neon can compensate for a 40\% reduction in real training data}, confirming the anti-alignment condition ($s<0$) is not fragile but holds across a wide range of model qualities.
This bodes well for data-scarce applications.

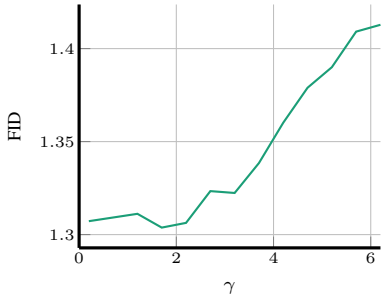
\begin{wrapfigure}{r}{0.40\textwidth}
\vspace{-0.01cm}
\centering
\begin{tikzpicture}
\pgfplotsset{/pgfplots/group/every plot/.append style = {very thick}};
\begin{groupplot}[group style = {group size = 1 by 1}, width = \linewidth]
  \nextgroupplot[
    ylabel={\scriptsize FID},
    xlabel={\scriptsize $\gamma$},
    axis x line*=bottom,
    axis y line*=left,
    xmin=0,xmax=6.2,
    %ymin=1,ymax=15,
    %ymode=log,
    grid, legend style = {at={(0.98,0.98)}, anchor=north east, nodes={scale=0.40, transform shape}, column sep = 0pt, text=black, cells={align=left},}]
    
    \addplot[colorA, thick] table [x index=0, y index=1, col sep=comma]{csv/ablations/cfg_xARb.csv};
    
\end{groupplot}
\end{tikzpicture}
\vspace{-0.3cm}
\caption{\small 
{\bf \proposedMethod{} does not require high-quality synthetic data to succeed.}
%Effect of synthetic data generation CFG on final Neon performance.
}
\label{fig:cfg_ablation}
\vspace{-0.5cm}
\end{wrapfigure}

\textbf{Sensitivity to synthetic data quality.}~
Our main experiments generated synthetic datasets using optimal inference settings for FID (e.g., $\gamma=2.7$ for xAR-B). 
To test the sensitivity to the quality of $\mathcal{S}$, we trained Neon+xAR-B on ImageNet-256 with $|\mathcal{S}|=90$k and varied the CFG scale used during generation.
We generated synthetic datasets with $\gamma \in [0, 6.2]$, fine-tuned on each $\mathcal{S}$, and then optimized the final Neon model. Figure~\ref{fig:cfg_ablation} demonstrates Neon's remarkable robustness: despite training on synthetic data of varying quality, the final FID remains near-optimal (1.30--1.31) for any $\gamma \in [1,3]$. Even suboptimal synthetic datasets yield performance within 3\% of optimal. This suggests that Neon captures the fundamental mode-seeking bias rather than requiring precisely tuned synthetic data. 
Only at extreme values (e.g., $\gamma \geq 6$) does performance degrade significantly, likely due to excessive mode collapse in $\mathcal{S}$.
\q
\section{Conclusions}
\q

We have introduced Neon, a simple and efficient post-processing method that improves generative models by inverting the degradation caused by self-training. Neon is grounded in a key insight: common mode-seeking inference samplers induce a predictable anti-alignment between gradients from synthetic and population data, explaining both the failure of naïve self-training and Neon's success. By extrapolating away from this degradation direction, Neon corrects the sampler's inherent bias, redistributing probability mass from over-represented modes to under-represented ones, thereby enhancing recall and overall generation fidelity. Neon's effectiveness across diverse model architectures and training datasets suggests that we can reframe model degradation not as a failure, but as a structured, harnessable signal for improvement in an increasingly data-scarce field. Our work also positions inference samplers as valuable diagnostic tools for uncovering and remedying a model's distributional flaws.

Neon opens several promising avenues for future work. First, can the degradation direction be estimated reliably without any self-training? Second, can we actively synthesize “optimal bad” datasets that elicit a stronger, more stable corrective signal? Third, in diversity-seeking regimes where self-training potentially aligns positively with the population gradient (assuming small $\eta_1$), the forward step should help; identifying diversity-promoting samplers that induce positive alignment would enable direct self-improvement without inversion. In the meantime, a bi-directional update that blends the forward diversity-seeking direction with the reversed mode-seeking degradation direction is a practical hybrid to explore.

As the demand for more capable generative models outpaces the availability of high-quality training data, progress will depend on new methods that extract more value from models and their training data.
Neon demonstrates that even seemingly harmful procedures, when properly understood and corrected, can guide us toward better models, showing that sometimes, the path forward requires a deliberate step backward.

\section*{Acknowledgments}
This work was supported in part by NSF Awards 2145346 (CAREER), 02133861 (DMS), 2113904 (CCSS), and the NSF AI Institute for Foundations of Machine Learning (IFML); ONR N00014-23-1-2714; ONR MURI N00014-20-1-2787; DOE DE-SC0020345; and DOI 140D0423C0076.
Thanks to Predrag Neskovic for pushing us down the path towards understanding negative extrapolation and to Ahmed Imtiaz Humayun for early discussions and for suggesting Algorithm 1 for model self-improvement with synthetic data.

\newpage

\bibliography{iclr2026_conference}

\begin{thebibliography}{70}
\providecommand{\natexlab}[1]{#1}
\providecommand{\url}[1]{\texttt{#1}}
\expandafter\ifx\csname urlstyle\endcsname\relax
  \providecommand{\doi}[1]{doi: #1}\else
  \providecommand{\doi}{doi: \begingroup \urlstyle{rm}\Url}\fi

\bibitem[Alemohammad et~al.(2024{\natexlab{a}})Alemohammad, Casco-Rodriguez, Luzi, Humayun, Babaei, LeJeune, Siahkoohi, and Baraniuk]{alemohammad2024selfconsuming}
Sina Alemohammad, Josue Casco-Rodriguez, Lorenzo Luzi, Ahmed~Imtiaz Humayun, Hossein Babaei, Daniel LeJeune, Ali Siahkoohi, and Richard Baraniuk.
\newblock Self-consuming generative models go {MAD}.
\newblock In \emph{International Conference on Learning Representations}, 2024{\natexlab{a}}.
\newblock URL \url{https://openreview.net/forum?id=ShjMHfmPs0}.

\bibitem[Alemohammad et~al.(2024{\natexlab{b}})Alemohammad, Humayun, Agarwal, Collomosse, and Baraniuk]{alemohammad2024self}
Sina Alemohammad, Ahmed~Imtiaz Humayun, Shruti Agarwal, John Collomosse, and Richard Baraniuk.
\newblock Self-improving diffusion models with synthetic data.
\newblock \emph{arXiv preprint arXiv:2408.16333}, 2024{\natexlab{b}}.

\bibitem[Bao et~al.(2023)]{bao2023uvit}
Fan Bao et~al.
\newblock All are worth words: A vit backbone for diffusion models.
\newblock \emph{arXiv preprint arXiv:2209.12152}, 2023.

\bibitem[Bertrand et~al.(2023)Bertrand, Bose, Duplessis, Jiralerspong, and Gidel]{bertrand2023stability}
Quentin Bertrand, Avishek~Joey Bose, Alexandre Duplessis, Marco Jiralerspong, and Gauthier Gidel.
\newblock On the stability of iterative retraining of generative models on their own data.
\newblock \emph{arXiv preprint arXiv:2310.00429}, 2023.

\bibitem[Brent(1973)]{brent1973algorithms}
Richard~P. Brent.
\newblock \emph{Algorithms for Minimization without Derivatives}.
\newblock Prentice-Hall, 1973.

\bibitem[Brock et~al.(2019)Brock, Donahue, and Simonyan]{brock2018biggan}
Andrew Brock, Jeff Donahue, and Karen Simonyan.
\newblock Large scale gan training for high fidelity natural image synthesis.
\newblock In \emph{International Conference on Learning Representations}, 2019.

\bibitem[Chang et~al.(2022)]{chang2022maskgit}
Huiwen Chang et~al.
\newblock Maskgit: Masked generative image transformer.
\newblock In \emph{Proceedings of the IEEE/CVF Conference on Computer Vision and Pattern Recognition}, 2022.

\bibitem[Deng et~al.(2009)Deng, Dong, Socher, Li, Li, and Fei-Fei]{deng2009imagenet}
Jia Deng, Wei Dong, Richard Socher, Li-Jia Li, Kai Li, and Li~Fei-Fei.
\newblock Imagenet: A large-scale hierarchical image database.
\newblock In \emph{IEEE Conference on Computer Vision and Pattern Recognition}, pp.\  248--255, 2009.
\newblock \doi{10.1109/CVPR.2009.5206848}.

\bibitem[Dhariwal \& Nichol(2021)Dhariwal and Nichol]{dhariwal2021diffusion}
Prafulla Dhariwal and Alexander~Quinn Nichol.
\newblock Diffusion models beat {GAN}s on image synthesis.
\newblock In \emph{Advances in Neural Information Processing Systems}, 2021.
\newblock URL \url{https://openreview.net/forum?id=AAWuCvzaVt}.

\bibitem[Ding et~al.(2021)Ding, Yang, Hong, Zheng, Zhou, Yin, Lin, Zou, Shao, Yang, et~al.]{ding2021cogview}
Ming Ding, Zhuoyi Yang, Wenyi Hong, Wendi Zheng, Chang Zhou, Da~Yin, Junyang Lin, Xu~Zou, Zhou Shao, Hongxia Yang, et~al.
\newblock Cogview: Mastering text-to-image generation via transformers.
\newblock \emph{Advances in Neural Information Processing Systems}, 34:\penalty0 19822--19835, 2021.

\bibitem[Dohmatob et~al.(2024)Dohmatob, Feng, Yang, Charton, and Kempe]{dohmatob2024a}
Elvis Dohmatob, Yunzhen Feng, Pu~Yang, Francois Charton, and Julia Kempe.
\newblock A tale of tails: Model collapse as a change of scaling laws.
\newblock In \emph{International Conference on Machine Learning}, 2024.
\newblock URL \url{https://openreview.net/forum?id=KVvku47shW}.

\bibitem[Feng et~al.(2024)Feng, Dohmatob, Yang, Charton, and Kempe]{feng2024beyond}
Yunzhen Feng, Elvis Dohmatob, Pu~Yang, Francois Charton, and Julia Kempe.
\newblock Beyond model collapse: Scaling up with synthesized data requires reinforcement.
\newblock \emph{arXiv preprint arXiv:2406.07515}, 2024.

\bibitem[Frans et~al.(2024)]{frans2024shortcut}
Kevin Frans et~al.
\newblock One-step image synthesis via iterative refinement.
\newblock \emph{arXiv preprint}, 2024.

\bibitem[Gao et~al.(2023)Gao, Zhou, Cheng, and Yan]{gao2023mdt}
Shanghua Gao, Pan Zhou, Ming-Ming Cheng, and Shuicheng Yan.
\newblock Masked diffusion transformer is a strong image synthesizer.
\newblock \emph{arXiv preprint arXiv:2303.14389}, 2023.

\bibitem[Gerstgrasser et~al.(2024)Gerstgrasser, Schaeffer, Dey, Rafailov, Sleight, Hughes, Korbak, Agrawal, Pai, Gromov, et~al.]{gerstgrasser2024model}
Matthias Gerstgrasser, Rylan Schaeffer, Apratim Dey, Rafael Rafailov, Henry Sleight, John Hughes, Tomasz Korbak, Rajashree Agrawal, Dhruv Pai, Andrey Gromov, et~al.
\newblock Is model collapse inevitable? {B}reaking the curse of recursion by accumulating real and synthetic data.
\newblock \emph{arXiv preprint arXiv:2404.01413}, 2024.

\bibitem[Hendrycks \& Dietterich(2019)Hendrycks and Dietterich]{hendrycks2019robustness}
Dan Hendrycks and Thomas Dietterich.
\newblock Benchmarking neural network robustness to common corruptions and perturbations.
\newblock \emph{Proceedings of the International Conference on Learning Representations}, 2019.

\bibitem[Henighan et~al.(2020)Henighan, Kaplan, Katz, Chen, Hesse, Jackson, Jun, Brown, Dhariwal, Gray, et~al.]{henighan2020scaling}
Tom Henighan, Jared Kaplan, Mor Katz, Mark Chen, Christopher Hesse, Jacob Jackson, Heewoo Jun, Tom~B Brown, Prafulla Dhariwal, Scott Gray, et~al.
\newblock Scaling laws for autoregressive generative modeling.
\newblock \emph{arXiv preprint arXiv:2010.14701}, 2020.

\bibitem[Heusel et~al.(2017)Heusel, Ramsauer, Unterthiner, Nessler, and Hochreiter]{heusel2017gans}
Martin Heusel, Hubert Ramsauer, Thomas Unterthiner, Bernhard Nessler, and Sepp Hochreiter.
\newblock Gans trained by a two time-scale update rule converge to a local nash equilibrium.
\newblock In \emph{Advances in Neural Information Processing Systems}, pp.\  6626--6637, 2017.

\bibitem[Ho \& Salimans(2022)Ho and Salimans]{ho2022classifier}
Jonathan Ho and Tim Salimans.
\newblock Classifier-free diffusion guidance.
\newblock \emph{arXiv preprint arXiv:2207.12598}, 2022.

\bibitem[Ho et~al.(2020)Ho, Jain, and Abbeel]{ho2020denoising}
Jonathan Ho, Ajay Jain, and Pieter Abbeel.
\newblock Denoising diffusion probabilistic models.
\newblock \emph{arXiv preprint arXiv:2006.11239}, 2020.

\bibitem[Jabri et~al.(2023)Jabri, Fleet, and Chen]{jabri2023rin}
Allan Jabri, David~J. Fleet, and Ting Chen.
\newblock Scalable adaptive computation for iterative generation.
\newblock In \emph{International Conference on Machine Learning}, volume 202 of \emph{Proceedings of Machine Learning Research}, pp.\  5941--5963, 2023.

\bibitem[Kang et~al.(2023)]{kang2023gigagan}
Minguk Kang et~al.
\newblock Scaling up gans for text-to-image synthesis.
\newblock \emph{arXiv preprint arXiv:2303.05511}, 2023.

\bibitem[Kaplan et~al.(2020)Kaplan, McCandlish, Henighan, Brown, Chess, Child, Gray, Radford, Wu, and Amodei]{kaplan2020scaling}
Jared Kaplan, Sam McCandlish, Tom Henighan, Tom~B Brown, Benjamin Chess, Rewon Child, Scott Gray, Alec Radford, Jeffrey Wu, and Dario Amodei.
\newblock Scaling laws for neural language models.
\newblock \emph{arXiv preprint arXiv:2001.08361}, 2020.

\bibitem[Karnewar \& Wang(2019)Karnewar and Wang]{karnewar2019msg}
Animesh Karnewar and Oliver Wang.
\newblock Msg-gan: Multi-scale gradients for generative adversarial networks.
\newblock \emph{arXiv preprint arXiv:1903.06048}, 2019.

\bibitem[Karras et~al.(2019)Karras, Laine, and Aila]{karras2019style}
Tero Karras, Samuli Laine, and Timo Aila.
\newblock A style-based generator architecture for generative adversarial networks.
\newblock In \emph{Proceedings of the IEEE/CVF Conference on Computer Vision and Pattern Recognition}, pp.\  4401--4410, 2019.

\bibitem[Karras et~al.(2020)Karras, Aittala, Hellsten, Laine, Lehtinen, and Aila]{karras2020training}
Tero Karras, Miika Aittala, Janne Hellsten, Samuli Laine, Jaakko Lehtinen, and Timo Aila.
\newblock Training generative adversarial networks with limited data.
\newblock \emph{Advances in Neural Information Processing Systems}, 33:\penalty0 12104--12114, 2020.

\bibitem[Karras et~al.(2022)Karras, Aittala, Aila, and Laine]{karasedm}
Tero Karras, Miika Aittala, Timo Aila, and Samuli Laine.
\newblock Elucidating the design space of diffusion-based generative models.
\newblock In S.~Koyejo, S.~Mohamed, A.~Agarwal, D.~Belgrave, K.~Cho, and A.~Oh (eds.), \emph{Advances in Neural Information Processing Systems}, volume~35, pp.\  26565--26577. Curran Associates, Inc., 2022.

\bibitem[Karras et~al.(2024{\natexlab{a}})Karras, Aittala, Kynk{\"a}{\"a}nniemi, Lehtinen, Aila, and Laine]{karras2024guiding}
Tero Karras, Miika Aittala, Tuomas Kynk{\"a}{\"a}nniemi, Jaakko Lehtinen, Timo Aila, and Samuli Laine.
\newblock Guiding a diffusion model with a bad version of itself.
\newblock \emph{arXiv preprint arXiv:2406.02507}, 2024{\natexlab{a}}.

\bibitem[Karras et~al.(2024{\natexlab{b}})Karras, Aittala, Lehtinen, Hellsten, Aila, and Laine]{karras2024analyzing}
Tero Karras, Miika Aittala, Jaakko Lehtinen, Janne Hellsten, Timo Aila, and Samuli Laine.
\newblock Analyzing and improving the training dynamics of diffusion models.
\newblock In \emph{Proceedings of the IEEE/CVF Conference on Computer Vision and Pattern Recognition}, pp.\  24174--24184, 2024{\natexlab{b}}.

\bibitem[Kim et~al.(2023{\natexlab{a}})Kim, Kim, Kwon, Kang, and Moon]{kim2023discriminator}
Dongjun Kim, Yeongmin Kim, Se~Jung Kwon, Wanmo Kang, and Il-Chul Moon.
\newblock Refining generative process with discriminator guidance in score-based diffusion models.
\newblock In \emph{International Conference on Machine Learning}, volume 202 of \emph{Proceedings of Machine Learning Research}, pp.\  16567--16598. PMLR, 2023{\natexlab{a}}.
\newblock URL \url{https://proceedings.mlr.press/v202/kim23i.html}.

\bibitem[Kim et~al.(2023{\natexlab{b}})]{kim2023consistency}
Dongjun Kim et~al.
\newblock Consistency trajectory models: Learning probability flow ode trajectory of diffusion.
\newblock \emph{arXiv preprint arXiv:2310.02279}, 2023{\natexlab{b}}.

\bibitem[Kingma \& Gao(2024)Kingma and Gao]{kingma2024vdm}
Diederik~P Kingma and Ruiqi Gao.
\newblock Understanding diffusion objectives as weighted elbo.
\newblock \emph{arXiv preprint arXiv:2303.18103}, 2024.

\bibitem[Krizhevsky \& Hinton(2009)Krizhevsky and Hinton]{krizhevsky2009learning}
Alex Krizhevsky and Geoffrey Hinton.
\newblock Learning multiple layers of features from tiny images.
\newblock Technical report, University of Toronto, Toronto, Ontario, 2009.

\bibitem[Kynk{\"a}{\"a}nniemi et~al.(2019)Kynk{\"a}{\"a}nniemi, Karras, Laine, Lehtinen, and Aila]{kynkaanniemi2019improved}
Tuomas Kynk{\"a}{\"a}nniemi, Tero Karras, Samuli Laine, Jaakko Lehtinen, and Timo Aila.
\newblock Improved precision and recall metric for assessing generative models.
\newblock In \emph{Advances in Neural Information Processing Systems}, pp.\  3927--3936, 2019.

\bibitem[Kynk{\"a}{\"a}nniemi et~al.(2024)Kynk{\"a}{\"a}nniemi, Aittala, Karras, Laine, Aila, and Lehtinen]{kynkaanniemi2024applying}
Tuomas Kynk{\"a}{\"a}nniemi, Miika Aittala, Tero Karras, Samuli Laine, Timo Aila, and Jaakko Lehtinen.
\newblock Applying guidance in a limited interval improves sample and distribution quality in diffusion models.
\newblock \emph{arXiv preprint arXiv:2404.07724}, 2024.

\bibitem[Lee et~al.(2021)Lee, Ko, Lee, and Cho]{lee2021anycost}
Seungkwan Lee, Kwanghee Ko, Hyunseung Lee, and Hyunjun Cho.
\newblock Anycost gans for interactive image synthesis and editing.
\newblock \emph{Proceedings of the IEEE/CVF Conference on Computer Vision and Pattern Recognition}, pp.\  14986--14996, 2021.

\bibitem[Li et~al.(2024)]{li2024mar}
Tianhong Li et~al.
\newblock Autoregressive image generation without vector quantization.
\newblock \emph{arXiv preprint arXiv:2406.11838}, 2024.

\bibitem[Lipman et~al.(2023)Lipman, Chen, Ben-Hamu, Nickel, and Le]{lipman2023flow}
Yaron Lipman, Ricky T.~Q. Chen, Heli Ben-Hamu, Maximilian Nickel, and Matthew Le.
\newblock Flow matching for generative modeling.
\newblock In \emph{International Conference on Learning Representations}, 2023.
\newblock URL \url{https://openreview.net/forum?id=PqvMRDCJT9t}.

\bibitem[Liu et~al.(2023)Liu, Gong, and Liu]{liu2023flow}
Xingchao Liu, Chengyue Gong, and Qiang Liu.
\newblock Flow straight and fast: Learning to generate and transfer data with rectified flow.
\newblock \emph{arXiv preprint arXiv:2209.03003}, 2023.

\bibitem[Lu et~al.(2022)Lu, Zhou, Bao, Chen, Li, and Zhu]{lu2022dpm}
Cheng Lu, Yuhao Zhou, Fan Bao, Jianfei Chen, Chongxuan Li, and Jun Zhu.
\newblock Dpm-solver: A fast ode solver for diffusion probabilistic model sampling in around 10 steps.
\newblock \emph{arXiv preprint arXiv:2206.00927}, 2022.

\bibitem[Ma et~al.(2024)]{ma2024sit}
Nanye Ma et~al.
\newblock Sit: Exploring flow and diffusion transformers.
\newblock \emph{arXiv preprint arXiv:2401.08740}, 2024.

\bibitem[Muennighoff et~al.(2023)Muennighoff, Rush, Barak, Le~Scao, Piktus, Tazi, Pyysalo, Wolf, and Raffel]{muennighoff2023scaling}
Niklas Muennighoff, Alexander Rush, Boaz Barak, Teven Le~Scao, Aleksandra Piktus, Nouamane Tazi, Sampo Pyysalo, Thomas Wolf, and Colin~A Raffel.
\newblock Scaling data-constrained language models.
\newblock \emph{Advances in Neural Information Processing Systems}, 36, 2023.

\bibitem[Nichol \& Dhariwal(2021)Nichol and Dhariwal]{nichol2021improved}
Alexander~Quinn Nichol and Prafulla Dhariwal.
\newblock Improved denoising diffusion probabilistic models.
\newblock \emph{arXiv preprint arXiv:2102.09672}, 2021.

\bibitem[Park et~al.(2024)]{park2024constant}
Dogyun Park et~al.
\newblock Caf: Constant acceleration flow matching.
\newblock \emph{arXiv preprint}, 2024.

\bibitem[Peebles \& Xie(2023)Peebles and Xie]{peebles2023dit}
William Peebles and Saining Xie.
\newblock Scalable diffusion models with transformers.
\newblock In \emph{Proceedings of the IEEE/CVF International Conference on Computer Vision}, pp.\  4195--4205, 2023.

\bibitem[Ren et~al.(2025)Ren, Yu, He, Shen, Yuille, and Chen]{ren2025xar}
Sucheng Ren, Qihang Yu, Ju~He, Xiaohui Shen, Alan Yuille, and Liang-Chieh Chen.
\newblock Beyond next-token: Next-x prediction for autoregressive visual generation.
\newblock \emph{arXiv preprint arXiv:2502.20388}, 2025.

\bibitem[Rombach et~al.(2022)Rombach, Blattmann, Lorenz, Esser, and Ommer]{rombach2022high}
Robin Rombach, Andreas Blattmann, Dominik Lorenz, Patrick Esser, and Bj{\"o}rn Ommer.
\newblock High-resolution image synthesis with latent diffusion models.
\newblock In \emph{Proceedings of the IEEE/CVF Conference on Computer Vision and Pattern Recognition}, pp.\  10684--10695, 2022.

\bibitem[Sauer et~al.(2022)Sauer, Schwarz, and Geiger]{sauer2022stylegan}
Axel Sauer, Katja Schwarz, and Andreas Geiger.
\newblock {StyleGAN-XL}: Scaling {StyleGAN} to large diverse datasets.
\newblock In \emph{ACM SIGGRAPH 2022 Conference Proceedings}, SIGGRAPH '22, pp.\  1--10, New York, NY, USA, 2022.

\bibitem[Shumailov et~al.(2024)Shumailov, Shumaylov, Zhao, Papernot, Anderson, and Gal]{shumailov2023curse}
Ilia Shumailov, Zakhar Shumaylov, Yiren Zhao, Nicolas Papernot, Ross Anderson, and Yarin Gal.
\newblock {AI} models collapse when trained on recursively generated data.
\newblock \emph{Nature}, 631\penalty0 (8022):\penalty0 755--759, 2024.

\bibitem[Song \& Ermon(2020)Song and Ermon]{song2020improved}
Yang Song and Stefano Ermon.
\newblock Improved techniques for training score-based generative models.
\newblock \emph{Advances in Neural Information Processing Systems}, 33:\penalty0 12438--12448, 2020.

\bibitem[Song et~al.(2021)Song, Sohl-Dickstein, Kingma, Kumar, Ermon, and Poole]{song2021scorebasedSDE}
Yang Song, Jascha Sohl-Dickstein, Diederik~P Kingma, Abhishek Kumar, Stefano Ermon, and Ben Poole.
\newblock Score-based generative modeling through stochastic differential equations.
\newblock \emph{arXiv preprint arXiv:2011.13456}, 2021.

\bibitem[Song et~al.(2023)Song, Dhariwal, Chen, and Sutskever]{song2023consistency}
Yang Song, Prafulla Dhariwal, Mark Chen, and Ilya Sutskever.
\newblock Consistency models.
\newblock In \emph{International Conference on Machine Learning}, pp.\  32211--32252. PMLR, 2023.

\bibitem[Sun et~al.(2025)Sun, Jiang, and Lin]{zheng2025ucgm}
Peng Sun, Yi~Jiang, and Tao Lin.
\newblock Unified continuous generative models.
\newblock \emph{arXiv preprint arXiv:2505.07447}, 2025.
\newblock URL \url{https://arxiv.org/abs/2505.07447}.

\bibitem[Takida et~al.(2024)Takida, Imaizumi, Shibuya, Lai, Uesaka, Murata, and Mitsufuji]{takida2024san}
Yuhta Takida, Masaaki Imaizumi, Takashi Shibuya, Chieh-Hsin Lai, Toshimitsu Uesaka, Naoki Murata, and Yuki Mitsufuji.
\newblock {SAN}: Inducing metrizability of {GAN} with discriminative normalized linear layer.
\newblock In \emph{The Twelfth International Conference on Learning Representations}, 2024.
\newblock URL \url{https://openreview.net/forum?id=eiF7TU1E8E}.

\bibitem[Tang et~al.(2024)Tang, Sun, Cheng, and Lin]{tang2024gmem}
Yi~Tang, Peng Sun, Zhenglin Cheng, and Tao Lin.
\newblock Gmem: A modular approach for ultra-efficient generative models.
\newblock \emph{arXiv preprint arXiv:2412.08781}, 2024.

\bibitem[Tian et~al.(2024)Tian, Jiang, Yuan, Peng, and Wang]{var}
Keyu Tian, Yi~Jiang, Zehuan Yuan, Bingyue Peng, and Liwei Wang.
\newblock Visual autoregressive modeling: Scalable image generation via next-scale prediction.
\newblock In \emph{Advances in Neural Information Processing Systems}, 2024.
\newblock URL \url{https://openreview.net/forum?id=gojL67CfS8}.

\bibitem[Tong et~al.(2023)Tong, Malkin, Fatras, Atanackovic, Zhang, Huguet, Wolf, and Bengio]{flowtong2}
Alexander Tong, Nikolay Malkin, Kilian Fatras, Lazar Atanackovic, Yanlei Zhang, Guillaume Huguet, Guy Wolf, and Yoshua Bengio.
\newblock Simulation-free schr{\"o}dinger bridges via score and flow matching.
\newblock \emph{arXiv preprint arXiv:2307.03672}, 2023.

\bibitem[Tong et~al.(2024)Tong, Fatras, Malkin, Huguet, Zhang, Rector-Brooks, Wolf, and Bengio]{flowtong1}
Alexander Tong, Kilian Fatras, Nikolay Malkin, Guillaume Huguet, Yanlei Zhang, Jarrid Rector-Brooks, Guy Wolf, and Yoshua Bengio.
\newblock Improving and generalizing flow-based generative models with minibatch optimal transport.
\newblock \emph{Transactions on Machine Learning Research}, 2024.
\newblock URL \url{https://openreview.net/forum?id=CD9Snc73AW}.

\bibitem[Vahdat et~al.(2021)Vahdat, Kreis, and Kautz]{vahdat2021score}
Arash Vahdat, Karsten Kreis, and Jan Kautz.
\newblock Score-based generative modeling in latent space.
\newblock In \emph{Advances in Neural Information Processing Systems}, 2021.
\newblock URL \url{https://openreview.net/forum?id=P9TYG0j-wtG}.

\bibitem[Villalobos et~al.(2022)Villalobos, Sevilla, Heim, Besiroglu, Hobbhahn, and Ho]{will_we_run_out}
Pablo Villalobos, Jaime Sevilla, Lennart Heim, Tamay Besiroglu, Marius Hobbhahn, and Anson Ho.
\newblock Will we run out of data? an analysis of the limits of scaling datasets in machine learning.
\newblock \emph{arXiv preprint arXiv:2211.04325}, 2022.

\bibitem[Wang et~al.(2025)Wang, Gu, Zheng, Zhou, and Huang]{r3gan2025}
Zhendong Wang, Yi~Gu, Huangjie Zheng, Mingyuan Zhou, and Hai Huang.
\newblock R3gan: Robust regularized recurrent gans for high-fidelity image generation.
\newblock \emph{arXiv preprint arXiv:2501.09876}, 2025.

\bibitem[Weber et~al.(2024)Weber, Yu, Yu, Deng, Shen, Cremers, and Chen]{weber2024maskbit}
Mark Weber, Lijun Yu, Qihang Yu, Xiang Deng, Xiaohui Shen, Daniel Cremers, and Liang-Chieh Chen.
\newblock Maskbit: Embedding-free image generation via bit tokens.
\newblock \emph{arXiv preprint arXiv:2409.16211}, 2024.

\bibitem[Yu et~al.(2022)Yu, Xu, Koh, Luong, Baid, Wang, Vasudevan, Ku, Yang, Ayan, et~al.]{yu2022scaling}
Jiahui Yu, Yuanzhong Xu, Jing~Yu Koh, Thang Luong, Gunjan Baid, Zirui Wang, Vijay Vasudevan, Alexander Ku, Yinfei Yang, Burcu~Karagol Ayan, et~al.
\newblock Scaling autoregressive models for content-rich text-to-image generation.
\newblock \emph{arXiv preprint arXiv:2206.10789}, 2022.

\bibitem[Yu et~al.(2021)]{yu2021vitvqgan}
Jiahui Yu et~al.
\newblock Vector-quantized image modeling with improved vqgan.
\newblock \emph{arXiv preprint arXiv:2110.04627}, 2021.

\bibitem[Yu et~al.(2024)]{yu2024magvit}
Lijun Yu et~al.
\newblock Magvit-v2: Language model beats diffusion--tokenizer is key to visual generation.
\newblock \emph{arXiv preprint arXiv:2310.05737}, 2024.

\bibitem[Yuan et~al.(2024)Yuan, Chen, Ji, and Gu]{yuan2024selfplay}
Huizhuo Yuan, Zixiang Chen, Kaixuan Ji, and Quanquan Gu.
\newblock Self-play fine-tuning of diffusion models for text-to-image generation.
\newblock In \emph{Advances in Neural Information Processing Systems}, volume~37, 2024.

\bibitem[Zheng \& Yang(2025)Zheng and Yang]{zheng2025revisiting}
Bowen Zheng and Tianming Yang.
\newblock Revisiting diffusion models: From generative pre-training to one-step generation.
\newblock \emph{arXiv preprint arXiv:2506.09376}, 2025.

\bibitem[Zheng et~al.(2025)Zheng, Chen, Chen, He, Liu, Zhu, and Zhang]{zheng2025ddo}
Kaiwen Zheng, Yongxin Chen, Huayu Chen, Guande He, Ming-Yu Liu, Jun Zhu, and Qinsheng Zhang.
\newblock Direct discriminative optimization: Your likelihood-based visual generative model is secretly a {GAN} discriminator.
\newblock In \emph{International Conference on Machine Learning}, 2025.
\newblock Spotlight.

\bibitem[Zhou et~al.(2025{\natexlab{a}})Zhou, Ermon, and Song]{zhou2025inductive}
Linqi Zhou, Stefano Ermon, and Jiaming Song.
\newblock Inductive moment matching.
\newblock In \emph{International Conference on Machine Learning}, 2025{\natexlab{a}}.
\newblock URL \url{https://openreview.net/forum?id=pwNSUo7yUb}.

\bibitem[Zhou et~al.(2025{\natexlab{b}})Zhou, Zheng, Gu, Wang, and Huang]{zhou2024sid2a}
Mingyuan Zhou, Huangjie Zheng, Yi~Gu, Zhendong Wang, and Hai Huang.
\newblock Adversarial score identity distillation: Rapidly surpassing the teacher in one step.
\newblock In \emph{International Conference on Learning Representations}, 2025{\natexlab{b}}.
\newblock URL \url{https://openreview.net/forum?id=lS2SGfWizd}.

\end{thebibliography}
\bibliographystyle{iclr2026_conference}

\newpage

\appendix

\makeatletter
\@addtoreset{theorem}{section}
\makeatother
\setcounter{theorem}{0}
\renewcommand{\thetheorem}{\Alph{section}.\arabic{theorem}}
\renewcommand{\theproposition}{\thetheorem}
\renewcommand{\thelemma}{\thetheorem}
\renewcommand{\thecorollary}{\thetheorem}
\renewcommand{\thedefinition}{\thetheorem}
\renewcommand{\theremark}{\thetheorem}

\makeatletter
\@addtoreset{equation}{section}
\@addtoreset{figure}{section}
\@addtoreset{table}{section}
\makeatother

\renewcommand{\theequation}{\thesection.\arabic{equation}}
\renewcommand{\thefigure}{\thesection.\arabic{figure}}
\renewcommand{\thetable}{\thesection.\arabic{table}}

\section{State of the art comparison} \label{sota}

\begin{table*}[h]
\centering
\scriptsize
\setlength{\tabcolsep}{2pt}
\renewcommand{\arraystretch}{0.95}
\caption{Comprehensive comparison of generative models across four standard benchmarks. Best results are highlighted in \colorbox{sotablue}{blue}.}
\label{tab:main_results}
% --- ROW 1: CIFAR-10 (larger width) and FFHQ-64 (smaller width) ---
\begin{subtable}[t]{0.48\textwidth}
    \centering
    \caption*{(a) Results on CIFAR-10.}
    \label{tab:cifar10}
    \begin{tabular*}{\linewidth}{@{\extracolsep{\fill}} l l S[table-format=4.0] S[table-format=1.2] S[table-format=1.2] @{}}
    \toprule
    \rowcolor{headergray}
    \textbf{Type} & \textbf{Model} & {\textbf{NFE}} & {\textbf{Uncond}} & {\textbf{Cond}} \\
    \midrule
    \multirow{4}{*}{\rotatebox{90}{GAN}}
    & StyleGAN2-ADA~\citep{karras2020training} & 1 & 2.92 & 2.42 \\
    & StyleGAN-XL~\citep{sauer2022stylegan} & 1 & {--} & 1.85 \\
    & SAN~\citep{takida2024san} & 1 & 1.85 & 1.36 \\
    & CAF~\citep{park2024constant} & 1 & 1.48 & 1.39 \\
    \midrule
    \multirow{8}{*}{\rotatebox{90}{Diff. \& Flow}}
    & DDPM~\citep{ho2020denoising} & 1000 & 3.17 & {--} \\
    & iDDPM~\citep{nichol2021improved} & 4000 & 2.90 & {--} \\
    & NCSN++~\citep{song2020improved} & 2000 & 2.20 & {--} \\
    & DPM-Solver~\citep{lu2022dpm} & 10 & 4.70 & {--} \\
    & LSGM~\citep{vahdat2021score} & 138 & 2.10 & {--} \\
    & EDM-VP~\citep{karras2024analyzing} & 35 & 1.97 & 1.79 \\
    & \cellcolor{sotablue}GMem-XL~\citep{tang2024gmem} & \cellcolor{sotablue}35 & \cellcolor{sotablue}{--} & \cellcolor{sotablue}\bfseries 1.22 \\
    & Flow Matching~\citep{lipman2023flow} & 100 & 3.50 & {--} \\
    & Rectified Flow~\citep{liu2023flow} & 127 & 2.58 & {--} \\
    \midrule
    \multirow{4}{*}{\rotatebox{90}{Few-step}}
    & CTM~\citep{kim2023consistency} & 2 & 1.87 & {--} \\
    & sCT~\citep{song2023consistency} & 2 & 2.06 & {--} \\
    & IMM~\citep{zhou2025inductive} & 1 & 3.20 & {--} \\
    \midrule
    \multirow{4}{*}{\rotatebox{90}{Post-hoc}}
    & EDM + DG~\citep{kynkaanniemi2024applying} & 53 & 1.77 & 1.64 \\
    & EDM + DDO~\citep{zheng2025ddo} & 35 & 1.38 & 1.30 \\
    & \cellcolor{sotablue}EDM + SIMS~\citep{alemohammad2024self} & \cellcolor{sotablue}70 & \cellcolor{sotablue}\bfseries 1.33 & \cellcolor{sotablue}{--} \\
    & EDM + SiD$^2$A~\citep{zhou2024sid2a} & 1 & 1.49 & 1.39 \\
    \midrule
    \multirow{2}{*}{\rotatebox{90}{\textbf{Ours}}}
    & \cellcolor{ourmethodgreen}EDM + \textbf{Neon} & \cellcolor{ourmethodgreen}35 & \cellcolor{ourmethodgreen}{1.38} & \cellcolor{ourmethodgreen}\bfseries 1.38 \\
    & \cellcolor{ourmethodgreen}Flow + \textbf{Neon} & \cellcolor{ourmethodgreen}100 & \cellcolor{ourmethodgreen}\bfseries 2.32 & \cellcolor{ourmethodgreen}{--} \\
    \bottomrule
    \end{tabular*}
\end{subtable}
\hfill \hspace{0.3cm}
\begin{subtable}[t]{0.44\textwidth}
    \centering
    \caption*{(b) Results on FFHQ-64$\times$64.}
    \label{tab:ffhq64}
    \begin{tabular*}{\linewidth}{@{\extracolsep{\fill}} l l S[table-format=3.0] S[table-format=1.2] @{}}
    \toprule
    \rowcolor{headergray}
    \textbf{Type} & \textbf{Model} & {\textbf{NFE}} & {\textbf{FID}} \\
    \midrule
    \multirow{4}{*}{GAN}
    & R3GAN~\citep{r3gan2025} & 1 & 1.95 \\
    & Anycost GAN~\citep{lee2021anycost} & 1 & 2.52 \\
    & MSG-GAN~\citep{karnewar2019msg} & 1 & 2.70 \\
    & StyleGAN2~\citep{karras2019style} & 1 & 3.32 \\
    \midrule
    \multirow{3}{*}{Diffusion}
    & EDM-G++~\citep{karras2024analyzing} & 71 & 1.98 \\
    & EDM-VE~\citep{karras2024analyzing} & 79 & 2.53 \\
    & EDM-VP~\citep{karras2024analyzing} & 79 & 2.39 \\
    \midrule
    \multirow{4}{*}{Post-hoc.}
    & SiD$^2$A~\citep{zhou2024sid2a} & 1 & 1.04 \\
    & EDM + SIMS~\citep{alemohammad2024self} & 158 & 1.04 \\
    & EDM + D2O~\citep{zheng2025revisiting} & 1 & 1.08 \\
    & \cellcolor{sotablue}EDM + D2O-F~\citep{zheng2025revisiting} & \cellcolor{sotablue}1 & \cellcolor{sotablue}\bfseries 0.85 \\
    \midrule
    \textbf{Ours} & \cellcolor{ourmethodgreen}EDM + \textbf{Neon} & \cellcolor{ourmethodgreen}79 & \cellcolor{ourmethodgreen}\bfseries 1.12 \\
    \bottomrule
    \end{tabular*}
\end{subtable}
\vspace{1cm}
% --- ROW 2: ImageNet-256 and ImageNet-512 (equal width) ---
\begin{subtable}[t]{0.46\textwidth}
    \centering
    \caption*{(c) Results on ImageNet-256$\times$256.}
    \label{tab:imagenet256}
    \begin{tabular*}{\linewidth}{@{\extracolsep{\fill}} l l S[table-format=3.0] S[table-format=2.2] @{}}
    \toprule
    \rowcolor{headergray}
    \textbf{Type} & \textbf{Model} & {\textbf{NFE}} & {\textbf{FID}} \\
    \midrule
    \multirow{2}{*}{GAN}
    & GigaGAN~\citep{kang2023gigagan} & 1 & 3.45 \\
    & StyleGAN-XL~\citep{sauer2022stylegan} & 1 & 2.30 \\
    \midrule
    \multirow{6}{*}{\rotatebox{90}{Diffusion}}
    & ADM~\citep{dhariwal2021diffusion} & 250 & 10.94 \\
    & LDM-4~\citep{rombach2022high} & 250 & 10.56 \\
    & DiT-XL/2~\citep{peebles2023dit} & 250 & 9.62 \\
    & U-ViT~\citep{bao2023uvit} & 50 & 2.29 \\
    & MDT~\citep{gao2023mdt} & 250 & 6.23 \\
    & \cellcolor{sotablue}REPA-UCGM~\citep{zheng2025ucgm} & \cellcolor{sotablue}80 & \cellcolor{sotablue}1.06 \\
    \midrule
    \multirow{3}{*}{Masked}
    & MaskGIT~\citep{chang2022maskgit} & 8 & 6.18 \\
    & MAR~\citep{li2024mar} & 100 & 1.98 \\
    & MaskBit~\citep{weber2024maskbit} & 256 & 1.52 \\
    \midrule
    \multirow{5}{*}{AR}
    & VQGAN~\citep{yu2021vitvqgan} & 256 & 15.78 \\
    & VAR-d16~\citep{var} & 10 & 3.30 \\
    & VAR-d30~\citep{var} & 10 & 1.92 \\
    & xAR-B~\citep{ren2025xar} & 40 & 1.72 \\
    & xAR-L~\citep{ren2025xar} & 50 & 1.28 \\
    \midrule
    \multirow{3}{*}{\rotatebox{90}{Few-step}}
    & Shortcut~\citep{frans2024shortcut} & 1 & 10.60 \\
    & IMM (T=1)~\citep{zhou2025inductive} & 1 & 7.77 \\
    & IMM (T=8)~\citep{zhou2025inductive} & 8 & 1.99 \\
    \midrule
    \multirow{2}{*}{Post-hoc}
    & VAR-d16 + DDO~\citep{zheng2025ddo} & 10 & 2.54 \\
    & VAR-d30 + DDO~\citep{zheng2025ddo} & 10 & 1.79 \\
    \midrule
    \multirow{7}{*}{\rotatebox{90}{\textbf{Ours}}}
    & \cellcolor{ourmethodgreen}VAR-d16 + \textbf{Neon} & \cellcolor{ourmethodgreen}10 & \cellcolor{ourmethodgreen}2.01 \\
    & \cellcolor{ourmethodgreen}xAR-B + \textbf{Neon} & \cellcolor{ourmethodgreen}40 & \cellcolor{ourmethodgreen}1.31 \\
    & \cellcolor{sotablue}xAR-L + \textbf{Neon} & \cellcolor{sotablue}50 & \cellcolor{sotablue}\bfseries 1.02 \\
    & \cellcolor{ourmethodgreen}IMM (T=8) + \textbf{Neon} & \cellcolor{ourmethodgreen}8 & \cellcolor{ourmethodgreen}1.46 \\
    & \cellcolor{ourmethodgreen}IMM (T=4) + \textbf{Neon} & \cellcolor{ourmethodgreen} 4 & \cellcolor{ourmethodgreen}1.68 \\
    & \cellcolor{ourmethodgreen}IMM (T=2) + \textbf{Neon} & \cellcolor{ourmethodgreen}2 & \cellcolor{ourmethodgreen}2.88 \\
    & \cellcolor{ourmethodgreen}IMM (T=1) + \textbf{Neon} & \cellcolor{ourmethodgreen}1 & \cellcolor{ourmethodgreen}6.67 \\
    \bottomrule
    \end{tabular*}
\end{subtable}
\hfill \hspace{0.3cm}
\begin{subtable}[t]{0.46\textwidth}
    \centering
    \caption*{(d) Results on ImageNet-512$\times$512.}
    \label{tab:imagenet512}
    \begin{tabular*}{\linewidth}{@{\extracolsep{\fill}} l l S[table-format=4.0] S[table-format=2.2] @{}}
    \toprule
    \rowcolor{headergray}
    \textbf{Type} & \textbf{Model} & {\textbf{NFE}} & {\textbf{FID}} \\
    \midrule
    \multirow{3}{*}{GAN}
    & BigGAN-deep~\citep{brock2018biggan} & 1 & 8.43 \\
    & StyleGAN-XL~\citep{sauer2022stylegan} & 1 & 2.41 \\
    & SiD$^2$A~\citep{zhou2024sid2a} & 1 & 1.37 \\
    \midrule
    \multirow{9}{*}{\rotatebox{90}{Diffusion}}
    & ADM~\citep{dhariwal2021diffusion} & 250 & 23.24 \\
    & ADM-U~\citep{dhariwal2021diffusion} & 500 & 9.96 \\
    & DiT-XL/2~\citep{peebles2023dit} & 250 & 12.03 \\
    & SiT-XL~\citep{ma2024sit} & 250 & 8.30 \\
    & RiN~\citep{jabri2023rin} & 1000 & 3.95 \\
    & U-ViT-L~\citep{bao2023uvit} & 512 & 3.54 \\
    & VDM++~\citep{kingma2024vdm} & 512 & 2.99 \\
    & EDM2-S~\citep{karras2024analyzing} & 63 & 1.73 \\
    & EDM2-XXL~\citep{karras2024analyzing} & 63 & 1.91 \\
    \midrule
    \multirow{3}{*}{Masked}
    & MAGVIT-v2~\citep{yu2024magvit} & 64 & 3.07 \\
    & MAR-L~\citep{li2024mar} & 1024 & 2.74 \\
    \midrule
    \multirow{2}{*}{AR}
    & VAR-d36-s~\citep{var} & 10 & 2.63 \\
    & xAR-L~\citep{ren2025xar} & 50 & 1.70 \\
    \midrule
    \multirow{4}{*}{Post-hoc}
    & EDM2-S + SIMS~\citep{alemohammad2024self} & 63 & 1.73 \\
    & \cellcolor{sotablue}EDM2-L + DDO~\citep{zheng2025ddo} & \cellcolor{sotablue}63 & \cellcolor{sotablue}\bfseries 1.21 \\
    & EDM2 + AG~\citep{karras2024guiding} & 63 & 1.25 \\
    & EDM2 + SiD$^2$A~\citep{zhou2024sid2a} & 1 & 1.37 \\
    \midrule
    \textbf{Ours} & \cellcolor{ourmethodgreen}VAR-d30-s + \textbf{Neon} & \cellcolor{ourmethodgreen}10 & \cellcolor{ourmethodgreen}\bfseries 1.70 \\
    \bottomrule
    \end{tabular*}
\end{subtable}
\end{table*}

\newpage

We summarize our results and provide a comprehensive comparison with state-of-the-art generative models in Table~\ref{tab:main_results}. The following section discusses \proposedMethod{}'s performance on each benchmark in more detail, highlighting its standing relative to top-performing models and other post-hoc methods.

\paragraph{CIFAR-10}
On both conditional and unconditional CIFAR-10, \proposedMethod{} improves the EDM-VP baseline to a \textbf{1.38 FID} while maintaining its 35 NFE~\citep{karras2024analyzing}. In the conditional setting, this is competitive with DDO, which achieves a 1.30 FID from the same base model but requires significantly more training compute (~12\% extra vs. \proposedMethod{}'s 1.75\%)~\citep{zheng2025ddo}. In the unconditional setting, \proposedMethod{}'s 1.38 FID is identical to DDO's and close to the SOTA held by SIMS at 1.33 FID~\citep{alemohammad2024self}. Notably, SIMS requires doubling the NFE to 70, making \proposedMethod{} a more sampling-efficient alternative. \proposedMethod{} also demonstrates versatility by improving a Flow Matching model to a 2.32 FID~\citep{lipman2023flow}.

\paragraph{FFHQ-64x64}
On FFHQ, \proposedMethod{} significantly enhances the unconditional EDM-VP model, lowering its FID from 2.39 to \textbf{1.12} with 79 NFE. While the state-of-the-art is held by the one-step D2O-F at 0.85 FID~\citep{zheng2025revisiting}, \proposedMethod{}'s performance is highly competitive. It stands against other post-hoc methods like SIMS (1.04 FID, 158 NFE)~\citep{alemohammad2024self} and the one-step distilled SiD$^2$A (1.04 FID, 1 NFE)~\citep{zhou2024sid2a}. \proposedMethod{} achieves its strong result with a simple parameter merge that preserves the base sampler's structure, offering a distinct trade-off between FID and NFE.

\paragraph{ImageNet-256x256}
On ImageNet-256, \proposedMethod{} sets a new \textbf{state-of-the-art}, improving the xAR-L model from an already strong 1.28 FID to \textbf{1.02 FID}~\citep{ren2025xar}. This surpasses the previous best result of 1.06 FID from REPA-UCGM~\citep{zheng2025ucgm}. \proposedMethod{} also demonstrates its superiority over DDO on this benchmark; when applied to the same VAR-d16 base model~\citep{var}, \proposedMethod{} achieves a 2.01 FID, which is a significant improvement over DDO's 2.54 FID~\citep{zheng2025ddo}. Furthermore, \proposedMethod{} consistently improves other architectures, including xAR-B (1.31 FID) and IMM (1.46 FID).

\paragraph{ImageNet-512x512}
On ImageNet-512, \proposedMethod{} improves the VAR-d30 model to a \textbf{1.70 FID} with 10 NFE~\citep{var}. While the state-of-the-art belongs to EDM2-L+DDO at 1.21 FID~\citep{zheng2025ddo}, \proposedMethod{}'s result is competitive with other post-hoc methods applied to different base models, such as EDM2-S+SIMS (1.73 FID)~\citep{alemohammad2024self}. It showcases \proposedMethod{}'s ability to enhance autoregressive models at higher resolutions with its characteristic low compute overhead.

\paragraph{Summary}
Across all benchmarks, \proposedMethod{} proves to be a simple, efficient, and broadly applicable post-hoc method for improving generative models. It achieves a new state-of-the-art on ImageNet-256 and delivers highly competitive results elsewhere, often with superior sampling efficiency compared to other post-hoc techniques. A key finding is that \proposedMethod{}'s effectiveness corresponds directly to the quality of the base model it enhances; applying it to a stronger foundation like xAR-L yields a greater improvement and the best overall performance. This positions \proposedMethod{} as a reliable tool for adding a final layer of polish to strong, pre-existing generative models with minimal computational effort. Crucially, since \proposedMethod{} improves the base diffusion model itself, its benefits are potentially orthogonal to distillation methods; one could apply SiD$^2$A or D2O-F to the \proposedMethod{}-enhanced model for further gains.

\newpage

\section{Proofs and Detailed Explanations}\label{app:proofs}

\subsection{Assumptions, notation, and identities}
\label{app:assumptions}

\paragraph{Assumptions.}
Let $\ell_\theta(x)$ be a differentiable per-example loss and $\Rdata(\theta):=\E_{p_{\text{data}}}[\ell_\theta(X)]$.
\begin{enumerate}[label=(A\arabic*),leftmargin=2.2em]
\item \textbf{Data risk minimizer.} $\theta^*\in\arg\min_\theta \Rdata(\theta)$, hence $\E_{p_{\text{data}}}[\phi_{\theta^*}(X)]=0$, where $\phi_\theta(x):=\nabla_\theta \ell_\theta(x)$.
\item \textbf{Regularity.} Common support; dominated convergence/interchange of limits and expectations; local Lipschitz of $\phi_\theta$ and $H_\theta(x):=\partial_\theta \phi_\theta(x)$ near $\theta^*$.
\item \textbf{Local neighborhood.} $\theta_r=\theta^*+\varepsilon$ with small $\|\varepsilon\|_{H_d}$; all remainders are $O(\|\varepsilon\|_{H_d}^{\,2})$.
\item \textbf{Rank.} If $H_d:=\nabla^2\Rdata(\theta^*)$ is not full rank, interpret all statements on $\mathrm{Im}(H_d)$.
\end{enumerate}

\paragraph{Metric and basic objects.}
The data Hessian is $H_d=\nabla^2\Rdata(\theta^*)=\E_{p_{\text{data}}}[H_{\theta^*}(X)]$.
We use the $M$-induced geometry
\[
\langle x,y\rangle_M := x^\top M y,\quad
\|x\|_M := \|M^{1/2}x\|_2,\quad
\|A\|_{\mathrm{op},M} := \|M^{1/2} A M^{-1/2}\|_{\mathrm{op}},
\]
and write $\|\cdot\|_{H_d},\langle\cdot,\cdot\rangle_{H_d}$ for $M=H_d$.
For a preconditioner $P\succ0$, set $K:=H_d^{1/2}PH_d^{1/2}$ with bounds $mI\preceq K\preceq MI$.

\subsection{Neon improves under anti-alignment}
\label{app:neon-anti-align}

\paragraph{Alignment scalar and synthetic objective.}
Let
\[
r_d:=\nabla_\theta \Rdata(\theta)\big|_{\theta_r},\qquad
\Rsyn(\theta):=\E_{q_{\theta_r,\kappa}}[\ell_\theta(X)],\qquad
r_s:=\nabla_\theta \Rsyn(\theta)\big|_{\theta_r}.
\]
Define the alignment scalar
\begin{equation}\label{eq:s-def}
s \;:=\; \langle r_d,\;P\,r_s\rangle .
\end{equation}

\begin{theorem}[One-step Neon improvement]\label{thm:neon-one-step}
A short synthetic fine-tune produces
$\theta_s=\theta_r-\alpha\,P r_s + O(\alpha^2)$ for some $\alpha>0$.
For $w>0$, the Neon merge is
\[
\theta_{\text{Neon}} \;=\; (1+w)\theta_r - w\theta_s
\;=\; \theta_r + w\alpha\,P r_s + O(w\alpha^2).
\]
Let $\widehat H_d:=\nabla^2\Rdata(\theta_r)$. Then
\begin{equation}\label{eq:rd-expansion}
\Rdata(\theta_{\text{Neon}})
\;=\;
\Rdata(\theta_r)
\;+\;
w\alpha\,s
\;+\;
\frac{(w\alpha)^2}{2}\,r_s^\top P^\top \widehat H_d\, P\, r_s
\;+\;
O\big((w\alpha)^3\big).
\end{equation}
In particular, if $s<0$ then for all sufficiently small $w>0$ we have
$\Rdata(\theta_{\text{Neon}})<\Rdata(\theta_r)$.
If moreover $\widehat H_d\succeq 0$, writing $q:=r_s^\top P^\top \widehat H_d P r_s\ge 0$,
any
\[
0 \;<\; w \;<\; -\,\frac{2s}{\alpha q}
\quad\text{guarantees}\quad
\Rdata(\theta_{\text{Neon}})\le \Rdata(\theta_r)
\ \ (\text{up to }O((w\alpha)^3)),
\]
and the quadratic proxy is minimized at $w^* = -\,{s}/{(\alpha q)} > 0$.
\end{theorem}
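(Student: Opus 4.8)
The plan is to prove the theorem in four moves: (i) record the first-order form of the synthetic fine-tuning displacement, (ii) substitute it into the merge rule to obtain the step $\delta := \theta_{\text{Neon}} - \theta_r$, (iii) Taylor-expand $\Rdata$ about $\theta_r$ along $\delta$, and (iv) read off the sign statements from elementary one-variable calculus on the resulting quadratic-plus-remainder in $w$. A single preconditioned gradient step on $\Rsyn$ with learning rate $\alpha$ gives $\theta_s = \theta_r - \alpha P r_s$ exactly, while a multi-step fine-tune tracks the flow $\dot\theta = -P\nabla\Rsyn(\theta)$ to first order, so in either case $\theta_s = \theta_r - \alpha\, P r_s + O(\alpha^2)$. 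Plugging this into $\theta_{\text{Neon}} = (1+w)\theta_r - w\theta_s$ cancels $\theta_r$ against the $(1+w)$ prefactor and leaves $\theta_{\text{Neon}} = \theta_r + w\alpha\, P r_s + O(w\alpha^2)$, i.e.\ $\delta = w\alpha\, d + O(w\alpha^2)$ with step direction $d := P r_s$.

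Using the regularity in (A2), expand $\Rdata(\theta_r + \delta) = \Rdata(\theta_r) + r_d^\top \delta + \tfrac12\,\delta^\top \widehat{H}_d\,\delta + O(\|\delta\|^3)$. The linear term contributes $w\alpha\,\langle r_d, P r_s\rangle = w\alpha\, s$, the quadratic term contributes $\tfrac12 (w\alpha)^2 (P r_s)^\top \widehat{H}_d (P r_s) = \tfrac12(w\alpha)^2\, r_s^\top P^\top \widehat{H}_d P r_s$, and the Taylor cubic is $O((w\alpha)^3)$; this is exactly \eqref{eq:rd-expansion}. The \emph{one delicate point}, and the only real obstacle, is the bookkeeping of remainders: a genuine multi-step fine-tune leaves $O(\alpha^2)$ slack in $\theta_s$, which feeds an $O(w\alpha^2)$ error into the linear term --- not literally $O((w\alpha)^3)$, but still $o(w\alpha)$ for small $\alpha$ with $w$ bounded, hence dominated by $w\alpha s$. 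The cleanest way to make \eqref{eq:rd-expansion} hold verbatim is to read $d := -(\theta_s - \theta_r)/\alpha$ as the exact normalized fine-tune displacement, so $\theta_s = \theta_r - \alpha d$ exactly and $d \to P r_s$ as the fine-tune shrinks, leaving only the Taylor cubic in the remainder. I would fix one such convention explicitly and note that none of the conclusions below depend on the choice.

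For the sign statements, set $q := r_s^\top P^\top \widehat{H}_d P r_s = (P r_s)^\top \widehat{H}_d (P r_s)$ and $g(w) := w\alpha\,s + \tfrac12(w\alpha)^2 q$, so that $\Rdata(\theta_{\text{Neon}}) - \Rdata(\theta_r) = g(w) + O((w\alpha)^3)$. If $s < 0$, factoring $g(w) = w\alpha\big(s + \tfrac12 w\alpha\,q\big)$ shows that once $w\alpha$ is small enough that $\tfrac12 w\alpha\,|q|$ plus the cubic remainder stays below $|s|$, the bracket is strictly negative, giving $\Rdata(\theta_{\text{Neon}}) < \Rdata(\theta_r)$ for all sufficiently small $w > 0$. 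If moreover $\widehat{H}_d \succeq 0$, then $q = (P r_s)^\top \widehat{H}_d (P r_s) \ge 0$. When $q > 0$, $w \mapsto g(w)$ is an upward parabola vanishing at $w = 0$ and at $w = -2s/(\alpha q) > 0$, so $g \le 0$ on the closed interval between these roots, and its vertex --- the midpoint --- solves $g'(w^*) = \alpha s + w^*\alpha^2 q = 0$, i.e.\ $w^* = -s/(\alpha q) > 0$, the minimizer of the quadratic proxy. When $q = 0$ the proxy degenerates to $g(w) = w\alpha s < 0$ for every $w > 0$, the threshold $-2s/(\alpha q)$ is $+\infty$, and no finite minimizer exists; I would flag this case explicitly. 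The only residual care is uniformity of the $O((w\alpha)^3)$ term over the relevant bounded range of $w$, which is immediate from the local Lipschitz hypotheses in (A2).
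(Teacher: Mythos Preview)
Your proposal is correct and follows essentially the same approach as the paper: both derive the displacement $\theta_{\text{Neon}}-\theta_r=w\alpha\,P r_s+O(w\alpha^2)$, Taylor-expand $\Rdata$ about $\theta_r$ along this direction (the paper packages this as a univariate $\psi(\tau):=\Rdata(\theta_r+\tau P r_s)$ with $\tau=w\alpha$, you do the multivariate expansion directly), identify the linear and quadratic coefficients as $s$ and $q$, and then read off the sign conclusions and the quadratic minimizer by elementary one-variable analysis. Your treatment is in fact more careful than the paper's on two points the paper glosses over: the mismatch between the $O(w\alpha^2)$ slack in the fine-tune displacement and the stated $O((w\alpha)^3)$ Taylor remainder, and the degenerate $q=0$ case.
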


\begin{proof}
From the short synthetic fine-tune we have
\[
\theta_s=\theta_r-\alpha P r_s + O(\alpha^2).
\]
Therefore
\[
\theta_{\text{Neon}}
=(1+w)\theta_r-w\theta_s
=\theta_r + w\alpha\,P r_s \;+\; O(w\alpha^2).
\]

Define the univariate function
\[
\psi(\tau):=\Rdata\big(\theta_r+\tau\,P r_s\big),
\qquad \text{and set }\;\tau=w\alpha.
\]
A Taylor expansion of $\psi$ at $\tau=0$ gives
\[
\psi(\tau)=\psi(0)\;+\;\tau\,\psi'(0)\;+\;\frac{\tau^2}{2}\,\psi''(0)\;+\;O(\tau^3).
\]

By the chain rule,
\[
\psi'(0)=\big\langle r_d,\;P r_s\big\rangle \;=\; s,
\qquad
\psi''(0)= r_s^\top P^\top\,\widehat H_d\,P\,r_s.
\]
Substituting $\tau=w\alpha$ yields
\[
\Rdata(\theta_{\text{Neon}})
=\Rdata(\theta_r)
\;+\; w\alpha\,s
\;+\; \frac{(w\alpha)^2}{2}\,r_s^\top P^\top \widehat H_d P r_s
\;+\; O\!\big((w\alpha)^3\big),
\]
which is \eqref{eq:rd-expansion}.

If $s<0$, the linear term is negative and dominates for sufficiently small $w>0$, giving
$\Rdata(\theta_{\text{Neon}})<\Rdata(\theta_r)$.

If, in addition, $\widehat H_d\succeq 0$, then $\psi''(0)\ge 0$ and the quadratic proxy
$\tau\mapsto \psi(0)+\tau s+\tfrac{1}{2}\tau^2\,\psi''(0)$ is minimized at
\[
\tau^* \;=\; -\,\frac{s}{\psi''(0)} \;>\; 0.
\]
Since $\tau=w\alpha$, this gives the safe window
\(
0<w<-\tfrac{2s}{\alpha\,\psi''(0)}
\)
and the minimizer
\(
w^* = -\,\dfrac{s}{\alpha\,\psi''(0)} \;=\; -\,\dfrac{s}{\alpha\,r_s^\top P^\top \widehat H_d P r_s}\,.
\)
\end{proof}

\begin{remark}[No convexity needed: directional smoothness]
The PSD requirement on $\widehat H_d$ can be replaced by an upper curvature bound \emph{along the step direction}
$d:=Pr_s$. If there is $L_{\mathrm{dir}}\!\ge\!0$ with
$d^\top \nabla^2\Rdata(\theta_r+\tau d)d \le L_{\mathrm{dir}}\|d\|_2^2$ for $\tau$ near $0$,
then the same conclusion holds whenever
\(
0<w<-\frac{2s}{\alpha\,L_{\mathrm{dir}}\|d\|_2^2}\,.
\)
\end{remark}

\subsection{An upper bound on $s$ and sufficient conditions for anti-alignment}
\label{app:upper-bound-s}

\paragraph{Local expansion at $\theta_r$.}
\begin{lemma}[First-order expansions of real and synthetic gradients]\label{lem:local-expansion}
Let $\theta_r=\theta^*+\varepsilon$ with $\|\varepsilon\|_{H_d}$ small and assume \textnormal{(A1)–(A4)}.
Then
\begin{equation}\label{eq:local-expansion-rd}
r_d:=\nabla_\theta \Rdata(\theta)\big|_{\theta_r}
= H_d\,\varepsilon + O(\|\varepsilon\|_{H_d}^{2}),
\end{equation}
and, with
\[
b\;:=\;\E_{q_{\theta_r,\kappa}}\!\big[\phi_{\theta^*}(X)\big],
\qquad
\Delta\;:=\;\E_{q_{\theta_r,\kappa}}\!\big[H_{\theta^*}(X)\big]
            \;-\;\E_{p_{\text{data}}}\!\big[H_{\theta^*}(X)\big],
\]
\begin{equation}\label{eq:local-expansion-rs}
r_s:=\nabla_\theta \Rsyn(\theta)\big|_{\theta_r}
= H_d\,\varepsilon \;+\; \underbrace{\big(b+\Delta\,\varepsilon\big)}_{=:R_\kappa}
\;+\; O(\|\varepsilon\|_{H_d}^{2}),
\end{equation}
\end{lemma}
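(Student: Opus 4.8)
The plan is to obtain both expansions by a single first-order Taylor expansion of the relevant gradient field about the data-risk minimizer $\theta^*$, using (A1)--(A4) to handle regularity and to convert Euclidean remainders into the $H_d$-weighted ones.

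For $r_d$ the argument is immediate: by (A1), $\nabla_\theta\Rdata(\theta^*)=\E_{p_{\text{data}}}[\phi_{\theta^*}(X)]=0$, so a first-order expansion of the smooth map $\theta\mapsto\nabla_\theta\Rdata(\theta)$ at $\theta^*$, evaluated at $\theta_r=\theta^*+\varepsilon$, gives $r_d=\nabla^2\Rdata(\theta^*)\varepsilon+O(\|\varepsilon\|_2^2)=H_d\varepsilon+O(\|\varepsilon\|_{H_d}^2)$. The last equality is where (A2)--(A4) enter: the local Lipschitz continuity of $H_\theta$ near $\theta^*$ makes the remainder genuinely quadratic, and on $\mathrm{Im}(H_d)$ the norms $\|\cdot\|_2$ and $\|\cdot\|_{H_d}$ are equivalent, so the remainder can be rewritten in the $H_d$ geometry — exactly the restriction flagged in (A4).

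For $r_s$ the key observation is that the sampler $q_{\theta_r,\kappa}$ is constructed once at $\theta_r$ and then held fixed, so differentiating $\Rsyn(\theta)=\E_{x\sim q_{\theta_r,\kappa}}[\ell_\theta(x)]$ in $\theta$ acts only on the integrand; interchanging gradient and expectation (justified by the domination hypotheses in (A2)) yields $r_s=\E_{q_{\theta_r,\kappa}}[\phi_{\theta_r}(X)]$. I would then expand the per-example gradient pointwise, $\phi_{\theta_r}(x)=\phi_{\theta^*}(x)+H_{\theta^*}(x)\varepsilon+\rho(x,\varepsilon)$ with $\|\rho(x,\varepsilon)\|_2\le c(x)\|\varepsilon\|_2^2$ near $\theta^*$, take expectations over $q_{\theta_r,\kappa}$, and recognize $b=\E_{q_{\theta_r,\kappa}}[\phi_{\theta^*}(X)]$; this gives $r_s=b+\E_{q_{\theta_r,\kappa}}[H_{\theta^*}(X)]\varepsilon+O(\|\varepsilon\|_{H_d}^2)$. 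Finally, adding and subtracting $\E_{p_{\text{data}}}[H_{\theta^*}(X)]\varepsilon=H_d\varepsilon$ splits the middle term into $(H_d+\Delta)\varepsilon$, with $\Delta$ exactly the curvature-mismatch operator of the lemma, and regrouping produces $r_s=H_d\varepsilon+(b+\Delta\varepsilon)+O(\|\varepsilon\|_{H_d}^2)$.

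The conceptual content is light; the main obstacle is the remainder bookkeeping. Concretely, one must check that the pointwise quadratic error $\rho(\cdot,\varepsilon)$ integrates against $q_{\theta_r,\kappa}$ to something $O(\|\varepsilon\|_{H_d}^2)$ — i.e.\ that $\E_{q_{\theta_r,\kappa}}[c(X)]<\infty$, which is what the domination/regularity in (A2) is there to supply — and that every $O(\cdot)$ term can be passed between the Euclidean and $H_d$ geometries uniformly, which is handled by working on $\mathrm{Im}(H_d)$ as in (A3)--(A4). No twice-differentiability of $\Rsyn$ is needed, since $r_s$ is only expanded to first order.
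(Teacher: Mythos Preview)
Your proposal is correct and follows essentially the same route as the paper: a per-example first-order Taylor expansion $\phi_{\theta_r}(x)=\phi_{\theta^*}(x)+H_{\theta^*}(x)\varepsilon+\rho(x)$, followed by expectations under $p_{\text{data}}$ (using (A1) to kill the leading term) and under $q_{\theta_r,\kappa}$ (recognizing $b$ and splitting the Hessian term as $H_d+\Delta$). The only cosmetic difference is that for $r_d$ you expand the full risk gradient directly while the paper expands per-example and then averages, but under (A2) these are equivalent; your remarks on remainder bookkeeping and the $\mathrm{Im}(H_d)$ restriction are exactly the points the paper leans on implicitly.
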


\begin{proof}

\medskip
\noindent\textit{First-order expansion of the per-example gradient.}
By (A2) (regularity) and a first-order Taylor expansion at $\theta^*$,
\[
\phi_{\theta_r}(x)
\;=\;
\phi_{\theta^*}(x)\;+\;H_{\theta^*}(x)\,\varepsilon\;+\;\rho(x),
\]
where the remainder satisfies
\(
\E_{p_{\text{data}}}\!\big[\|\rho(X)\|\big]
=O(\|\varepsilon\|_{H_d}^2)
\)
and similarly
\(
\E_{q_{\theta_r,\kappa}}\!\big[\|\rho(X)\|\big]
=O(\|\varepsilon\|_{H_d}^2).
\)

\medskip
\noindent\textit{Real-risk gradient.}
Taking expectation under $p_{\text{data}}$ and using (A1)–(A3),
\[
r_d
=\E_{p_{\text{data}}}\!\big[\phi_{\theta_r}(X)\big]
=\underbrace{\E_{p_{\text{data}}}\!\big[\phi_{\theta^*}(X)\big]}_{=\,0}
+\E_{p_{\text{data}}}\!\big[H_{\theta^*}(X)\big]\varepsilon
+\E_{p_{\text{data}}}\!\big[\rho(X)\big]
= H_d\,\varepsilon\;+\;O(\|\varepsilon\|_{H_d}^2).
\]

\medskip
\noindent\textit{Synthetic-risk gradient.}
Taking expectation under $q_{\theta_r,\kappa}$,
\[
r_s
=\E_{q_{\theta_r,\kappa}}\!\big[\phi_{\theta_r}(X)\big]
=\underbrace{\E_{q_{\theta_r,\kappa}}\!\big[\phi_{\theta^*}(X)\big]}_{=:~b}
+\underbrace{\E_{q_{\theta_r,\kappa}}\!\big[H_{\theta^*}(X)\big]}_{=~H_d+\Delta}\varepsilon
+\E_{q_{\theta_r,\kappa}}\!\big[\rho(X)\big].
\]
Hence
\[
r_s
= b + (H_d+\Delta)\,\varepsilon \;+\; O(\|\varepsilon\|_{H_d}^2).
\]

\medskip
\noindent\textit{Equivalent residual form used later.}
It is convenient (and used in subsequent bounds) to rewrite this as
\[
r_s
= H_d\,\varepsilon \;-\; R_\kappa \;+\; O(\|\varepsilon\|_{H_d}^2),
\qquad
\text{where}\quad
R_\kappa \;:=\; -\,\big(b + \Delta\,\varepsilon\big).
\]
Both expressions are identical up to the first-order terms, and the latter
isolates the “useful” $H_d\varepsilon$ part from the sampler-induced
mismatch $R_\kappa$.
\end{proof}

\paragraph{Angle and magnitudes.}
Define the $H_d$–whitened magnitudes
\[
\eta_0:=\|b\|_{H_d^{-1}},
\qquad
\eta_1:=\|\Delta\|_{\mathrm{op},\,H_d^{-1}},
\]
and the angle
\begin{equation}\label{eq:angle-def}
\cos\varphi
\;:=\;
\frac{\big\langle \varepsilon,\;H_d^{-1}b\big\rangle_{H_d}}
     {\|\varepsilon\|_{H_d}\,\big\|H_d^{-1}b\big\|_{H_d}}
\in[-1,1].
\end{equation}
Equivalently, $\varphi$ is the Euclidean angle between $H_d^{1/2}\varepsilon$ and $H_d^{-1/2}b$.
Set $K:=H_d^{1/2}PH_d^{1/2}$ with spectral bounds $mI\preceq K\preceq MI$.

\begin{theorem}[Directional upper bound for $s$]\label{thm:upper-s}
With $\theta_r=\theta^*+\varepsilon$ and $\|\varepsilon\|_{H_d}$ small,
\[
s \;\le\; M(1+\eta_1)\,\|\varepsilon\|_{H_d}^{2}
\;-\; m\,\eta_0\,\|\varepsilon\|_{H_d}\,\big[ -\,\cos\varphi \big]_+
\;+\; O(\|\varepsilon\|_{H_d}^{3}).
\]
Consequently, a sufficient condition for $s<0$ is
\[
\boxed{\qquad
\|\varepsilon\|_{H_d}
\;<\;
\frac{m\,\eta_0}{M(1+\eta_1)}\,\big(-\cos\varphi\big)
\quad\text{with }\cos\varphi<0.\qquad}
\]
\end{theorem}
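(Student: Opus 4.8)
The plan is to reduce everything to the two first‑order expansions supplied by Lemma~\ref{lem:local-expansion}, sort the resulting contributions to $s=\langle r_d,\,Pr_s\rangle$ by their order in $\|\varepsilon\|_{H_d}$, and then match each piece to the spectral data $(m,M,\eta_0,\eta_1,\cos\varphi)$. Substituting $r_d=H_d\varepsilon+O(\|\varepsilon\|_{H_d}^2)$ and $r_s=H_d\varepsilon+b+\Delta\varepsilon+O(\|\varepsilon\|_{H_d}^2)$ into the bilinear form and multiplying out gives
\[
s \;=\; \varepsilon^\top H_d\,P\,H_d\,\varepsilon \;+\; \varepsilon^\top H_d\,P\,\Delta\,\varepsilon \;+\; \varepsilon^\top H_d\,P\,b \;+\; O(\|\varepsilon\|_{H_d}^{3}),
\]
where the first two terms are $O(\|\varepsilon\|_{H_d}^{2})$, the third is the leading (linear) term, and the cross terms between the Taylor remainders of $r_d,r_s$ and the bounded pieces $b,\Delta\varepsilon$ are to be folded into the stated remainder once they have been estimated by the same spectral quantities that control the explicit terms.

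I would then bound the two quadratic terms from above. Writing $u:=H_d^{1/2}\varepsilon$, so $\|u\|_2=\|\varepsilon\|_{H_d}$, the first term is $\langle u,Ku\rangle\le M\|u\|_2^2=M\|\varepsilon\|_{H_d}^2$ by $K:=H_d^{1/2}PH_d^{1/2}\preceq MI$; for the second, inserting $H_d^{\pm1/2}$ gives $\varepsilon^\top H_dP\Delta\varepsilon=\langle u,\,K\,(H_d^{-1/2}\Delta H_d^{1/2})\,u\rangle$ and, since $\|H_d^{-1/2}\Delta H_d^{1/2}\|_{\mathrm{op}}=\|\Delta\|_{\mathrm{op},H_d^{-1}}=\eta_1$, Cauchy--Schwarz yields $|\varepsilon^\top H_dP\Delta\varepsilon|\le M\eta_1\|\varepsilon\|_{H_d}^2$; together these give the $M(1+\eta_1)\|\varepsilon\|_{H_d}^2$ term. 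The delicate step is the linear term $\varepsilon^\top H_dPb$. Here I would set $v:=H_d^{-1/2}b$, record $\langle u,v\rangle=\varepsilon^\top b=\eta_0\|\varepsilon\|_{H_d}\cos\varphi$ and $\|v\|_2=\|b\|_{H_d^{-1}}=\eta_0$, decompose $v=\tfrac{\langle u,v\rangle}{\|u\|_2^2}u+v_\perp$, and write $\varepsilon^\top H_dPb=\langle u,Kv\rangle=\tfrac{\langle u,v\rangle}{\|u\|_2^2}\langle u,Ku\rangle+\langle u,Kv_\perp\rangle$. When $\cos\varphi<0$ the first piece is $\le m\langle u,v\rangle=-m\eta_0\|\varepsilon\|_{H_d}[-\cos\varphi]_+$, since $\langle u,Ku\rangle/\|u\|_2^2\ge m$ multiplies a negative scalar. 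The residual $\langle u,Kv_\perp\rangle$ is $O(\|\varepsilon\|_{H_d})$ and a priori of the \emph{wrong} sign; controlling it is the main obstacle. In the regime the paper really uses — a preconditioner with $K$ close to a scalar (e.g.\ the natural‑gradient choice $P\propto H_d^{-1}$, where $K=cI$ and $v_\perp$ drops out identically) — it vanishes; in general one must press the $mI\preceq K\preceq MI$ bracket along the relevant directions and push whatever is left into the remainder, and I expect this is where the argument needs the most care (and possibly an extra hypothesis relating $P$ to $H_d$). Assembling the three estimates gives the displayed upper bound on $s$.

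Finally, the sufficient condition follows by comparing the two leading terms of that bound. For $\cos\varphi<0$ we have $[-\cos\varphi]_+=-\cos\varphi>0$, so the right‑hand side is $M(1+\eta_1)\|\varepsilon\|_{H_d}^2-m\eta_0(-\cos\varphi)\|\varepsilon\|_{H_d}+O(\|\varepsilon\|_{H_d}^3)$; dividing through by $\|\varepsilon\|_{H_d}>0$ shows this is negative exactly when $M(1+\eta_1)\|\varepsilon\|_{H_d}<m\eta_0(-\cos\varphi)$ modulo the higher‑order slack, i.e.\ $\|\varepsilon\|_{H_d}<\tfrac{m\eta_0}{M(1+\eta_1)}(-\cos\varphi)$, and for $\|\varepsilon\|_{H_d}$ small enough that the cubic term is dominated this forces $s<0$, which is the boxed claim. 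Apart from the linear‑term sign control in the middle step, the proof is Cauchy--Schwarz together with the spectral brackets already assumed.
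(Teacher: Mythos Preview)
Your approach---expand $s$ via Lemma~\ref{lem:local-expansion}, whiten by $u=H_d^{1/2}\varepsilon$, $\tilde b=H_d^{-1/2}b$, $\tilde\Delta=H_d^{-1/2}\Delta H_d^{-1/2}$, and bound each piece against $mI\preceq K\preceq MI$---is exactly the paper's. Two points of comparison. First, you have the signs right: from $r_s=H_d\varepsilon+b+\Delta\varepsilon+O(\cdot)$ the expansion reads $s=u^\top Ku+u^\top K\tilde\Delta u+u^\top K\tilde b+O(\cdot)$, whereas the paper's proof writes minus signs on the last two terms (a slip inconsistent with its own lemma).

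Second, the gap you honestly flag in the linear term is real, and the paper does not close it either. The paper writes $u^\top K\tilde b=\|K^{1/2}u\|\,\|K^{1/2}\tilde b\|\cos\theta$ with $\theta$ the $K$-angle, asserts $u^\top K\tilde b\geq m\|u\|\,\|\tilde b\|[\cos\theta]_+$, silently identifies $\theta$ with the Euclidean angle $\varphi$, and then replaces $-[\cos\varphi]_+$ by $-[-\cos\varphi]_+$ as ``slightly looser''---but when $\cos\varphi<0$ this replaces $0$ by a negative number, which \emph{tightens} an upper bound and is therefore invalid. Your orthogonal decomposition $v=\tfrac{\langle u,v\rangle}{\|u\|^2}u+v_\perp$ makes the obstruction explicit: the cross-term $u^\top Kv_\perp$ is not controlled by $m,M$ alone. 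Both arguments become rigorous only when $K$ is a scalar multiple of the identity (e.g.\ the natural-gradient case $P\propto H_d^{-1}$ of Corollary~\ref{cor:upper-s-ng}), where your $v_\perp$ term vanishes; for general $P$ an additional hypothesis on $K$ relative to the span of $u$ and $\tilde b$ is needed, exactly as you suspected. The final paragraph deriving the boxed sufficient condition from the bound is correct.
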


\begin{proof}
Using Lemma~\ref{lem:local-expansion}, write
\[
s
= \varepsilon^\top H_d P H_d \varepsilon
   \;-\; \varepsilon^\top H_d P b
   \;-\; \varepsilon^\top H_d P \Delta \varepsilon
   \;+\; O(\|\varepsilon\|_{H_d}^{3}).
\]
Whiten with $a:=H_d^{1/2}\varepsilon$, $\tilde b:=H_d^{-1/2}b$, $\tilde\Delta:=H_d^{-1/2}\Delta H_d^{-1/2}$, and $K:=H_d^{1/2}PH_d^{1/2}$ to get
\[
s \;=\; a^\top K a \;-\; a^\top K \tilde b \;-\; a^\top K \tilde\Delta\, a \;+\; O(\|a\|_2^3).
\]
Now bound the three pieces:
\[
a^\top K a \le M\|a\|_2^2 = M\|\varepsilon\|_{H_d}^2,\qquad
-\,a^\top K \tilde\Delta a \le M\,\eta_1\,\|\varepsilon\|_{H_d}^2.
\]
For the linear term, write $a^\top K \tilde b = \|K^{1/2}a\|_2\,\|K^{1/2}\tilde b\|_2\cos\theta$,
with $\theta$ the angle between $K^{1/2}a$ and $K^{1/2}\tilde b$.
Since $\|K^{1/2}x\|_2\ge \sqrt{m}\|x\|_2$,
\[
a^\top K \tilde b \;\ge\; m\,\|a\|_2\,\|\tilde b\|_2\,[\cos\theta]_+
\;=\; m\,\|\varepsilon\|_{H_d}\,\eta_0\,[\cos\varphi]_+.
\]
Thus
\(
-\,a^\top K \tilde b \le -\,m\,\eta_0\,\|\varepsilon\|_{H_d}\,[\cos\varphi]_+.
\)
Since $[\cos\varphi]_+ \ge 0$ and $[\,-\cos\varphi\,]_+ \ge [\cos\varphi]_-$,
we can replace $-[\cos\varphi]_+$ by the slightly looser but sign-robust term
$-[\,-\cos\varphi\,]_+$, yielding the stated bound after collecting terms and absorbing $O(\|a\|_2^3)$.
\end{proof}

\begin{corollary}[Natural-gradient geometry]\label{cor:upper-s-ng}
If $P=H_d^{-1}$, then $K=I$ (so $m=M=1$) and
\[
s \;\le\; (1+\eta_1)\,\|\varepsilon\|_{H_d}^{2}
\;-\; \eta_0\,\|\varepsilon\|_{H_d}\,\big[ -\,\cos\varphi \big]_+
\;+\; O(\|\varepsilon\|_{H_d}^{3}).
\]
Thus it suffices that
\(
\|\varepsilon\|_{H_d} < \dfrac{\eta_0}{1+\eta_1}\,\big(-\cos\varphi\big)
\)
with $\cos\varphi<0$ to guarantee $s<0$.
\end{corollary}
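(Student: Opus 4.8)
The plan is to read this corollary off of Theorem~\ref{thm:upper-s} by specializing the preconditioner to the natural-gradient choice $P=H_d^{-1}$; no new estimates are required. First I would record the one algebraic observation that does all the work: with $P=H_d^{-1}$,
\[
K \;=\; H_d^{1/2}PH_d^{1/2} \;=\; H_d^{1/2}H_d^{-1}H_d^{1/2} \;=\; I,
\]
so the two-sided spectral bound $mI\preceq K\preceq MI$ demanded by Theorem~\ref{thm:upper-s} holds with the sharpest possible constants $m=M=1$. In the rank-deficient case this identity and the symbol $H_d^{-1}$ are to be read on $\mathrm{Im}(H_d)$, per Assumption (A4), where $H_d$ is genuinely invertible.

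Next I would substitute $m=M=1$ into the conclusion of Theorem~\ref{thm:upper-s}. The bound $s \le M(1+\eta_1)\|\varepsilon\|_{H_d}^{2} - m\,\eta_0\,\|\varepsilon\|_{H_d}\,[-\cos\varphi]_+ + O(\|\varepsilon\|_{H_d}^{3})$ collapses immediately to the displayed inequality of the corollary. For the sufficient condition, note that $\cos\varphi<0$ makes $[-\cos\varphi]_+ = -\cos\varphi>0$; then, dropping the cubic remainder, the right-hand side is strictly negative exactly when $(1+\eta_1)\|\varepsilon\|_{H_d}^{2} < \eta_0\,\|\varepsilon\|_{H_d}\,(-\cos\varphi)$, i.e. after dividing by $\|\varepsilon\|_{H_d}>0$, when $\|\varepsilon\|_{H_d} < \eta_0(-\cos\varphi)/(1+\eta_1)$. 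Under this inequality the leading two terms contribute a strictly negative amount that does not vanish faster than $\|\varepsilon\|_{H_d}^{2}$, so for $\|\varepsilon\|_{H_d}$ small (Assumption (A3)) the $O(\|\varepsilon\|_{H_d}^{3})$ term is dominated and $s<0$ follows.

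I do not expect a real obstacle: the corollary is a one-line specialization, and the only points needing a word of care are interpreting $H_d^{-1}$ and $K=I$ when $H_d$ is singular — handled by restricting to $\mathrm{Im}(H_d)$ — and the usual "for sufficiently small $\|\varepsilon\|_{H_d}$" qualification needed to absorb the cubic remainder, which is precisely the same caveat already attached to Theorem~\ref{thm:alignment-main} in the main text.
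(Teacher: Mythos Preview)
Your proposal is correct and matches the paper's approach exactly: the corollary is stated in the paper without a separate proof, as it is simply the specialization $P=H_d^{-1}\Rightarrow K=I$, $m=M=1$ substituted into Theorem~\ref{thm:upper-s}. Your remarks on the rank-deficient case (restricting to $\mathrm{Im}(H_d)$ per (A4)) and on absorbing the cubic remainder are the appropriate boilerplate and mirror the paper's standing conventions.
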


\paragraph{Interpretation.}
$\eta_0$ captures the sampler’s \emph{linear bias} (whitened by $H_d$); $\eta_1$ its \emph{curvature tilt}.
From Theorem~\ref{thm:upper-s}, the leading terms obey
\[
s \;\lesssim\; M(1+\eta_1)\,\|\varepsilon\|_{H_d}^{2}
\;-\; m\,\eta_0\,\|\varepsilon\|_{H_d}\,(-\cos\varphi),
\]
so whenever the angle is \emph{obtuse} ($\cos\varphi<0$, i.e., $H_d^{-1}b$ points mostly \emph{against} $\varepsilon$),
the subtractive linear term eventually dominates as $\|\varepsilon\|_{H_d}\to 0$.
Equivalently: there exists a threshold $\varepsilon_0>0$ (depending on $m,M,\eta_0,\eta_1$ and $-\cos\varphi$) such that
if the model is sufficiently close to optimal,
\(
\|\varepsilon\|_{H_d}<\varepsilon_0,
\)
then $s<0$. In this small-error regime, Neon reduces the real-data risk by Theorem~\ref{thm:neon-one-step}.

\medskip
\noindent\textit{What remains.} The next subsections show that under the common inference rules we study,
the angle condition \(\cos\varphi<0\) holds to first order:
for autoregressive models (temperature $\tau<1$, top-$k$, top-$p$), and for diffusion/flow models under finite-step ODE sampling.
We therefore avoid restating separate plug-in corollaries and simply point back to the bound above.

\subsection{Acute-angle conditions that imply $s<0$ (AR models)}
\label{app:acute-angle}

\paragraph{Loss and geometry (AR).}
For autoregressive (AR) models we use negative log-likelihood:
\[
\ell_\theta(x)=-\log p_\theta(x),\qquad
\phi_\theta(x)=\nabla_\theta \ell_\theta(x)=-u_\theta(x),
\]
so the data Hessian is the Fisher, $H_d=F=\E_{p_{\text{data}}}[u_{\theta^*}u_{\theta^*}^\top]$.
For a sampler $q$ let
\[
b \;:=\; \E_q\!\big[\phi_{\theta^*}(X)\big] \;=\; -\,\E_q\!\big[u_{\theta^*}(X)\big].
\]
Our global angle is
\[
\cos\varphi
\;:=\;
\frac{\langle \varepsilon,\;F^{-1}b\rangle_{F}}
     {\|\varepsilon\|_{F}\,\|F^{-1}b\|_{F}}
\in[-1,1],
\]
so \emph{anti-alignment} corresponds to $\cos\varphi<0$.

\paragraph{Definition (mode-seeking samplers).}
Fix $\theta_r=\theta^*+\varepsilon$. We call $q$ \emph{mode-seeking} if it is a monotone reweighting of the reference model:
\[
q(x)\ \propto\ w(x)\,p_{\theta_r}(x),\qquad
w(x)=f\!\big(\log p_{\theta_r}(x)\big),
\]
with $f:\R\to\R_{\ge 0}$ nondecreasing and not a.e.\ constant.
(For AR decoding applied tokenwise, the overall sequence law inherits a product of such nondecreasing reweights; we write it as $f(\log p_{\theta_r}(x))$ for brevity.)

\paragraph{Common AR samplers are mode-seeking.}
\begin{itemize}[leftmargin=1.2em]
\item \textbf{Temperature} $\tau<1$. The sampler draws from $q\propto p_{\theta_r}^{1/\tau}$, so
$f(z)=\exp\{(1/\tau-1)\,z\}$ with $1/\tau-1>0$, hence $f$ is strictly increasing (neutral only at $\tau=1$).

\item \textbf{Top-$k$}. Keep only the $k$ largest probabilities: there exists a threshold $z_k$ such that
$f(z)=\mathbb{1}\{z\ge z_k\}$, a nondecreasing step function (neutral only at $k=$ vocabulary size).

\item \textbf{Top-$p$ (nucleus)}. Keep the smallest set whose cumulative mass exceeds $p$; this induces a (context-dependent) threshold $z_p$ and $f(z)=\mathbb{1}\{z\ge z_p\}$, again nondecreasing (neutral only at $p=1$).
\end{itemize}

\begin{lemma}[Mode-seeking $\Rightarrow$ $\cos\varphi<0$ (first order)]
\label{lem:acute-nll}
Assume $q(x)\propto f(\log p_{\theta_r}(x))\,p_{\theta_r}(x)$ with $f$ nondecreasing.
For $\theta_r=\theta^*+\varepsilon$ and small $\|\varepsilon\|_{F}$,
\[
\cos\varphi \;<\; 0 \;+\; O(\|\varepsilon\|_{F}).
\]
\end{lemma}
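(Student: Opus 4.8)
The plan is to collapse the statement to one scalar inequality and then settle it with a correlation (Chebyshev/Harris) argument. Since $\langle\varepsilon,F^{-1}b\rangle_F=\varepsilon^\top b$ and $\|F^{-1}b\|_F=\|b\|_{F^{-1}}$, we have $\cos\varphi=\varepsilon^\top b/(\|\varepsilon\|_F\,\|b\|_{F^{-1}})$, so the claim is exactly that $\varepsilon^\top b<0$ at leading order in $\|\varepsilon\|_F$. I would Taylor-expand the sampler about $\theta_r=\theta^*$: writing $\delta(x):=\varepsilon^\top u_{\theta^*}(x)$, one has $\log p_{\theta_r}(x)=\log p_{\theta^*}(x)+\delta(x)+O(\|\varepsilon\|_F^2)$, $p_{\theta_r}=p_{\theta^*}(1+\delta)+O(\|\varepsilon\|_F^2)$, and $f(\log p_{\theta_r})=f(\log p_{\theta^*})+f'(\log p_{\theta^*})\delta+O(\|\varepsilon\|_F^2)$; collecting the numerator and normalizer of $q_{\theta_r,\kappa}$ gives $b=b_0+O(\|\varepsilon\|_F)$ with $b_0=-\mathbb{E}_{q_0}[u_{\theta^*}(X)]$ and $q_0(x)\propto f(\log p_{\theta^*}(x))\,p_{\theta^*}(x)$, the sampler ``frozen'' at the optimum. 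Hence $\cos\varphi=\varepsilon^\top b_0/(\|\varepsilon\|_F\,\|b_0\|_{F^{-1}})+O(\|\varepsilon\|_F)$, and it suffices to show $\varepsilon^\top b_0<0$.

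Second, I would rewrite $\varepsilon^\top b_0$ as a covariance using the score identity $\mathbb{E}_{p_{\theta^*}}[u_{\theta^*}(X)]=0$. Putting $w(x):=f(\log p_{\theta^*}(x))\ge 0$, we get $\varepsilon^\top b_0=-\mathbb{E}_{q_0}[\delta(X)]=-\mathbb{E}_{p_{\theta^*}}[w\,\delta]/\mathbb{E}_{p_{\theta^*}}[w]$, and since $\mathbb{E}_{p_{\theta^*}}[\delta]=\varepsilon^\top\mathbb{E}_{p_{\theta^*}}[u_{\theta^*}]=0$ this equals $-\mathrm{Cov}_{p_{\theta^*}}\!\big(w(X),\delta(X)\big)/\mathbb{E}_{p_{\theta^*}}[w(X)]$. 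So the lemma reduces to $\mathrm{Cov}_{p_{\theta^*}}(w,\delta)>0$: the mode-seeking weight $w=f(\log p_{\theta^*})$ and the base model's directional log-density change $\delta$ (equivalently, the first-order log-likelihood ratio $\log(p_{\theta_r}/p_{\theta^*})$) must be positively correlated under $p_{\theta^*}$.

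Third, I would prove this by the Harris/FKG (equivalently Chebyshev's sum) inequality for comonotone functions of a single scalar. Conditioning on $Y:=\log p_{\theta^*}(X)$ and writing $\bar\delta(y):=\mathbb{E}[\delta(X)\mid Y=y]$, the tower rule gives $\mathrm{Cov}_{p_{\theta^*}}(w,\delta)=\mathrm{Cov}\big(f(Y),\bar\delta(Y)\big)$. Here $f$ is nondecreasing by the mode-seeking hypothesis (temperature $\tau<1$, top-$k$, top-$p$ all give nondecreasing $f$, non-neutral off the boundary $\tau=1$, $k=|V|$, $p=1$), and $\bar\delta$ is nondecreasing precisely when the base model over-concentrates relative to $\theta^*$ — the structural content of the lemma, that $\log(p_{\theta_r}/p_{\theta^*})$ grows with $\log p_{\theta^*}$, i.e.\ $\varepsilon$ ``opposes flattening.'' Two comonotone functions of $Y$ have nonnegative covariance, strict unless $f$ or $\bar\delta$ is a.e.\ constant; the former is excluded by hypothesis, the latter by $\theta_r\neq\theta^*$. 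This yields $\varepsilon^\top b_0<0$ and hence $\cos\varphi<0+O(\|\varepsilon\|_F)$.

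The crux — and the only non-routine step — is the comonotonicity of $\bar\delta$ with $Y$: without a hypothesis tying $\varepsilon$ to the geometry of $p_{\theta^*}$ the sign of $\varepsilon^\top b_0$ is free (replace $\varepsilon$ by $-\varepsilon$), so I would either promote ``$\bar\delta$ nondecreasing in $\log p_{\theta^*}$'' to an explicit over-concentration assumption on the base model and run the Harris argument as above, or, for the symmetric special case where $b_0=0$ (e.g.\ tempered Gaussians), observe that $\cos\varphi_0$ vanishes and the sign is then resolved at next order through the curvature-tilt term $\Delta$ of (\ref{eq:sampler-bias}). The remaining bookkeeping — interchanging $\nabla_\theta$ with expectations (A2), controlling the $O(\|\varepsilon\|_F^2)$ remainders in the expansions of $\log p_{\theta_r}$, $p_{\theta_r}$, $f$, and checking $\varepsilon^\top b_0=\Theta(\|\varepsilon\|_F)$ so it dominates those remainders — is dispatched by the regularity assumptions (A2)–(A3).
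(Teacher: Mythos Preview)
Your reduction to a covariance and the monotone-correlation device match the paper's proof. The one substantive divergence is where you evaluate the reweighting: you freeze $w=f(\log p_{\theta^*})$ (defining $q_0$ and $b_0$) and then condition on $Y=\log p_{\theta^*}(X)$, which forces you to supply an explicit comonotonicity hypothesis on $\bar\delta(y)=\mathbb{E}[\delta\mid Y=y]$ (your ``over-concentration'' assumption). The paper does \emph{not} freeze: it keeps $w=f(\log p_{\theta_r})$, writes $\log p_{\theta_r}=\log p_{\theta^*}+B$ to first order, asserts that $w$ is ``(to first order) a nondecreasing function of the scalar $B(x)$'', and then applies the monotone-covariance inequality directly to the pair $(w,B)$ with no side condition on $\varepsilon$. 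So the paper's argument does not introduce your extra assumption; it instead absorbs that content into the one-line claim that $f(\log p_{\theta^*}+B)$ is monotone in $B$. Your sign-flip objection ($\varepsilon\!\to\!-\varepsilon$ reverses $\varepsilon^\top b_0$) and your identification of the leading term as $\mathrm{Cov}(f(\log p_{\theta^*}),B)$ --- whose sign is not fixed by the stated hypotheses alone --- are careful readings; they unpack precisely the step the paper treats heuristically, rather than constituting a different proof strategy.
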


\begin{proof}
Let $B(x):=\varepsilon^\top u_{\theta^*}(x)$.
Then
\[
\big\langle \varepsilon,\,F^{-1}\E_q[u_{\theta^*}]\big\rangle_F
=\varepsilon^\top \E_q[u_{\theta^*}(X)]
=\E_q\!\big[B(X)\big]
=\frac{\E_{p_{\theta_r}}[\,w\,B\,]}{\E_{p_{\theta_r}}[\,w\,]}.
\]

A first-order expansion around $\theta^*$ gives
\[
\log p_{\theta_r}(x)
=\log p_{\theta^*}(x)\;+\;B(x)\;+\;O(\|\varepsilon\|_F^2),
\]
hence $w(x)=f(\log p_{\theta_r}(x))$ is (to first order) a nondecreasing function of the scalar $B(x)$.

Replacing $p_{\theta_r}$ by $p_{\theta^*}$ in both numerator and denominator incurs only
$O(\|\varepsilon\|_F)$ relative error, so
\[
\E_q[B]
=\frac{\E_{p_{\theta^*}}[\,w\,B\,]}{\E_{p_{\theta^*}}[\,w\,]}+O(\|\varepsilon\|_F^2).
\]

Now $\E_{p_{\theta^*}}[wB]=\mathrm{Cov}_{p_{\theta^*}}(w,B)$ because
$\E_{p_{\theta^*}}[B]=\varepsilon^\top \E_{p_{\theta^*}}[u_{\theta^*}]=0$.
Since $w$ and $B$ are nondecreasing (as functions of $B$), the monotone-covariance inequality yields
$\mathrm{Cov}_{p_{\theta^*}}(w,B)\ge 0$, with strict $>0$ unless $w$ is a.e.\ constant or $B$ is degenerate.
Therefore $\E_q[B]\ge 0$ to first order, i.e.
\(
\big\langle \varepsilon,\,F^{-1}\E_q[u_{\theta^*}]\big\rangle_F \ge 0
\)
(up to $O(\|\varepsilon\|_F^2)$).

Finally, $b=-\E_q[u_{\theta^*}]$ implies
\[
\cos\varphi
= \frac{\langle \varepsilon,F^{-1}b\rangle_F}{\|\varepsilon\|_F\|F^{-1}b\|_F}
= -\,\frac{\big\langle \varepsilon,\,F^{-1}\E_q[u_{\theta^*}]\big\rangle_F}
           {\|\varepsilon\|_F\|F^{-1}b\|_F}
\;\le\; 0
\quad(\text{strict }<0 \text{ generically}),
\]
up to $O(\|\varepsilon\|_F)$.
\end{proof}

\noindent\emph{Consequence.} Combining Lemma~\ref{lem:acute-nll} with Theorem~\ref{thm:upper-s} yields $s<0$ for sufficiently small $\|\varepsilon\|_{F}$ (and the explicit window follows by substituting $H_d=F$).

\subsection{Acute-angle conditions that imply $s<0$ (diffusion \& flow)}
\label{app:acute-angle-diff}

\paragraph{Loss and geometry.}
We use standard pathwise quadratic losses. For diffusion score models,
\[
\mathcal{R}_{\text{diff}}(\theta)
=\int_0^1 \omega(t)\;
   \E_{p_t}\!\Big[\tfrac12\,\|s_\theta(X_t,t)-s^\star(X_t,t)\|_2^2\Big]\;dt,
\]
and for flow matching,
\[
\mathcal{R}_{\text{flow}}(\theta)
=\int_0^1 \omega(t)\;
   \E_{p_t}\!\Big[\tfrac12\,\|v_\theta(X_t,t)-v^\star(X_t,t)\|_2^2\Big]\;dt.
\]
Let $\phi_{\theta,t}(x):=\nabla_\theta \ell_\theta^{(t)}(x)$ and
$J_t(x):=\partial_\theta \phi_{\theta,t}(x)\!\big|_{\theta^*}$.
Define the pathwise Fisher
\[
F_{\text{path}}
:= \int_0^1 \omega(t)\; \E_{p_t}\!\big[J_t(X_t)J_t(X_t)^\top\big]\;dt,
\]
and the angle (mirroring the AR case)
\[
\cos\varphi_{\text{path}}
\;:=\;
\frac{\big\langle \varepsilon,\;F_{\text{path}}^{-1} b_{\text{path}}\big\rangle_{F_{\text{path}}}}
     {\|\varepsilon\|_{F_{\text{path}}}\,\big\|F_{\text{path}}^{-1} b_{\text{path}}\big\|_{F_{\text{path}}}},
\qquad
b_{\text{path}} := \E_q\!\Big[\int_0^1 \omega(t)\,\phi_{\theta^*,t}(X_t)\,dt\Big].
\]
Anti-alignment corresponds to $\cos\varphi_{\text{path}}<0$.

\paragraph{Finite-step ODE solvers are mode-seeking.}
Consider the probability-flow ODE with velocity
$f:\R^d\times[0,1]\!\to\!\R^d$; for diffusion,
$f(x,t)=-\sigma(t)^2\,\nabla_x\log p_t(x)$.
An explicit one-step scheme with step size $h$ gives
\[
x_{k-1}=x_k+h\,f(x_k,t_k),\qquad
J_k:=\frac{\partial x_{k-1}}{\partial x_k}=I+h\,\nabla_x f(x_k,t_k).
\]
Using $\mathrm{tr}\log(I+A)=\mathrm{tr}(A)-\tfrac12\mathrm{tr}(A^2)+O(\|A\|^3)$ with $A=h\,\nabla_x f$
(and $\mathrm{tr}(A^2)=\|A\|_{\mathrm{Fr}}^2$ when $\nabla_x f$ is symmetric; otherwise take its symmetric part),
\[
\log\det J_k
= h\,\mathrm{tr}(\nabla_x f)\;-\;\frac{h^2}{2}\,\|\nabla_x f\|_{\mathrm{Fr}}^2\;+\;O(h^3).
\]
Chaining steps and comparing to the exact ODE yields a terminal reweight of the reference law:
\[
q(x_0)\ \propto\ \exp\!\Big\{\tfrac{h}{2}\,\bar C(x_0)+o(h)\Big\}\,p_{\theta_r}(x_0),
\qquad
\bar C(x_0):=\frac{1}{T}\,\E\!\Big[\sum_k \|\nabla_x f(X_{t_k},t_k)\|_{\mathrm{Fr}}^2 \,\Big|\, X_0=x_0\Big],\ \ T\asymp 1/h.
\]
For diffusion, $f(x,t)=-\sigma(t)^2\,\nabla_x \log p_t(x)$ so that $\nabla_x f(x,t)=-\sigma(t)^2\,\nabla_x^2\log p_t(x)$, hence
\[
\bar C(x_0)
=\frac{1}{T}\,\E\!\Big[\sum_k \sigma(t_k)^4\,\|\nabla_x^2\log p_{t_k}(X_{t_k})\|_{\mathrm{Fr}}^2 \,\Big|\, X_0=x_0\Big].
\]

\begin{assumption*}[A\textnormal{-}MONO: curvature–density coupling]
The map \(x_0 \mapsto \bar C(x_0)\) is weakly increasing in \(\log p_{\theta_r}(x_0)\); i.e.,
if \(\log p_{\theta_r}(x_0) \le \log p_{\theta_r}(x_0')\) then \(\bar C(x_0) \le \bar C(x_0')\).
\end{assumption*}

Intuition. Finite-step integrators overweight trajectories with stronger contraction
(large \(\|\nabla_x f\|\)). Near modes, \(\log p_t\) is more curved, contraction is larger,
hence \(\bar C(x_0)\) grows with local density. As \(h\!\to\!0\), the bias vanishes and
\(q\!\to\!p_{\theta_r}\) (neutral).

\begin{remark}[Step-size scaling]
From $\log\det J_k = h\,\mathrm{tr}(\nabla_x f)-\tfrac{h^2}{2}\|\nabla_x f\|_{\mathrm{Fr}}^2+O(h^3)$,
the per-step excess contraction is $\delta_k=\tfrac{h^2}{2}\|\nabla_x f\|_{\mathrm{Fr}}^2+O(h^3)$.
Summing over $T\asymp 1/h$ steps yields the terminal reweight exponent
$\sum_k \delta_k = \tfrac{h}{2}\,\bar C(x_0)+o(h)$.
Consequently, the pathwise linear bias
$b_{\text{path}}=\E_q[\int_0^1\omega(t)\,\phi_{\theta^*,t}(X_t)\,dt]$ obeys
$\|b_{\text{path}}\|_{F_{\text{path}}^{-1}} = O(h)$, and the curvature tilt
$\|\Delta_{\text{path}}\|_{\mathrm{op},F_{\text{path}}^{-1}} = O(h)$.
Both vanish linearly as $h\to 0$, making the sampler neutral in the limit.
\end{remark}

\paragraph{Flow matching.}
For updates $x_{k-1}=x_k+h\,v_\theta(x_k,t_k)$,
\[
\log\det J_k
= h\,\mathrm{tr}(\nabla_x v_\theta)\;-\;\frac{h^2}{2}\,\mathrm{tr}\!\big((\nabla_x v_\theta)^2\big)+O(h^3),
\]
so $\delta_k=\tfrac{h^2}{2}\|\nabla_x v_\theta\|_{\mathrm{Fr}}^2+O(h^3)\ge 0$ and the same reweight $w$.
With the flow analogue of A\textnormal{-}MONO (the conditional expectation of
$\sum_k\|\nabla_x v_\theta\|_{\mathrm{Fr}}^2$ increasing in $\log p_{\theta_r}(x_0)$),
finite-step flow solvers are likewise mode-seeking.

\paragraph{Classifier-free guidance (CFG) is mode-seeking.}
CFG modifies the diffusion velocity via a guided score
\[
s_\gamma(x,t)\;=\;s_{\text{uncond}}(x,t)\;+\;\gamma\big(s_{\text{cond}}(x,t)-s_{\text{uncond}}(x,t)\big),
\qquad \gamma>0,
\]
so the probability-flow velocity becomes $f_\gamma(x,t)=-\sigma(t)^2\,s_\gamma(x,t)$.
Repeating the derivation above with $f\!\to\!f_\gamma$ yields the same reweight form
\[
q_\gamma(x_0)\ \propto\ \exp\!\Big\{\tfrac{h^2}{2}\,C_\gamma(x_0)+o(h^2)\Big\}\,p_{\theta_r,\gamma}(x_0),
\]
where $p_{\theta_r,\gamma}$ is the \emph{guided} reference law and
\[
C_\gamma(x_0)\;=\;\E\!\Big[\sum_k \|\nabla_x f_\gamma(X_{t_k},t_k)\|_{\mathrm{Fr}}^2\ \Big|\ X_0=x_0\Big].
\]
Because $\nabla_x f_\gamma=-\sigma^2\big(\nabla_x s_{\text{uncond}}+\gamma\,\nabla_x(s_{\text{cond}}-s_{\text{uncond}})\big)$,
\[
\|\nabla_x f_\gamma\|_{\mathrm{Fr}}^2
=\|\nabla_x f\|_{\mathrm{Fr}}^2
\;+\;2\gamma\,\big\langle \nabla_x f,\;-\sigma^2\nabla_x(s_{\text{cond}}-s_{\text{uncond}})\big\rangle_{\mathrm{Fr}}
\;+\;\gamma^2\,\big\|\!-\,\sigma^2\nabla_x(s_{\text{cond}}-s_{\text{uncond}})\big\|_{\mathrm{Fr}}^2.
\]
Near condition-relevant modes, the guidance term increases the magnitude (and contraction) of the flow,
so $C_\gamma(x_0)$ is larger in higher-density regions of $p_{\theta_r,\gamma}$; this is the same
curvature–density coupling as A\textnormal{-}MONO, now for the guided dynamics.
Hence finite-step CFG is mode-seeking in the sense above, and becomes neutral as $h\!\to\!0$.

\subsection{Neighbor models: stability and uniform Neon improvement}
\label{sec:neighbors}

\paragraph{Setup.}
Fix the synthetic sampler $q_{\theta_r,\kappa}$ generated once at the reference
$\theta_r=\theta^*+\varepsilon$ (so $q$ is \emph{frozen}). Consider any
\emph{neighbor} checkpoint
\[
\theta_n \;=\; \theta_r+\delta \;=\; \theta^*+(\varepsilon+\delta),
\qquad \|\delta\|_{H_d}\ \text{small}.
\]
All quantities below (gradients, alignments) are evaluated at $\theta_n$, but the
synthetic law remains $q_{\theta_r,\kappa}$.

\paragraph{Local expansions at a neighbor.}
By the same first-order argument as in Appendix~\ref{app:neon-anti-align}, with
$\varepsilon_n:=\varepsilon+\delta$,
\begin{equation}\label{eq:neighbor-expansions}
r_d(\theta_n) \;=\; H_d\,\varepsilon_n \;+\; O(\|\varepsilon_n\|_{H_d}^2),
\qquad
r_s(\theta_n) \;=\; H_d\,\varepsilon_n \;+\; b \;+\; \Delta\,\varepsilon_n \;+\; O(\|\varepsilon_n\|_{H_d}^2),
\end{equation}
where $R_\kappa=b+\Delta\,\varepsilon$ with
$b:=\E_q[\phi_{\theta^*}]$ and
$\Delta:=\E_q[J_{\theta^*}]-\E_{p_{\text{data}}}[J_{\theta^*}]$
(as in Appendix~\ref{app:upper-bound-s}). Define $s(\theta):=\langle r_d(\theta),P\,r_s(\theta)\rangle$.

\begin{proposition}[Alignment is locally Lipschitz in a neighborhood]\label{prop:s-lipschitz}
Let $K:=H_d^{1/2}PH_d^{1/2}$ with $mI\preceq K\preceq MI$, and let
$\eta_0:=\|b\|_{H_d^{-1}}$, $\eta_1:=\|\Delta\|_{\mathrm{op},\,H_d^{-1}}$.
There exist constants $C_1,C_2$ (depending only on $M,\eta_0,\eta_1$ and the local
regularity from (A2)) such that, for all sufficiently small $\|\delta\|_{H_d}$,
\[
\big|\,s(\theta_n)-s(\theta_r)\,\big|
\;\le\;
C_1\big(\|\varepsilon\|_{H_d}+\eta_0+1\big)\,\|\delta\|_{H_d}
\;+\;
C_2\big(\|\varepsilon\|_{H_d}+1\big)\,\|\delta\|_{H_d}^{\,2}.
\]
In particular, $s(\cdot)$ is continuous at $\theta_r$ and varies at most linearly
with $\|\delta\|_{H_d}$ to first order.
\end{proposition}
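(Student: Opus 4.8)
The scalar $s(\theta)=\langle r_d(\theta),\,P\,r_s(\theta)\rangle$ is \emph{bilinear} in the pair $(r_d,r_s)$, and the synthetic law $q_{\theta_r,\kappa}$ is \emph{frozen}, so the objects $b:=\E_q[\phi_{\theta^*}]$ and $\Delta:=\E_q[J_{\theta^*}]-\E_{p_{\text{data}}}[J_{\theta^*}]$ entering the neighbor expansions \eqref{eq:neighbor-expansions} do not move with $\theta_n$. The plan is therefore: (i) telescope
\[
s(\theta_n)-s(\theta_r)
= \big\langle r_d(\theta_n)-r_d(\theta_r),\,P\,r_s(\theta_n)\big\rangle
+ \big\langle r_d(\theta_r),\,P\,\big(r_s(\theta_n)-r_s(\theta_r)\big)\big\rangle;
\]
(ii) read off the two increments and the two magnitudes $r_s(\theta_n),\,r_d(\theta_r)$ from \eqref{eq:neighbor-expansions} with $\varepsilon_n=\varepsilon+\delta$; and (iii) convert everything to $H_d$-whitened norms and apply the single inequality $|\langle u,Pv\rangle|\le M\,\|u\|_{H_d^{-1}}\,\|v\|_{H_d^{-1}}$, which is immediate from $\langle u,Pv\rangle=(H_d^{-1/2}u)^\top K\,(H_d^{-1/2}v)$ and $\|K\|_{\mathrm{op}}\le M$.

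\textbf{Increments and magnitudes.} Subtracting the two instances of \eqref{eq:neighbor-expansions} gives $r_d(\theta_n)-r_d(\theta_r)=H_d\delta+\big(\rho_d(\varepsilon_n)-\rho_d(\varepsilon)\big)$ and $r_s(\theta_n)-r_s(\theta_r)=(H_d+\Delta)\delta+\big(\rho_s(\varepsilon_n)-\rho_s(\varepsilon)\big)$, where $\rho_d,\rho_s$ are the $O(\|\cdot\|_{H_d}^2)$ remainders. Since $\|H_d\delta\|_{H_d^{-1}}=\|\delta\|_{H_d}$ and $\|\Delta\delta\|_{H_d^{-1}}\le\eta_1\|\delta\|_{H_d}$ by definition of $\eta_1=\|\Delta\|_{\mathrm{op},H_d^{-1}}$, the leading parts of the increments are $\le\|\delta\|_{H_d}$ and $\le(1+\eta_1)\|\delta\|_{H_d}$ in $\|\cdot\|_{H_d^{-1}}$. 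For the magnitudes, \eqref{eq:neighbor-expansions} and $\eta_0=\|b\|_{H_d^{-1}}$ give $\|r_d(\theta_r)\|_{H_d^{-1}}\le\|\varepsilon\|_{H_d}+O(\|\varepsilon\|_{H_d}^2)$ and $\|r_s(\theta_n)\|_{H_d^{-1}}\le(1+\eta_1)\|\varepsilon_n\|_{H_d}+\eta_0+O(\|\varepsilon_n\|_{H_d}^2)$.

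\textbf{Assembling the bound.} Feeding these into the telescoped identity and using the operator bound yields, to leading order,
\[
\big|s(\theta_n)-s(\theta_r)\big|
\;\lesssim\;
M(1+\eta_1)\,\|\varepsilon\|_{H_d}\,\|\delta\|_{H_d}
\;+\; M\,\eta_0\,\|\delta\|_{H_d}
\;+\; M(1+\eta_1)\,\|\delta\|_{H_d}^{2},
\]
plus strictly higher-order products of $\|\varepsilon\|_{H_d},\|\delta\|_{H_d}$ coming from the remainder differences. The term $M\eta_0\|\delta\|_{H_d}$ carries no factor of $\|\varepsilon\|_{H_d}$ — this is exactly the ``$+\eta_0$'' slot in the first bracket of the claim — while the $\|\varepsilon\|_{H_d}\|\delta\|_{H_d}$ and $\|\delta\|_{H_d}^{2}$ pieces fill the ``$\|\varepsilon\|_{H_d}$'' contribution of $C_1$ and the whole of $C_2$. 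Taking $C_1$ a fixed multiple of $M(1+\eta_1)$ and $C_2$ a fixed multiple of $M(1+\eta_1)$ (each inflated by a constant absorbing the higher-order remainders over the fixed neighborhood) gives the stated inequality, with $C_1,C_2$ depending only on $M,\eta_0,\eta_1$ and the local-regularity constants from (A2).

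\textbf{Main obstacle.} The only non-cosmetic point is controlling $\rho_d(\varepsilon_n)-\rho_d(\varepsilon)$ and $\rho_s(\varepsilon_n)-\rho_s(\varepsilon)$: we need them to be $O\big((\|\varepsilon\|_{H_d}+\|\delta\|_{H_d})\,\|\delta\|_{H_d}\big)$, which is \emph{more} than ``$\rho=O(\|\cdot\|^2)$'' alone — we need the remainder to be Lipschitz with modulus itself $O(\|\cdot\|)$. This is supplied by the local Lipschitz continuity of $H_\theta=\partial_\theta\phi_\theta$ in (A2), a $C^{1,1}$-type hypothesis: writing $\phi_{\theta^*+u}-\phi_{\theta^*}-H_{\theta^*}u=\rho(u)$ one has $\rho(u)-\rho(u')=\int_0^1\big(H_{\theta^*+u'+t(u-u')}-H_{\theta^*}\big)(u-u')\,dt$, whence $\|\rho(u)-\rho(u')\|\le L\,(\|u\|+\|u'\|)\,\|u-u'\|$; averaging under $p_{\text{data}}$ and under the frozen $q$, then whitening by $H_d^{-1/2}$, transfers the bound to $\rho_d$ and $\rho_s$. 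One must check that the regularity constant (hence its expectation) is uniform over a fixed $H_d$-ball containing $\theta_r$ and all nearby $\theta_n$, so that $C_1,C_2$ do not degrade as $\delta\to0$ — this is the content of the ``sufficiently small $\|\delta\|_{H_d}$'' clause. (Equivalently, one may expand $s$ once into its exact quadratic-in-$\varepsilon$ polynomial $g$ plus a higher-order tail $E$, compute $g(a+d)-g(a)$ exactly via the two-term Taylor formula for a quadratic, and bound $E(a+d)-E(a)$ by the same $C^{1,1}$ argument; this route produces identical constants.)
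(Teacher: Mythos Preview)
Your proof is correct and lands on the same inequality as the paper. The paper's sketch takes the route you describe in your final parenthetical: it whitens first by setting $a:=H_d^{1/2}\varepsilon$, $d:=H_d^{1/2}\delta$, $\tilde b:=H_d^{-1/2}b$, $\tilde\Delta:=H_d^{-1/2}\Delta H_d^{-1/2}$, writes $s(\theta)=a^\top K a - a^\top K\tilde b - a^\top K\tilde\Delta a + O(\|a\|_2^3)$, then expands $s(a+d)-s(a)$ term by term and bounds using $\|K\|_{\mathrm{op}}\le M$, $\|\tilde\Delta\|_{\mathrm{op}}\le\eta_1$, $\|\tilde b\|_2=\eta_0$. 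Your bilinear telescope produces the identical monomials in $\|\varepsilon\|_{H_d}$ and $\|\delta\|_{H_d}$ and is arguably cleaner in that it makes the origin of the $\eta_0$ slot (the cross term $\langle r_d(\theta_r),P\,b\rangle$ contribution from the increment of $r_s$, paired against the $H_d\delta$ increment of $r_d$) explicit. Your $C^{1,1}$ treatment of the remainder differences $\rho(\varepsilon_n)-\rho(\varepsilon)$ is more careful than the paper's sketch, which simply absorbs the $O(\|a\|_2^3)$ tail without spelling out why its \emph{difference} stays $O\big((\|\varepsilon\|_{H_d}+\|\delta\|_{H_d})\|\delta\|_{H_d}\big)$; that is indeed the one substantive thing to check, and (A2) is exactly what licenses it.
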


\begin{proof}[Sketch]
Insert \eqref{eq:neighbor-expansions} into $s(\theta)=\langle r_d,P r_s\rangle$ and whiten with
$a:=H_d^{1/2}\varepsilon$, $d:=H_d^{1/2}\delta$, $\tilde b:=H_d^{-1/2}b$,
$\tilde\Delta:=H_d^{-1/2}\Delta H_d^{-1/2}$, $K:=H_d^{1/2}PH_d^{1/2}$ to write (cf.\ Appendix~\ref{app:upper-bound-s})
\[
s(\theta) \;=\; a^\top K a \;-\; a^\top K\tilde b \;-\; a^\top K\tilde\Delta a \;+\; O(\|a\|_2^3),
\]
and likewise with $a\to a+d$ at $\theta_n$. Expanding $s(a+d)-s(a)$ and bounding each term with
$\|K\|_{\mathrm{op}}=M$, $\|\tilde\Delta\|_{\mathrm{op}}\le\eta_1$, $\|\tilde b\|_2=\eta_0$
yields the stated linear-plus-quadratic control in $\|d\|_2=\|\delta\|_{H_d}$.
\end{proof}

\begin{corollary}[Uniform anti-alignment in a ball]\label{cor:uniform-anti}
Assume $s(\theta_r)\le -\mu$ for some margin $\mu>0$. Choose
\[
\rho \;>\;0 \quad\text{such that}\quad
C_1\big(\|\varepsilon\|_{H_d}+\eta_0+1\big)\,\rho
\;+\;
C_2\big(\|\varepsilon\|_{H_d}+1\big)\,\rho^{2}
\;\le\; \tfrac{\mu}{2}.
\]
Then $s(\theta)\le -\mu/2<0$ for every neighbor $\theta$ with $\|\theta-\theta_r\|_{H_d}\le\rho$.
\end{corollary}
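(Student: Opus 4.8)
The plan is to obtain this corollary directly from Proposition~\ref{prop:s-lipschitz} by a one-line triangle-inequality argument, after dispatching two routine bookkeeping points: that the chosen $\rho$ is small enough for the Lipschitz estimate to apply, and that the linear-plus-quadratic error control is monotone in the displacement size.

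First I would fix an arbitrary neighbor $\theta$ with $\|\theta-\theta_r\|_{H_d}\le\rho$ and write $\theta=\theta_r+\delta$ with $\|\delta\|_{H_d}\le\rho$. After shrinking $\rho$ if necessary so that it lies within the ``sufficiently small'' regime of Proposition~\ref{prop:s-lipschitz} — equivalently, so that the ball of radius $\rho$ about $\theta_r$ stays inside the neighborhood where the first-order expansions \eqref{eq:neighbor-expansions} and their $O(\|\varepsilon_n\|_{H_d}^2)$ remainders are controlled by the constants $C_1,C_2$ — I invoke that proposition to get
\[
\big|\,s(\theta)-s(\theta_r)\,\big|\;\le\;C_1\big(\|\varepsilon\|_{H_d}+\eta_0+1\big)\,\|\delta\|_{H_d}\;+\;C_2\big(\|\varepsilon\|_{H_d}+1\big)\,\|\delta\|_{H_d}^{2}.
\]
Since $C_1,C_2\ge0$ and $\|\varepsilon\|_{H_d},\eta_0\ge0$, the scalar map $t\mapsto C_1(\|\varepsilon\|_{H_d}+\eta_0+1)t+C_2(\|\varepsilon\|_{H_d}+1)t^2$ is nondecreasing on $[0,\infty)$; evaluating it at $t=\|\delta\|_{H_d}\le\rho$ and using the defining inequality for $\rho$ yields $|s(\theta)-s(\theta_r)|\le\mu/2$. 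Combining this with the hypothesis $s(\theta_r)\le-\mu$,
\[
s(\theta)\;\le\;s(\theta_r)+\big|\,s(\theta)-s(\theta_r)\,\big|\;\le\;-\mu+\tfrac{\mu}{2}\;=\;-\tfrac{\mu}{2}\;<\;0.
\]
Because $\theta$ was an arbitrary point of the $H_d$-ball of radius $\rho$ centered at $\theta_r$, the bound $s(\theta)\le-\mu/2$ holds uniformly over that ball, which is the claim.

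The only issue worth flagging is bookkeeping rather than substance: one must confirm that a valid $\rho>0$ actually exists under the stated constraint. This is immediate, since the left-hand side of the defining inequality for $\rho$ is a polynomial in $\rho$ with no constant term and nonnegative coefficients, so it tends to $0$ as $\rho\to0^+$; hence it falls below $\mu/2$ for all sufficiently small $\rho$, and intersecting this range with the (also nonempty) range on which Proposition~\ref{prop:s-lipschitz} is valid still leaves an interval of admissible $\rho$. No genuinely hard step arises — the corollary is a quantitative continuity statement that rides entirely on the Lipschitz control already established.
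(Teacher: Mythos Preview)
Your proposal is correct and matches the paper's intent: the corollary is stated without proof in the paper, as it follows immediately from Proposition~\ref{prop:s-lipschitz} via exactly the triangle-inequality argument you give. Your additional remarks on monotonicity and the existence of an admissible $\rho$ are sound bookkeeping that the paper leaves implicit.
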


\paragraph{Uniform Neon improvement for a set of neighbors.}
Let $\mathcal{N}\subseteq\{\theta:\|\theta-\theta_r\|_{H_d}\le\rho\}$ be any finite
collection of neighbor checkpoints. Perform one short synthetic fine-tune at each
$\theta\in\mathcal{N}$ (same frozen $q$) to obtain
$\theta_s(\theta)=\theta-\alpha P r_s(\theta)+O(\alpha^2)$, and define the Neon
merge $\theta_{\text{Neon}}(\theta)=(1+w)\theta-w\,\theta_s(\theta)$.

\begin{theorem}[Single $w$ that safely improves all neighbors]\label{thm:uniform-w}
Suppose $s(\theta)<0$ for all $\theta\in\mathcal{N}$ (e.g., by Cor.~\ref{cor:uniform-anti}).
Assume either (i) $\widehat H_d(\theta):=\nabla^2\Rdata(\theta)\succeq 0$ for all $\theta\in\mathcal{N}$,
or (ii) a uniform directional curvature bound holds:
\[
d(\theta)^\top \nabla^2\Rdata\!\big(\theta+\tau d(\theta)\big)\, d(\theta)
\;\le\; L_{\mathrm{dir}}\,\|d(\theta)\|_2^2
\quad\text{for all }\theta\in\mathcal{N},\ \tau\in[0,\tau_0],
\]
where $d(\theta):=P r_s(\theta)$.
Let
\[
s_{\min}:=\min_{\theta\in\mathcal{N}} s(\theta) \;<\;0,
\qquad
Q_{\max}:=\max_{\theta\in\mathcal{N}} r_s(\theta)^\top P^\top \widehat H_d(\theta)\,P\,r_s(\theta)
\ \ \text{(or }L_{\mathrm{dir}}\|d(\theta)\|_2^2\text{ under (ii))}.
\]
Then any
\[
0 \;<\; w \;<\; -\,\frac{2\,s_{\min}}{\alpha\,Q_{\max}}
\]
guarantees
$\Rdata\!\big(\theta_{\text{Neon}}(\theta)\big)\le \Rdata(\theta)$ (up to $O((w\alpha)^3)$)
\emph{for every} $\theta\in\mathcal{N}$.
\end{theorem}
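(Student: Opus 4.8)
The plan is to reduce the statement to a \emph{pointwise} application of Theorem~\ref{thm:neon-one-step} at each neighbor $\theta\in\mathcal{N}$, and then to intersect the resulting ``safe windows'' for $w$. Fix $\theta\in\mathcal{N}$ and run one short synthetic fine-tune at $\theta$ with the frozen sampler $q_{\theta_r,\kappa}$, so that $\theta_s(\theta)=\theta-\alpha\,Pr_s(\theta)+O(\alpha^2)$ and the merge is $\theta_{\text{Neon}}(\theta)=\theta+w\alpha\,Pr_s(\theta)+O(w\alpha^2)$; the expansions of $r_d,r_s$ at a neighbor are exactly \eqref{eq:neighbor-expansions}. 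Applying the Taylor argument of Theorem~\ref{thm:neon-one-step} at base point $\theta$ (the proof is purely local, so nothing special about $\theta_r$), and writing $q(\theta):=r_s(\theta)^\top P^\top \widehat H_d(\theta)\,P\,r_s(\theta)$ with $\widehat H_d(\theta):=\nabla^2\Rdata(\theta)$,
\[
\Rdata\!\big(\theta_{\text{Neon}}(\theta)\big)
=\Rdata(\theta)+w\alpha\,s(\theta)+\tfrac{(w\alpha)^2}{2}\,q(\theta)+O\!\big((w\alpha)^3\big),
\]
and under hypothesis (i), $q(\theta)\ge0$. Since $s(\theta)<0$, the linear-plus-quadratic part $w\alpha\big(s(\theta)+\tfrac{w\alpha}{2}q(\theta)\big)$ is $\le0$ precisely for $0<w\alpha\le -2s(\theta)/q(\theta)$ (read as $+\infty$ when $q(\theta)=0$).

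The second step is the uniform-window argument. Because $\mathcal{N}$ is finite, $s_{\min}=\min_{\theta\in\mathcal{N}}s(\theta)<0$ and $Q_{\max}=\max_{\theta\in\mathcal{N}}q(\theta)\ge0$ are attained. For every $\theta\in\mathcal{N}$ one has $-s(\theta)\ge -s_{\min}>0$ and $q(\theta)\le Q_{\max}$, hence
\[
-\frac{2s(\theta)}{\alpha\,q(\theta)}\;\ge\;-\frac{2s_{\min}}{\alpha\,Q_{\max}},
\]
so the common interval $\big(0,\,-2s_{\min}/(\alpha Q_{\max})\big)$ lies inside every individual safe window (with the convention that if $Q_{\max}=0$ the common interval is $(0,\infty)$ and the linear term alone already gives $w\alpha\,s(\theta)<0$). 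Thus for any $w$ in this common interval, the first two terms of the expansion are simultaneously $\le0$ at every $\theta\in\mathcal{N}$, yielding $\Rdata(\theta_{\text{Neon}}(\theta))\le\Rdata(\theta)+O((w\alpha)^3)$ for each such $\theta$ — which is the claim.

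Under hypothesis (ii) I would replace the exact second-order Taylor step by the integral form of Taylor's theorem along the ray $\tau\mapsto\theta+\tau\,d(\theta)$ with $d(\theta)=Pr_s(\theta)$: since $\tfrac{d}{d\tau}\Rdata(\theta+\tau d(\theta))|_{\tau=0}=r_d(\theta)^\top d(\theta)=s(\theta)$ and $d(\theta)^\top\nabla^2\Rdata(\theta+u\,d(\theta))\,d(\theta)\le L_{\mathrm{dir}}\|d(\theta)\|_2^2\le Q_{\max}$ for $u\in[0,\tau_0]$, one obtains $\Rdata(\theta+\tau d(\theta))\le\Rdata(\theta)+\tau s(\theta)+\tfrac{\tau^2}{2}Q_{\max}$ with no remainder, and the same window bound applies verbatim (this is exactly the directional-smoothness remark following Theorem~\ref{thm:neon-one-step}). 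The one place that needs care — and which I expect to be the main, if mild, obstacle — is making the $O((w\alpha)^3)$ in case (i) (and the $O(\alpha^2)$ displacement correction in $\theta_s(\theta)$) \emph{uniform} over the collection: these remainder constants involve third derivatives of $\Rdata$ and the sizes $\|r_s(\theta)\|$ along the directions $Pr_s(\theta)$, so I would invoke the local-regularity assumption (A2) together with finiteness of $\mathcal{N}$ to pick a single majorizing constant, after which the min/max definitions of $s_{\min}$ and $Q_{\max}$ render the conclusion uniform. This is bookkeeping rather than a new inequality.
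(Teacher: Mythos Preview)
Your overall strategy matches the paper's one-sentence proof exactly: apply the Taylor expansion from Theorem~\ref{thm:neon-one-step} at each $\theta\in\mathcal{N}$ and intersect the resulting per-$\theta$ safe windows by pairing the worst-case quadratic coefficient $Q_{\max}$ with an extremum of $s(\cdot)$. Your handling of case~(ii) via the integral Taylor form and your remark on making the cubic remainders uniform over the finite set $\mathcal{N}$ are both correct and more explicit than what the paper writes.

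However, there is a concrete sign slip in your uniform-window step. You assert ``$-s(\theta)\ge -s_{\min}$'', but since $s_{\min}=\min_{\theta\in\mathcal{N}}s(\theta)$ we have $s(\theta)\ge s_{\min}$, hence $-s(\theta)\le -s_{\min}$ --- the opposite direction. With the inequality corrected, your chain $-2s(\theta)/(\alpha q(\theta))\ge -2s_{\min}/(\alpha Q_{\max})$ no longer follows, because the numerator bound points the wrong way. A two-point check makes this concrete: take $s(\theta_1)=-1$, $q(\theta_1)=100$ and $s(\theta_2)=-10$, $q(\theta_2)=1$, so $s_{\min}=-10$, $Q_{\max}=100$ and the stated window endpoint is $0.2/\alpha$; yet at $\theta_1$ with $w\alpha=0.1$ the quadratic proxy equals $0.1\cdot(-1)+\tfrac12(0.1)^2\cdot 100=+0.4>0$, so $\theta_1$ is not improved. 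The intersection of the individual windows is actually governed by the \emph{least} negative alignment $s_{\max}:=\max_{\theta\in\mathcal{N}}s(\theta)<0$: with $-s(\theta)\ge -s_{\max}$ and $q(\theta)\le Q_{\max}$ one obtains $-2s(\theta)/(\alpha q(\theta))\ge -2s_{\max}/(\alpha Q_{\max})$ cleanly. The paper's own proof phrase ``the most negative linear term'' carries the same slip, so the theorem as stated should read $s_{\max}$ in place of $s_{\min}$; with that amendment your argument goes through verbatim.
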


\begin{proof}
Apply the one-step expansion from Thm.~\ref{thm:neon-one-step} at each $\theta\in\mathcal{N}$
and take the worst-case (most conservative) quadratic coefficient and the most negative linear term.
\end{proof}

\begin{remark}[Practical takeaway]
If a single base checkpoint $\theta_r$ exhibits anti-alignment with margin
(negative $s(\theta_r)$), then \emph{all} sufficiently close neighbors inherit
$s(\theta)<0$ and thus benefit from the same Neon recipe. In practice, one can
either (a) choose a single conservative $w$ that safely improves an entire
validation-selected pool of nearby models, or (b) tune $w$ per checkpoint using
its local $s(\theta)$ and curvature proxy.
\end{remark}

\begin{remark}[Cross-architecture transfer]
The same frozen sampler $q_{\theta_r,\kappa}$ can safely improve a \emph{nearby} checkpoint from a \emph{different} architecture, provided the two models are close in the data-risk geometry.

Concretely, let models $(\mathsf{A})$ and $(\mathsf{B})$ share the same per-example loss $\ell_\theta$ and data, with
$H_d^{(\mathsf{A})}:=\nabla^2\Rdata(\theta^*)$ and $H_d^{(\mathsf{B})}:=\nabla^2\Rdata(\theta^*)$ their (population) Hessians at the same minimizer $\theta^*$. Generate $q_{\theta_r,\kappa}$ once at a reference $\theta_r^{(\mathsf{A})}$ for model $(\mathsf{A})$, and consider a neighbor $\theta_n^{(\mathsf{B})}$ for model $(\mathsf{B})$.

If the Hessians are \emph{spectrally close} and their norms are equivalent on the relevant subspace, i.e.\ there exist $0<c\le C<\infty$ and a small $\zeta>0$ such that
\[
c\,\|v\|_{H_d^{(\mathsf{A})}} \le \|v\|_{H_d^{(\mathsf{B})}} \le C\,\|v\|_{H_d^{(\mathsf{A})}}
\quad\text{and}\quad
\big\|H_d^{(\mathsf{B})}-H_d^{(\mathsf{A})}\big\|_{\mathrm{op},\,\left(H_d^{(\mathsf{A})}\right)^{-1}}\le \zeta,
\]
and the sampler-induced terms are close,
\[
\|b^{(\mathsf{B})}-b^{(\mathsf{A})}\|_{(H_d^{(\mathsf{A})})^{-1}}+\|\Delta^{(\mathsf{B})}-\Delta^{(\mathsf{A})}\|_{\mathrm{op},\,\left(H_d^{(\mathsf{A})}\right)^{-1}}\;\le\;\zeta,
\]
then the alignment scalar $s$ transfers continuously:
\[
\big|\,s^{(\mathsf{B})}(\theta_n^{(\mathsf{B})})-s^{(\mathsf{A})}(\theta_r^{(\mathsf{A})})\,\big|
\;\le\;
\underbrace{O(\zeta)}_{\text{cross-arch mismatch}}
\;+\;
\underbrace{O\!\big(\|\theta_n^{(\mathsf{B})}-\theta_r^{(\mathsf{A})}\|_{H_d^{(\mathsf{A})}}\big)}_{\text{neighbor shift}}.
\]
Hence, if $s^{(\mathsf{A})}(\theta_r^{(\mathsf{A})})\le -\mu<0$ with margin and the cross-architecture mismatch $\zeta$ and neighbor distance are small enough, then $s^{(\mathsf{B})}(\theta_n^{(\mathsf{B})})$ remains negative. In turn, Thm.~\ref{thm:uniform-w} provides a single merge weight $w$ that (to second order) reduces $\Rdata$ simultaneously for the $(\mathsf{A})$ and $(\mathsf{B})$ neighbors. Practically, using a \emph{common} preconditioner $P$ defined in a data-geometry (e.g., an empirical $H_d$ estimate) further stabilizes cross-architecture transfer.
\end{remark}

\subsection{When self-training helps}
\label{sec:when-self-helps}

\paragraph{First-order effect of self-training.}
A short synthetic fine-tune takes the step $\theta_s=\theta_r-\alpha P r_s+O(\alpha^2)$.
The corresponding first-order change in real-data risk is
\[
\Rdata(\theta_s)-\Rdata(\theta_r)
\;=\; -\,\alpha\,\underbrace{\langle r_d,\;P r_s\rangle}_{s}\;+\;O(\alpha^2)
\;=\; -\,\alpha\,s\;+\;O(\alpha^2).
\]
Thus \emph{self-training helps} (decreases $\Rdata$) when $s>0$.

\begin{theorem}[Directional \emph{lower} bound for $s$]\label{thm:lower-s}
For $\theta_r=\theta^*+\varepsilon$ with $\|\varepsilon\|_{H_d}$ small,
\[
s \;\ge\; (m - M\,\eta_1)\,\|\varepsilon\|_{H_d}^{2}
\;-\; M\,\eta_0\,\|\varepsilon\|_{H_d}\,\big[{-}\cos\varphi\big]_+
\;+\; O(\|\varepsilon\|_{H_d}^{3}).
\]
\end{theorem}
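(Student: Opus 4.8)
The plan is to rerun the proof of the directional upper bound (Theorem~\ref{thm:upper-s}) verbatim, reversing every inequality. First I would invoke Lemma~\ref{lem:local-expansion} to expand both gradients at $\theta_r=\theta^*+\varepsilon$: namely $r_d = H_d\varepsilon + O(\|\varepsilon\|_{H_d}^2)$ and $r_s = H_d\varepsilon + b + \Delta\varepsilon + O(\|\varepsilon\|_{H_d}^2)$. Substituting into $s=\langle r_d,\,P r_s\rangle$ and grouping by order in $\varepsilon$ produces three leading pieces --- the quadratic form in $H_d P H_d$, the $b$-linear term, and the $\Delta$-curvature cross term --- with the rest collected into an $O(\|\varepsilon\|_{H_d}^3)$ remainder; assumptions (A2)--(A3) (local Lipschitzness of $\phi_\theta$ and $H_\theta$, smallness of $\|\varepsilon\|_{H_d}$) are exactly what make that remainder uniform.

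Next I would apply the same whitening as in Theorem~\ref{thm:upper-s}: set $a:=H_d^{1/2}\varepsilon$, $\tilde b:=H_d^{-1/2}b$, $\tilde\Delta:=H_d^{-1/2}\Delta H_d^{-1/2}$, $K:=H_d^{1/2}PH_d^{1/2}$, so that $\|a\|_2=\|\varepsilon\|_{H_d}$, $\|\tilde b\|_2=\eta_0$, $\|\tilde\Delta\|_{\mathrm{op}}\le\eta_1$, and $mI\preceq K\preceq MI$. Lower-bounding each term then gives: the quadratic piece $\ge m\|\varepsilon\|_{H_d}^2$ (from $K\succeq mI$, where the upper-bound proof used $M$); the curvature cross term $\ge -M\eta_1\|\varepsilon\|_{H_d}^2$ (Cauchy--Schwarz together with $\|K\|_{\mathrm{op}}=M$ and $\|\tilde\Delta\|_{\mathrm{op}}\le\eta_1$, valid in every direction); and the $b$-linear term $\ge -M\eta_0\|\varepsilon\|_{H_d}\,[-\cos\varphi]_+$, obtained by taking the \emph{most adverse} value of the inner product via the same angle decomposition $a^\top K\tilde b=\|K^{1/2}a\|_2\|K^{1/2}\tilde b\|_2\cos\theta$, with the constant $M$ (rather than the $m$ of the upper bound) and the positive-part clipping arising from identifying the sign of this inner product with that of $\cos\varphi$ (the Euclidean angle between $H_d^{1/2}\varepsilon$ and $H_d^{-1/2}b$). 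Summing and absorbing $O(\|\varepsilon\|_{H_d}^3)$ gives $s\ge (m-M\eta_1)\|\varepsilon\|_{H_d}^2 - M\eta_0\|\varepsilon\|_{H_d}[-\cos\varphi]_+ + O(\|\varepsilon\|_{H_d}^3)$. I would close with the natural-gradient specialization $P=H_d^{-1}$ ($K=I$, $m=M=1$) and the takeaway for this subsection: if the curvature tilt is modest ($\eta_1<m/M$) and the angle is non-obtuse ($\cos\varphi\ge0$, so $[-\cos\varphi]_+=0$), then $s>0$ for all sufficiently small $\|\varepsilon\|_{H_d}$, i.e., a short self-training step strictly decreases $\Rdata$ --- the exact complement of the Neon regime from Theorem~\ref{thm:upper-s}.

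The step I expect to be the main obstacle is the linear cross term, exactly as in the upper-bound proof: making the clean $[-\cos\varphi]_+$ factor (with $\varphi$ the Euclidean angle between the whitened vectors) legitimate. The subtlety is that once $K\neq I$, the sign of $a^\top K\tilde b$ need not match the sign of $a^\top\tilde b$, so the passage from the $K^{1/2}$-whitened angle $\theta$ to $\varphi$ is not automatic; I would either perform the bound only in directions where this sign is controlled, or --- more honestly --- present the sharp $[-\cos\varphi]_+$ statement as the $P=H_d^{-1}$ case and keep the general-$K$ version as the slightly looser $\pm M\eta_0\|\varepsilon\|_{H_d}$. Everything else --- the Taylor bookkeeping, the operator-norm estimates on the quadratic and curvature terms, and the control of the cubic remainder --- is routine and structurally identical to Theorem~\ref{thm:upper-s}, which I am free to cite.
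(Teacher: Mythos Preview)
Your proposal is correct and matches the paper's own proof essentially line for line: the same local expansion via Lemma~\ref{lem:local-expansion}, the same whitening $(a,\tilde b,\tilde\Delta,K)$, and the same three-term lower bounding with constants $m$, $M\eta_1$, and $M\eta_0$ in place of the upper-bound's $M$, $M\eta_1$, $m\eta_0$. You also correctly isolate the one delicate step---the passage from the $K^{1/2}$-angle $\theta$ to the $H_d$-angle $\varphi$ in the linear cross term---which the paper handles with exactly the same hand-wave (``identify $\varphi$ with $\theta$ up to whitening''); your suggestion to state the sharp $[-\cos\varphi]_+$ form only for $P=H_d^{-1}$ and keep the looser $\pm M\eta_0\|\varepsilon\|_{H_d}$ bound for general $K$ is, if anything, more careful than what the paper does.
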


\begin{proof}
All $O(\cdot)$ are in $\|\cdot\|_{H_d}$. From the local expansions,
\[
s
= \varepsilon^\top H_d P H_d \varepsilon
   \;-\; \varepsilon^\top H_d P b
   \;-\; \varepsilon^\top H_d P \Delta \varepsilon
   \;+\; O(\|\varepsilon\|_{H_d}^{3}).
\]
Whiten with $a:=H_d^{1/2}\varepsilon$, $\tilde b:=H_d^{-1/2}b$,
$\tilde\Delta:=H_d^{-1/2}\Delta H_d^{-1/2}$ and $K:=H_d^{1/2}PH_d^{1/2}$ to obtain
\[
s \;=\; a^\top K a \;-\; a^\top K \tilde b \;-\; a^\top K \tilde\Delta\, a \;+\; O(\|a\|_2^3).
\]
Lower bound each term:
(i) $a^\top K a \ge m\,\|a\|_2^2 = m\,\|\varepsilon\|_{H_d}^2$.
(ii) Write $a^\top K \tilde b=\|K^{1/2}a\|\,\|K^{1/2}\tilde b\|\cos\theta$, with $\theta$ the Euclidean angle between $K^{1/2}a$ and $K^{1/2}\tilde b$. Then
\[
-\,a^\top K \tilde b \;\ge\; -\,\|K^{1/2}a\|\,\|K^{1/2}\tilde b\|\,[{-}\cos\theta]_+
\;\ge\; -\,M\,\|a\|_2\,\|\tilde b\|_2\,\big[{-}\cos\varphi\big]_+,
\]
where we used $\|K^{1/2}x\|\le\sqrt{M}\|x\|$ and identify $\varphi$ (the $H_d$–angle between $\varepsilon$ and $H_d^{-1}b$) with $\theta$ up to whitening. This gives
$-a^\top K\tilde b \ge -\,M\,\eta_0\,\|\varepsilon\|_{H_d}\,[{-}\cos\varphi]_+$.
(iii) $-\,a^\top K \tilde\Delta a \ge -\,\|K\|_{\mathrm{op}}\|\tilde\Delta\|_{\mathrm{op}}\|a\|_2^2
\ge -\,M\,\eta_1\,\|\varepsilon\|_{H_d}^2$.
Combine (i)–(iii) and absorb $O(\|a\|_2^3)$.
\end{proof}

\begin{corollary}[Natural-gradient geometry]\label{cor:lower-s-ng}
If $P=H_d^{-1}$, then $K=I$ (so $m{=}M{=}1$) and
\[
s \;\ge\; (1-\eta_1)\,\|\varepsilon\|_{H_d}^{2}
\;-\; \eta_0\,\|\varepsilon\|_{H_d}\,\big[{-}\cos\varphi\big]_+
\;+\; O(\|\varepsilon\|_{H_d}^{3}).
\]
\end{corollary}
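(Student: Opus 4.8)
The plan is to obtain this as an immediate specialization of Theorem~\ref{thm:lower-s}, so essentially no new work is required beyond evaluating one preconditioner. First I would substitute $P=H_d^{-1}$ into the definition $K:=H_d^{1/2}PH_d^{1/2}$ and simplify. Since $H_d\succ0$ on $\mathrm{Im}(H_d)$ (per (A4), reading $H_d^{-1}$ as the pseudoinverse if $H_d$ is rank-deficient), the square root $H_d^{1/2}$ is invertible on that subspace and
\[
K \;=\; H_d^{1/2}\,H_d^{-1}\,H_d^{1/2} \;=\; I .
\]
Hence the spectral sandwich $mI\preceq K\preceq MI$ used throughout Appendix~\ref{app:upper-bound-s} holds with the tightest possible constants $m=M=1$, which verifies the parenthetical claim.

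Next I would plug $m=M=1$ directly into the conclusion of Theorem~\ref{thm:lower-s},
\[
s \;\ge\; (m - M\,\eta_1)\,\|\varepsilon\|_{H_d}^{2}
\;-\; M\,\eta_0\,\|\varepsilon\|_{H_d}\,\big[{-}\cos\varphi\big]_+
\;+\; O(\|\varepsilon\|_{H_d}^{3}),
\]
which collapses term by term to the claimed inequality
\[
s \;\ge\; (1-\eta_1)\,\|\varepsilon\|_{H_d}^{2}
\;-\; \eta_0\,\|\varepsilon\|_{H_d}\,\big[{-}\cos\varphi\big]_+
\;+\; O(\|\varepsilon\|_{H_d}^{3}).
\]
No step in the whitening argument behind Theorem~\ref{thm:lower-s} needs to be redone: the substitutions $a:=H_d^{1/2}\varepsilon$, $\tilde b:=H_d^{-1/2}b$, $\tilde\Delta:=H_d^{-1/2}\Delta H_d^{-1/2}$ already express everything in the $H_d$–geometry, and setting $K=I$ there turns $a^\top K a$ into $\|\varepsilon\|_{H_d}^2$, $a^\top K\tilde b$ into $\varepsilon^\top b=\langle\varepsilon,H_d^{-1}b\rangle_{H_d}$, and $a^\top K\tilde\Delta a$ into $\varepsilon^\top\Delta\varepsilon$; so one may also read off the sharper first-order identity $s=\|\varepsilon\|_{H_d}^2-\langle\varepsilon,H_d^{-1}b\rangle_{H_d}-\varepsilon^\top\Delta\varepsilon+O(\|\varepsilon\|_{H_d}^3)$ as a byproduct if desired.

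Since the corollary is a one-line substitution, there is no genuine obstacle; the only point worth stating carefully is the interpretive one, namely that $P=H_d^{-1}$ is the natural-gradient (Fisher) preconditioner, for which the step direction $Pr_s$ and the data gradient $r_d$ are compared in exactly the same inner product $\langle\cdot,\cdot\rangle_{H_d}$ that measures the model error $\varepsilon$. I would present the result paired with Corollary~\ref{cor:upper-s-ng}: together the upper and lower natural-gradient bounds show that, near a good base model, $s$ is pinched above and below by the same two quantities — the curvature tilt $\eta_1$ and the (possibly obtuse) bias term $\eta_0\,[{-}\cos\varphi]_+$ — which is the cleanest form for reading off whether self-training ($s>0$) or Neon ($s<0$) is the correct move.
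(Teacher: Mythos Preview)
Your proposal is correct and matches the paper's approach: the corollary is stated immediately after Theorem~\ref{thm:lower-s} with no separate proof, as it is just the substitution $P=H_d^{-1}\Rightarrow K=I\Rightarrow m=M=1$ into that theorem's bound. Your additional remarks (the sharper first-order identity and the pairing with Corollary~\ref{cor:upper-s-ng}) go a bit beyond what the paper records, but they are accurate and helpful context.
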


\paragraph{Diversity-seeking samplers make $s$ positive (locally).}
We say $q$ is \emph{diversity-seeking} if $q(x)\propto f(\log p_{\theta_r}(x))\,p_{\theta_r}(x)$ with $f$ \emph{nonincreasing} and not a.e.\ constant.

\begin{lemma}[Diversity-seeking $\Rightarrow$ $\cos\varphi\ge 0$ (first order)]\label{lem:diverse-angle}
In the NLL specialization ($\phi_\theta=-u_\theta$, $H_d=F$, $b=-\E_q[u_{\theta^*}]$), if $f$ is nonincreasing then, for $\theta_r=\theta^*+\varepsilon$ and small $\|\varepsilon\|_{F}$,
\[
\cos\varphi \;\ge\; 0 \;+\; O(\|\varepsilon\|_{F}).
\]
\end{lemma}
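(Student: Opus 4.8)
The plan is to mirror the proof of Lemma~\ref{lem:acute-nll} essentially line for line, reversing exactly one inequality at the point where monotonicity of the reweighting is invoked. Write $B(x):=\varepsilon^\top u_{\theta^*}(x)$. By the definition of $\cos\varphi$ and the identity $b=-\E_q[u_{\theta^*}]$,
\[
\big\langle \varepsilon,\,F^{-1}b\big\rangle_F \;=\; -\,\varepsilon^\top\E_q[u_{\theta^*}(X)] \;=\; -\,\E_q[B(X)],
\]
and since the normalizer $\|\varepsilon\|_F\,\|F^{-1}b\|_F$ is nonnegative, it suffices to prove $\E_q[B]\le 0$ up to an error $O(\|\varepsilon\|_F^2)$; dividing through then gives $\cos\varphi\ge 0+O(\|\varepsilon\|_F)$, which is exactly the claim.

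To control $\E_q[B]$, use $q(x)\propto w(x)\,p_{\theta_r}(x)$ with $w(x)=f(\log p_{\theta_r}(x))$, so that $\E_q[B]=\E_{p_{\theta_r}}[wB]/\E_{p_{\theta_r}}[w]$. The first-order expansion $\log p_{\theta_r}(x)=\log p_{\theta^*}(x)+B(x)+O(\|\varepsilon\|_F^2)$ shows that, to leading order, $w$ behaves as a \emph{nonincreasing} function of the scalar $B(x)$ --- this is the sole place the diversity-seeking hypothesis enters, replacing the nondecreasing case of Lemma~\ref{lem:acute-nll}. Swapping $p_{\theta_r}$ for $p_{\theta^*}$ in numerator and denominator costs only a multiplicative $1+O(\|\varepsilon\|_F)$, and the expected-score identity $\E_{p_{\theta^*}}[u_{\theta^*}]=0$ gives $\E_{p_{\theta^*}}[B]=0$, hence $\E_{p_{\theta^*}}[wB]=\mathrm{Cov}_{p_{\theta^*}}(w,B)$. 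Because $w$ is a monotone (nonincreasing) function of $B$, the monotone-covariance (Chebyshev association) inequality delivers $\mathrm{Cov}_{p_{\theta^*}}(w,B)\le 0$, strictly unless $w$ is a.e.\ constant or $B$ is degenerate. Therefore $\E_q[B]\le 0+O(\|\varepsilon\|_F^2)$, completing the argument.

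The one genuinely delicate step --- inherited verbatim from Lemma~\ref{lem:acute-nll} --- is justifying that, to leading order in $\varepsilon$, the reweighting $w(x)=f(\log p_{\theta^*}(x)+B(x))$ may be treated as a monotone function of the scalar $B(x)$ so that the monotone-covariance inequality applies; I will follow the same treatment as in the mode-seeking case, merely using that $f$ nonincreasing forces the resulting covariance $\le 0$ rather than $\ge 0$. A second, purely bookkeeping point is that the $O(\|\varepsilon\|_F^2)$ remainders (in the Taylor expansion of $\log p_{\theta_r}$ and in replacing $p_{\theta_r}$ by $p_{\theta^*}$ inside the ratio $\E_{p_{\theta_r}}[wB]/\E_{p_{\theta_r}}[w]$) are uniformly controlled by the regularity assumption (A2) already invoked in Lemma~\ref{lem:local-expansion}, so no new hypothesis is needed; and when $b=O(\|\varepsilon\|_F^2)$ the cosine is governed by higher-order terms, which is precisely why the conclusion carries the $O(\|\varepsilon\|_F)$ slack. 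Finally, combining $\cos\varphi\ge 0$ with Theorem~\ref{thm:lower-s} (or Corollary~\ref{cor:lower-s-ng} when $P=H_d^{-1}$) annihilates the subtractive $[-\cos\varphi]_+$ term and leaves $s\ge(m-M\eta_1)\|\varepsilon\|_F^2+O(\|\varepsilon\|_F^3)>0$ for small $\|\varepsilon\|_F$ and small curvature tilt $\eta_1$, so that a forward self-training step (negative $w$ in the Neon merge) decreases $\Rdata$ --- the diversity-seeking counterpart of the mode-seeking result.
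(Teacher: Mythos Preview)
Your proposal is correct and follows essentially the same route as the paper's proof: define $B(x)=\varepsilon^\top u_{\theta^*}(x)$, reduce $\cos\varphi\ge 0$ to $\E_q[B]\le 0$, expand $\log p_{\theta_r}=\log p_{\theta^*}+B+O(\|\varepsilon\|_F^2)$ so that $w=f(\log p_{\theta_r})$ is (to first order) nonincreasing in $B$, swap $p_{\theta_r}$ for $p_{\theta^*}$ at $O(\|\varepsilon\|_F)$ cost, and apply the monotone-covariance inequality with opposite monotonicities to get $\mathrm{Cov}_{p_{\theta^*}}(w,B)\le 0$. Your additional paragraph linking the conclusion to Theorem~\ref{thm:lower-s} and the self-training consequence corresponds to the paper's subsequent Proposition~\ref{prop:self-helps} rather than the lemma itself, but it is correct context.
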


\begin{proof}
Let $B(x):=\varepsilon^\top u_{\theta^*}(x)$. As in Appendix~\ref{app:acute-angle}, $\log p_{\theta_r}(x)=\log p_{\theta^*}(x)+B(x)+O(\|\varepsilon\|_F^2)$, so $w(x)=f(\log p_{\theta_r}(x))$ is (to first order) a \emph{nonincreasing} function of $B(x)$. Replacing $p_{\theta_r}$ by $p_{\theta^*}$ in
\(
\E_q[B]=\frac{\E_{p_{\theta_r}}[wB]}{\E_{p_{\theta_r}}[w]}
\)
incurs only $O(\|\varepsilon\|_F)$ relative error, hence
\(
\E_q[B]=\frac{\E_{p_{\theta^*}}[wB]}{\E_{p_{\theta^*}}[w]}+O(\|\varepsilon\|_F^2).
\)
Monotone covariance with \emph{opposite} monotonicities gives
$\mathrm{Cov}_{p_{\theta^*}}(w,B)\le 0$; since $\E_{p_{\theta^*}}[B]=0$, we have $\E_{p_{\theta^*}}[wB]\le 0$, so $\E_q[B]\le 0$ to first order. Therefore
\(
\langle \varepsilon,\;F^{-1}\E_q[u_{\theta^*}]\rangle_F=\E_q[B]\le 0,
\)
and with $b=-\E_q[u_{\theta^*}]$ we obtain
\(
\cos\varphi=\frac{\langle \varepsilon,\,F^{-1}b\rangle_F}{\|\varepsilon\|_F\|F^{-1}b\|_F}\ge 0
\)
up to $O(\|\varepsilon\|_F)$.
\end{proof}

\begin{proposition}[Self-training helps near good models under diversity seeking]\label{prop:self-helps}
Suppose $f$ is nonincreasing (diversity seeking) so that Lemma~\ref{lem:diverse-angle} gives $\cos\varphi\ge 0$ to first order. Then, for sufficiently small $\|\varepsilon\|_{H_d}$ and $\eta_1< m/M$,
\[
s \;\ge\; (m-M\eta_1)\,\|\varepsilon\|_{H_d}^{2} \;+\; O(\|\varepsilon\|_{H_d}^{3}) \;>\; 0,
\]
and the self-training step $\theta_r\!\mapsto\!\theta_s=\theta_r-\alpha P r_s$ \emph{decreases} $\Rdata$ to first order. In the natural-gradient case ($P=H_d^{-1}$), it suffices that $\eta_1<1$.
\end{proposition}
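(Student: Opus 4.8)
The plan is to reduce the two assertions of the proposition --- positivity of the alignment scalar $s$, and the first-order decrease of $\Rdata$ along the self-training step --- to the already-established directional lower bound on $s$ (Theorem~\ref{thm:lower-s}) and the sign of the angle under diversity seeking (Lemma~\ref{lem:diverse-angle}). First I would record the elementary translation: a short synthetic fine-tune gives $\theta_s = \theta_r - \alpha P r_s + O(\alpha^2)$, so a one-term Taylor expansion yields $\Rdata(\theta_s) - \Rdata(\theta_r) = -\alpha\,\langle r_d, P r_s\rangle + O(\alpha^2) = -\alpha s + O(\alpha^2)$. Hence ``self-training helps to first order'' is literally the statement $s>0$, and the entire proof collapses to establishing $s>0$ under the stated hypotheses.

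Next I would invoke Theorem~\ref{thm:lower-s}, which gives $s \ge (m - M\eta_1)\,\|\varepsilon\|_{H_d}^2 - M\eta_0\,\|\varepsilon\|_{H_d}\,[-\cos\varphi]_+ + O(\|\varepsilon\|_{H_d}^3)$, and then feed in Lemma~\ref{lem:diverse-angle}: since $f$ is nonincreasing, the monotone-covariance argument there gives $\cos\varphi \ge 0$ to leading order (more precisely $\cos\varphi \ge -C\|\varepsilon\|_{H_d}$ for a constant $C$ depending only on the local regularity). Consequently $[-\cos\varphi]_+$ vanishes at leading order and is at worst $O(\|\varepsilon\|_{H_d})$, so the term $M\eta_0\|\varepsilon\|_{H_d}[-\cos\varphi]_+$ is subsumed into the higher-order remainder once $\|\varepsilon\|_{H_d}$ is small. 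This leaves $s \ge (m - M\eta_1)\,\|\varepsilon\|_{H_d}^2 + O(\|\varepsilon\|_{H_d}^3)$.

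To finish, if $\eta_1 < m/M$ then $m - M\eta_1 > 0$, so there is a threshold $\varepsilon_0>0$ such that for every $\theta_r$ with $0 < \|\varepsilon\|_{H_d} < \varepsilon_0$ the quadratic term dominates the cubic remainder and $s>0$; plugging back into the expansion above, $\Rdata(\theta_s) < \Rdata(\theta_r)$ for all sufficiently small $\alpha>0$. The natural-gradient specialization is immediate from Corollary~\ref{cor:lower-s-ng}: taking $P = H_d^{-1}$ makes $K = I$, hence $m = M = 1$, so the condition $\eta_1 < m/M$ becomes $\eta_1 < 1$; moreover in this case one can shortcut the spectral detour, since the relevant linear term is $\langle \varepsilon, H_d^{-1}b\rangle_{H_d} = \varepsilon^\top b$, which is $\ge 0$ at leading order directly by Lemma~\ref{lem:diverse-angle}.

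The main obstacle I anticipate is handling the $[-\cos\varphi]_+$ term honestly: Lemma~\ref{lem:diverse-angle} controls the angle only up to an $O(\|\varepsilon\|_{H_d})$ slack, and a priori that slack enters $s$ at the same quadratic order as the ``good'' term $(m - M\eta_1)\|\varepsilon\|_{H_d}^2$. The resolution is to split into two regimes --- the generic case where $\cos\varphi$ is bounded away from $0$, so $[-\cos\varphi]_+$ is identically zero for $\varepsilon$ small, and the degenerate boundary case $\cos\varphi \to 0$, where the slack merely shrinks the admissible threshold $\varepsilon_0$ without flipping the sign of the leading term --- and, in the natural-gradient case, to bypass the difficulty via the exact leading-order inequality $\varepsilon^\top b \ge 0$. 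Everything else is routine bookkeeping of the $O(\cdot)$ terms.
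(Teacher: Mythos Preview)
Your proposal is correct and follows the paper's intended argument: the proposition carries no standalone proof in the paper but is meant to be immediate from combining Theorem~\ref{thm:lower-s} (the directional lower bound on $s$), Lemma~\ref{lem:diverse-angle} (which kills the $[-\cos\varphi]_+$ term), and the first-order expansion $\Rdata(\theta_s)-\Rdata(\theta_r)=-\alpha s+O(\alpha^2)$ stated at the opening of Section~\ref{sec:when-self-helps}. Your closing discussion of the $O(\|\varepsilon\|_{H_d})$ slack in $\cos\varphi$ is more careful than the paper itself, which in its ``Interpretation'' paragraph simply reads Lemma~\ref{lem:diverse-angle} as $\cos\varphi\ge 0$ and drops the linear term outright.
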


\paragraph{Interpretation.}
The lower bound in Thm.~\ref{thm:lower-s} is a “quadratic minus linear’’ form:
the curvature-controlled term $(m-M\eta_1)\|\varepsilon\|_{H_d}^2$ pushes $s$ \emph{positive}, while the bias term subtracts only when $\cos\varphi<0$. Diversity-seeking samplers have $\cos\varphi\!\ge\!0$ (Lemma~\ref{lem:diverse-angle}), so their \emph{leading} behavior is $s\!\gtrsim\!(m-M\eta_1)\|\varepsilon\|_{H_d}^2$. Hence, close to a good model (small $\|\varepsilon\|_{H_d}$) and with modest curvature tilt ($\eta_1$ small), \emph{self-training helps} whereas Neon’s reversal would not.

\paragraph{Examples.}
\begin{itemize}
\item \textbf{High temperature in AR} ($\tau>1$): $q\propto p_{\theta_r}^{1/\tau}$ \;(\,$f(z)=e^{(1/\tau-1)z}$ is nonincreasing\,)\,$\Rightarrow$ diversity-seeking, $\cos\varphi\ge 0$ to first order.
\item \textbf{Anti-mode truncations}: procedures that downweight peaks and upweight tails (e.g., sampling after complementary filtering of top-$p$ mass) are nonincreasing transforms of $\log p_{\theta_r}$; the same conclusion applies.
\end{itemize}

\subsection{Notes on finite synthetic set and effect of short fine-tuning}
\label{app:finiteS-shortFT}

The main analysis assumes an infinite synthetic pool and uses the population synthetic gradient.
In practice, we generate one fixed synthetic set $\mathcal S$ and perform a brief fine-tune before Neon.
This subsection formalizes the effect of \emph{finite} $\mathcal S$ and \emph{short} fine-tuning on the direction used by Neon and on its dependence on $|\mathcal S|$.

\paragraph{Setup.}
Fix a synthetic dataset $\mathcal S=\{x_i\}_{i=1}^n$ drawn once from $q_{\theta_r,\kappa}$ and then kept fixed. Let $g(x,\zeta;\theta)\in\mathbb{R}^p$ be the per-example gradient of the synthetic loss (with internal randomness $\zeta$, e.g., diffusion time/noise), and
\[
\bar g(x;\theta):=\mathbb{E}_{\zeta}\!\big[g(x,\zeta;\theta)\big],\qquad
r_s(\theta):=\mathbb{E}_{x\sim q_{\theta_r,\kappa}}\!\big[\bar g(x;\theta)\big],\qquad
r_s^{(\mathcal S)}(\theta):=\frac{1}{n}\sum_{i=1}^n \bar g(x_i;\theta).
\]
Short fine-tuning (FT) from $\theta_r$ uses step size $\alpha>0$, $T$ steps, and a positive-definite preconditioner $P$:
\begin{equation}
\label{eq:shortFT}
\theta_{k+1}=\theta_k-\alpha\,P\,\widehat r_k,\qquad
\widehat r_k:=\frac{1}{n}\sum_{i=1}^n g\big(x_i,\zeta_{i,k};\theta_k\big),\quad k=0,\dots,T-1,
\end{equation}
where $\{\zeta_{i,k}\}$ are fresh draws each time the fixed examples are reused. Let $\theta_s:=\theta_T$ and define the scaled displacement
\[
d_T:=\frac{\theta_s-\theta_r}{\alpha T}\in\mathbb{R}^p.
\]

\paragraph{Two finite-sample errors.}
\emph{Dataset error (finite $|\mathcal S|$):} at $\theta_r$,
\[
\mathbb{E}\!\big[r_s^{(\mathcal S)}(\theta_r)\big]=r_s(\theta_r),\qquad
\mathrm{Cov}\!\big(r_s^{(\mathcal S)}(\theta_r)\big)=\frac{1}{n}\,\Sigma_{\mathrm{data}},
\]
with $\Sigma_{\mathrm{data}}:=\mathrm{Cov}_{x\sim q_{\theta_r,\kappa}}\!\big(\bar g(x;\theta_r)\big)$. This is $\mathcal{O}(n^{-1/2})$ and irreducible unless $n$ grows.

\emph{Monte Carlo (MC) error in time/noise:} write
\[
\widehat r_k=r_s^{(\mathcal S)}(\theta_k)+\xi_k,\qquad
\mathbb{E}[\xi_k\mid\theta_k]=0,\quad \mathrm{Cov}(\xi_k\mid\theta_k)=\Sigma_{\mathrm{mc}}(\theta_k).
\]

\paragraph{Local smoothness.}
Let $H_s(\theta):=\nabla_\theta r_s^{(\mathcal S)}(\theta)$. Assume there exists $L_{\mathrm{dir}}\ge 0$ such that for all $v$ and $\tau\in[0,1]$,
\begin{equation}
\label{eq:dir-smooth}
\left\|\,r_s^{(\mathcal S)}(\theta_r+\tau v)-r_s^{(\mathcal S)}(\theta_r)-\tau\,H_s(\theta_r)v\,\right\|_2
\ \le\ \tfrac{1}{2}\,L_{\mathrm{dir}}\,\tau^2\,\|v\|_2^2.
\end{equation}

\begin{lemma}[Short-FT displacement]
\label{lem:displacement}
Under \eqref{eq:shortFT} and \eqref{eq:dir-smooth}, if $\alpha T \le c/L_{\mathrm{dir}}$ for a small absolute constant $c$, then
\[
d_T
= -\,P\!\left(\,r_s^{(\mathcal S)}(\theta_r)+\frac{1}{T}\sum_{k=0}^{T-1}\xi_k\,\right)
\;+\;\mathcal{O}\!\Big(\alpha T\,\|P H_s(\theta_r)\|_{\mathrm{op}}\ \|r_s^{(\mathcal S)}(\theta_r)\|_2\Big).
\]
\end{lemma}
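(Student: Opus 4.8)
The statement is essentially an exact telescoping identity together with a control on how far the short‑fine‑tuning iterates $\theta_k$ drift from $\theta_r$. First I would sum the update \eqref{eq:shortFT} over $k=0,\dots,T-1$, which gives $\theta_s-\theta_r=-\alpha P\sum_{k=0}^{T-1}\widehat r_k$ and hence, after dividing by $\alpha T$ and writing $g_k:=r_s^{(\mathcal S)}(\theta_k)$ (so $g_0=r_s^{(\mathcal S)}(\theta_r)$),
\[
d_T=-P\!\left(\frac1T\sum_{k=0}^{T-1}g_k+\frac1T\sum_{k=0}^{T-1}\xi_k\right)
=-P\!\left(r_s^{(\mathcal S)}(\theta_r)+\frac1T\sum_{k=0}^{T-1}\xi_k\right)-P\cdot\frac1T\sum_{k=0}^{T-1}\big(g_k-g_0\big).
\]
Everything then reduces to showing $\big\|P\cdot\tfrac1T\sum_k(g_k-g_0)\big\|_2=O\!\big(\alpha T\,\|PH_s(\theta_r)\|_{\mathrm{op}}\,\|r_s^{(\mathcal S)}(\theta_r)\|_2\big)$, which in turn needs (i) a uniform bound on the drift $D_k:=\|\theta_k-\theta_r\|_2$ and (ii) the directional second‑order estimate \eqref{eq:dir-smooth}.

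\textbf{Drift bound.} Next I would establish $D_k\le C\alpha k\,\|g_0\|_2$ for all $k\le T$, with $C$ an absolute multiple of $\|P\|_{\mathrm{op}}$, by a bootstrap. From the update, $D_{k+1}\le D_k+\alpha\|P\|_{\mathrm{op}}\big(\|g_k\|_2+\|\xi_k\|_2\big)$, and \eqref{eq:dir-smooth} with $\tau=1$, $v=\theta_k-\theta_r$ gives the affine envelope $\|g_k\|_2\le\|g_0\|_2+\|H_s(\theta_r)\|_{\mathrm{op}}D_k+\tfrac12 L_{\mathrm{dir}}D_k^2$. Feeding the inductive hypothesis $D_j\le C\alpha j\|g_0\|_2$ into the recursion, the quadratic feedback contributes at most $\tfrac12 C^2(L_{\mathrm{dir}}\alpha T)\,\alpha T\|g_0\|_2^2$ and the linear term amplifies by a factor $1+O(\alpha\|P\|_{\mathrm{op}}\|H_s(\theta_r)\|_{\mathrm{op}})$ per step; a discrete Grönwall inequality then closes the induction provided $\alpha T\le c/L_{\mathrm{dir}}$ for a small enough absolute constant $c$ (so that $L_{\mathrm{dir}}\alpha T$ and the accumulated Grönwall factor $e^{O(\alpha T\|PH_s(\theta_r)\|_{\mathrm{op}})}$ stay below threshold). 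I would absorb the martingale‑difference noise either under the mild normalization $\max_k\|\xi_k\|_2=O(\|g_0\|_2)$, or—without any boundedness assumption—by replacing it with a high‑probability maximal bound on $\|\sum_{j<k}\xi_j\|_2$, in which case an additive $\sigma$ term is carried through verbatim.

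\textbf{Assembling the remainder.} With the drift bound in hand, write $g_k-g_0=H_s(\theta_r)(\theta_k-\theta_r)+\rho_k$ with $\|\rho_k\|_2\le\tfrac12 L_{\mathrm{dir}}D_k^2$ from \eqref{eq:dir-smooth}, so that
\[
\Big\|P\cdot\frac1T\sum_{k}(g_k-g_0)\Big\|_2
\;\le\;\|PH_s(\theta_r)\|_{\mathrm{op}}\,\frac1T\sum_{k}D_k
\;+\;\|P\|_{\mathrm{op}}\,\frac1T\sum_{k}\tfrac12 L_{\mathrm{dir}}D_k^2 .
\]
Since $\tfrac1T\sum_{k<T}D_k\le C\alpha T\|g_0\|_2$, the first term is $O\!\big(\alpha T\,\|PH_s(\theta_r)\|_{\mathrm{op}}\,\|g_0\|_2\big)$, exactly the claimed order; the second is $O\!\big(L_{\mathrm{dir}}(\alpha T)^2\|P\|_{\mathrm{op}}\|g_0\|_2^2\big)=O\!\big(\alpha T\,\|g_0\|_2^2\big)$ by the step budget, hence of strictly higher order than the leading term $-Pr_s^{(\mathcal S)}(\theta_r)$ in the regime of interest (strong base model, small $\|r_s^{(\mathcal S)}(\theta_r)\|_2$) and therefore subsumed. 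Combining with the telescoping identity yields the lemma.

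\textbf{Main obstacle.} The only genuinely delicate point is making the bootstrap self‑consistent: one must pin down how small $\alpha T$ must be, in terms of $L_{\mathrm{dir}}$, $\|P\|_{\mathrm{op}}$ and $\|H_s(\theta_r)\|_{\mathrm{op}}$, so that both the quadratic feedback $\tfrac12 L_{\mathrm{dir}}D_k^2$ and the Grönwall amplification of the linear term stay below the induction threshold $C\alpha k\|g_0\|_2$, and to do the stochastic part cleanly without simply assuming $\|\xi_k\|_2$ bounded. Once those constants are fixed, the rest is routine telescoping and triangle‑inequality bookkeeping.
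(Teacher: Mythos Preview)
The paper states this lemma without proof (Appendix~B.7 moves directly from the statement to Proposition~B.11), so there is no paper argument to compare against. Your approach---telescope the updates to get the exact identity, then control $\tfrac1T\sum_k(g_k-g_0)$ by combining the directional smoothness \eqref{eq:dir-smooth} with a Gr\"onwall/bootstrap drift bound $D_k\lesssim \alpha k\|g_0\|_2$---is the natural one and is correct. You also correctly flag the one genuinely delicate step: closing the bootstrap so that the linear amplification and the quadratic $L_{\mathrm{dir}}D_k^2$ feedback both stay below threshold under $\alpha T\le c/L_{\mathrm{dir}}$, and handling the $\xi_k$ contribution to the drift either by a boundedness normalization or a maximal inequality.

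One small caveat worth making explicit in a clean write-up: your second remainder term, after using $L_{\mathrm{dir}}\alpha T\le c$, is $O\big(\alpha T\,\|P\|_{\mathrm{op}}\,\|g_0\|_2^2\big)$, which is not literally dominated by the stated $O\big(\alpha T\,\|PH_s(\theta_r)\|_{\mathrm{op}}\,\|g_0\|_2\big)$ without an additional relation between $\|P\|_{\mathrm{op}}\|g_0\|_2$ and $\|PH_s(\theta_r)\|_{\mathrm{op}}$. Your justification (that it is higher order in the small-$\|g_0\|_2$ regime the paper works in) is the right reading of the paper's informal big-$O$, but it would be cleaner to either carry both terms or state the extra smallness explicitly.
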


\begin{proposition}[Direction concentration]
\label{prop:concentration}
Suppose $\lambda_{\max}\!\big(\Sigma_{\mathrm{mc}}(\theta)\big)\le \sigma^2$ in a neighborhood of $\theta_r$. Then for any unit vector $u$,
\[
\mathbb{E}\!\left[\left\langle u,\ d_T+P\,r_s^{(\mathcal S)}(\theta_r)\right\rangle^2\right]
\ \le\ \frac{\|P\|_{\mathrm{op}}^2\,\sigma^2}{T}\ +\ C^2\,(\alpha T)^2,
\]
where $C$ depends only on $L_{\mathrm{dir}}$, $\|P H_s(\theta_r)\|_{\mathrm{op}}$ and $\|r_s^{(\mathcal S)}(\theta_r)\|_2$. Hence, if $T\to\infty$ and $\alpha T\to 0$,
\[
d_T\ \xrightarrow{\ \mathbb{P}\ }\ -\,P\,r_s^{(\mathcal S)}(\theta_r).
\]
\end{proposition}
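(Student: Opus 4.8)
The idea is to substitute the short–fine-tuning expansion of Lemma~\ref{lem:displacement} into the scalar $\langle u,\, d_T+P\,r_s^{(\mathcal S)}(\theta_r)\rangle$, isolate the mean-zero Monte–Carlo part from the deterministic Taylor remainder, and bound the two separately. By Lemma~\ref{lem:displacement} (valid once $\alpha T\le c/L_{\mathrm{dir}}$) we may write
\[
d_T + P\,r_s^{(\mathcal S)}(\theta_r)\;=\;-\,P\,\overline{\xi}_T\;+\;R_T,\qquad
\overline{\xi}_T:=\tfrac1T\textstyle\sum_{k=0}^{T-1}\xi_k,\qquad
\|R_T\|_2\le C'\,\alpha T,
\]
where $C'$ depends only on $L_{\mathrm{dir}}$, $\|P H_s(\theta_r)\|_{\mathrm{op}}$ and $\|r_s^{(\mathcal S)}(\theta_r)\|_2$. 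Fix a unit vector $u$ and set $Y_T:=-\langle P^\top u,\ \overline{\xi}_T\rangle$ and $Z_T:=\langle u, R_T\rangle$, so that $\langle u, d_T+P r_s^{(\mathcal S)}(\theta_r)\rangle = Y_T+Z_T$ with $|Z_T|\le\|R_T\|_2\le C'\alpha T$ surely.

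\noindent Next I would use the martingale structure of the noise. Let $\mathcal F_k:=\sigma(\xi_0,\dots,\xi_{k-1})$; since $\theta_k$ is a deterministic function of $\theta_r$ and $\widehat r_0,\dots,\widehat r_{k-1}$ (hence of $\xi_0,\dots,\xi_{k-1}$), it is $\mathcal F_k$–measurable, and the hypothesis gives $\mathbb{E}[\xi_k\mid\mathcal F_k]=0$, $\mathrm{Cov}(\xi_k\mid\mathcal F_k)=\Sigma_{\mathrm{mc}}(\theta_k)$. Hence $\{\langle P^\top u,\xi_k\rangle\}_{k\ge0}$ is a martingale-difference sequence, the cross terms vanish, and
\begin{align*}
\mathbb{E}\big[Y_T^2\big]
&=\frac1{T^2}\sum_{k=0}^{T-1}\mathbb{E}\!\big[(P^\top u)^\top\Sigma_{\mathrm{mc}}(\theta_k)(P^\top u)\big]\\
&\le\frac1{T^2}\sum_{k=0}^{T-1}\|P^\top u\|_2^2\,\lambda_{\max}\!\big(\Sigma_{\mathrm{mc}}(\theta_k)\big)
\;\le\;\frac{\|P\|_{\mathrm{op}}^2\sigma^2}{T},
\end{align*}
using $\|P^\top\|_{\mathrm{op}}=\|P\|_{\mathrm{op}}$ and the neighborhood bound $\lambda_{\max}(\Sigma_{\mathrm{mc}}(\theta))\le\sigma^2$, which applies because with $\alpha T$ small every iterate $\theta_k$ stays inside that neighborhood. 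Combining with Cauchy–Schwarz, $\mathbb{E}[(Y_T+Z_T)^2]\le\big(\sqrt{\mathbb{E}[Y_T^2]}+C'\alpha T\big)^2$, which after absorbing the cross term into the constant (or, for a strictly clean statement, at the price of an arbitrarily small multiplicative slack on the leading term) gives $\mathbb{E}\big[\langle u, d_T+P r_s^{(\mathcal S)}(\theta_r)\rangle^2\big]\le \|P\|_{\mathrm{op}}^2\sigma^2/T + C^2(\alpha T)^2$, the asserted inequality.

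\noindent The convergence claim follows by a standard second-moment argument: applying the displayed bound to $u=e_1,\dots,e_p$ and summing yields $\mathbb{E}\|d_T+P r_s^{(\mathcal S)}(\theta_r)\|_2^2\le p\big(\|P\|_{\mathrm{op}}^2\sigma^2/T+C^2(\alpha T)^2\big)$; since the dimension $p$ is fixed, the right-hand side vanishes as $T\to\infty$ and $\alpha T\to0$, so $d_T\to -P r_s^{(\mathcal S)}(\theta_r)$ in $L^2$, hence in probability by Chebyshev.

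\noindent The step I expect to be the main obstacle is the stability/neighborhood point: Lemma~\ref{lem:displacement} and the variance bound $\lambda_{\max}(\Sigma_{\mathrm{mc}})\le\sigma^2$ are only in force while the iterates remain near $\theta_r$, yet $\xi_k$ is only second-moment controlled, so a rare large excursion could in principle leave the neighborhood. The clean fix is to run the argument on the event $\mathcal{E}_T:=\{\|\theta_k-\theta_r\|_2\le\rho_0\ \text{for all }k\le T\}$ and bound $\Pr(\mathcal{E}_T^c)$ by Doob's $L^2$ maximal inequality applied to the martingale partial sums $\sum_{j<k}P\xi_j$, together with the $O(\alpha T)$ deterministic drift; this gives $\Pr(\mathcal{E}_T^c)=O(\alpha^2 T\sigma^2/\rho_0^2)\to0$ under the stated scaling. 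Everything else — martingale orthogonality, the operator-norm estimates, and the Chebyshev step — is routine bookkeeping.
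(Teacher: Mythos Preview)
The paper states Proposition~\ref{prop:concentration} (and the preceding Lemma~\ref{lem:displacement}) without proof in Appendix~B.7, so there is no argument to compare against. Your approach---decompose via Lemma~\ref{lem:displacement}, exploit the martingale-difference structure of $\{\xi_k\}$ to get the $\|P\|_{\mathrm{op}}^2\sigma^2/T$ variance term, bound the deterministic Taylor remainder by $C'\alpha T$, and combine by Minkowski---is the natural one and is essentially correct.

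One small bookkeeping point: your Minkowski step gives
\[
\mathbb{E}\big[(Y_T+Z_T)^2\big]\ \le\ \Big(\sqrt{\mathbb{E}[Y_T^2]}+C'\alpha T\Big)^2
= \mathbb{E}[Y_T^2] + 2C'\alpha T\sqrt{\mathbb{E}[Y_T^2]} + (C'\alpha T)^2,
\]
and the cross term $2C'\alpha T\sqrt{\|P\|_{\mathrm{op}}^2\sigma^2/T}$ cannot be absorbed into the $C^2(\alpha T)^2$ term alone without also inflating the coefficient on $\|P\|_{\mathrm{op}}^2\sigma^2/T$ (AM--GM gives a factor of~$2$ on each). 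Since the proposition writes the first term with the sharp coefficient $\|P\|_{\mathrm{op}}^2$ and no free constant, the stated inequality should really be read with an implicit universal factor; you flag this honestly, and it does not affect the convergence-in-probability conclusion, which is what the paper actually uses downstream. Your handling of the containment event $\mathcal{E}_T$ via Doob's maximal inequality is the right way to make the ``iterates stay in the neighborhood'' assumption rigorous, though the paper's informal $\mathcal{O}(\cdot)$ in Lemma~\ref{lem:displacement} suggests the authors are content to work under the standing hypothesis $\alpha T\le c/L_{\mathrm{dir}}$ without making this explicit.
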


\paragraph{Learning-rate note (why ``small'' helps).}
Lemma~\ref{lem:displacement} and Proposition~\ref{prop:concentration} show the curvature bias of $d_T$ scales like $\mathcal{O}(\alpha T)$, while the MC variance shrinks like $1/T$. Thus decreasing $\alpha$ reduces bias (keeps the trajectory in the local linear region) but does not change the $1/T$ variance term; increasing $T$ averages MC noise but increases bias unless $\alpha$ is reduced so that $\alpha T$ stays small. A practical regime is
\[
\alpha T\ \le\ \frac{c}{L_{\mathrm{dir}}}
\quad\text{and}\quad
T\ \text{large enough that}\ \frac{\|P\|_{\mathrm{op}}\sigma}{\sqrt{T}}\ \ll\ \|P\,r_s^{(\mathcal S)}(\theta_r)\|_2.
\]

\paragraph{Quadratic proxy for Neon and finite $|\mathcal S|$.}
Let $r_d(\theta_r):=\nabla_\theta \mathcal{R}_{\mathrm{data}}(\theta)\big|_{\theta_r}$ and $H_d:=\nabla^2_\theta \mathcal{R}_{\mathrm{data}}(\theta)\big|_{\theta_r}$. Define
\[
s_{\mathcal S}:=\big\langle r_d(\theta_r),\,P\,r_s^{(\mathcal S)}(\theta_r)\big\rangle,\qquad
z_{\mathcal S}:=\big(r_s^{(\mathcal S)}(\theta_r)\big)^\top P^\top H_d\,P\,r_s^{(\mathcal S)}(\theta_r).
\]
For the Neon merge $\theta_{\text{Neon}}=(1+w)\theta_r-w\theta_s$ and short FT, the real-risk change admits the local expansion
\[
\Delta\mathcal R(w)\ \approx\ w\alpha\,s_{\mathcal S}\;+\;\tfrac12\,(w\alpha)^2\,z_{\mathcal S},
\]
with minimizer and minimum
\[
w^{\*}_{\mathcal S}=-\frac{s_{\mathcal S}}{\alpha z_{\mathcal S}},\qquad
\Delta\mathcal R^{\*}_{\mathcal S}\ \approx\ -\,\frac{s_{\mathcal S}^2}{2\,z_{\mathcal S}}.
\]
Using $d_T$ as a plug-in estimate for $-P\,r_s^{(\mathcal S)}(\theta_r)$, set $\widehat s_T:=\langle r_d(\theta_r),-d_T\rangle$ and $\widehat z_T:=\langle d_T, H_d d_T\rangle$. Then
\[
\widehat s_T=s_{\mathcal S}+\mathcal{O}_{\mathbb{P}}\!\big(T^{-1/2}+\alpha T\big),\qquad
\widehat z_T=z_{\mathcal S}+\mathcal{O}_{\mathbb{P}}\!\big(T^{-1/2}+\alpha T\big),
\]
so $\widehat w^{\*}\approx -\widehat s_T/(\alpha \widehat z_T)$ concentrates on $w^{\*}_{\mathcal S}$ as $T\!\to\!\infty$ and $\alpha T\!\to\!0$.

\begin{remark}[Why performance vs.\ $|\mathcal S|$ is U-shaped]
Write $r_s^{(\mathcal S)}=r_s+\varepsilon_{\mathcal S}$ with $\varepsilon_{\mathcal S}=\mathcal{O}_{\mathbb{P}}(n^{-1/2})$. Then
\[
s_{\mathcal S}=\langle r_d, P r_s\rangle+\langle r_d, P \varepsilon_{\mathcal S}\rangle,\qquad
z_{\mathcal S}=r_s^\top P^\top H_d P r_s \;+\; \text{(cross/} \varepsilon_{\mathcal S}\text{ terms)}.
\]
For very small $|\mathcal S|$, variance dominates: $s_{\mathcal S}$ and $z_{\mathcal S}$ are noisy and the attainable improvement $\Delta\mathcal R^{\*}_{\mathcal S}\!\approx\!-s_{\mathcal S}^2/(2z_{\mathcal S})$ is weak. For very large $|\mathcal S|$, variance vanishes ($\varepsilon_{\mathcal S}\!\to\!0$) but the synthetic direction $P r_s$ tends to align with high-curvature eigenvectors of $H_d$ induced by mode-seeking samplers, increasing $z_{\mathcal S}$ faster than $|s_{\mathcal S}|$ grows; consequently $|\,\Delta\mathcal R^{\*}_{\mathcal S}\,|$ shrinks slightly. A moderate $|\mathcal S|$ balances these effects: variance is small enough to stabilize $s_{\mathcal S}$ while the direction has not collapsed onto the sharpest curvature, keeping $z_{\mathcal S}$ moderate. This yields the empirically observed U-shaped curve in Neon performance as a function of $|\mathcal S|$.
\end{remark}

\medskip
\noindent\textbf{Takeaway.} With a fixed, finite synthetic set generated once, \emph{short} fine-tuning (small $\alpha$, modest $T$ so that $\alpha T$ is small) produces a variance-reduced and reliable estimate of the synthetic gradient direction $P\,r_s^{(\mathcal S)}(\theta_r)$, stabilizing the empirical coefficients $(s_{\mathcal S},z_{\mathcal S})$ and the merge weight $w$. Very small $|\mathcal S|$ is variance-limited; very large $|\mathcal S|$ inflates $z_{\mathcal S}$ via curvature, so a broad, \emph{moderate} $|\mathcal S|$ is typically best.

\subsection{Toy Experiment}
\label{app:toy_experiment}

Now we present a toy experiment to empirically validate and provide deeper intuition for the theoretical results presented in the paper. The goal is to create a controlled environment where we can directly observe the effects of sampler behavior on self-training and measure the key theoretical quantity: the directional alignment between gradients.

\paragraph{Setup.}
The task is to learn a 2D Gaussian distribution, $\mathcal{N}(\mu_{\text{ref}}, \Sigma_{\text{ref}})$, where the mean is $\mu_{\text{ref}} = [0, 0]^\top$ and the covariance is $\Sigma_{\text{ref}} = [2,1;1,2]^\top$. We use a small Denoising Diffusion Probabilistic Model (DDPM) with an MLP backbone, trained over a short diffusion process of $T=20$ steps with a cosine noise schedule. A base model, $\theta_r$, is trained for a long duration ($10,000$ epochs) on a small dataset of $N_{\text{base}}=10^3$ real samples with a learning rate of $10^{-4}$ to ensure it has converged.

To control the sampler's behavior during synthetic data generation, we introduce a scalar hyperparameter, $\zeta$, which directly scales the model's score. The standard score is defined as $s_\theta(x_t, t) = -\epsilon_\theta(x_t, t) / \sqrt{1 - \bar{\alpha}_t}$, where $\epsilon_\theta$ is the model's noise prediction. During sampling, we use a modified score, $\tilde{s}_\theta(x_t, t) = \zeta \cdot s_\theta(x_t, t)$, to generate samples. This allows us to precisely control the sampler's characteristics:
\begin{itemize}[noitemsep,topsep=0pt]
    \item $\zeta > 1$: The sampler becomes \emph{mode-seeking}.
    \item $\zeta < 1$: The sampler becomes \emph{diversity-seeking}.
    \item $\zeta = 1$: The sampler is neutral.
\end{itemize}

\paragraph{Experiment 1: FID vs. Merge Weight.}
In our first experiment, we validate the main prediction of our paper. We generate synthetic datasets using a mode-seeking sampler ($\zeta=1.1$) and a diversity-seeking sampler ($\zeta=0.9$). We then fine-tune $\theta_r$ on each of these datasets to obtain a self-trained model $\theta_s$. We form a merged model via the one-parameter extrapolation formula:
$$ \theta_w = (1+w)\theta_r - w\theta_s = \theta_r - w(\theta_s - \theta_r) $$
A positive weight ($w > 0$) corresponds to \proposedMethod{}'s \emph{negative extrapolation}, moving away from the self-trained model. A negative weight ($w < 0$) corresponds to \emph{positive extrapolation} (interpolation). Letting $w = -\alpha$ for $\alpha > 0$, the formula becomes $\theta_w = (1-\alpha)\theta_r + \alpha\theta_s$, which is standard interpolation and equivalent to a step of self-training.

The results, shown in Figure \ref{fig:toy_fid_plots}, perfectly match our theory. For the mode-seeking sampler, the optimal FID is achieved at $w^* > 0$, demonstrating that negative extrapolation (\proposedMethod{}) helps. Conversely, for the diversity-seeking sampler, the optimal FID is achieved at $w^* < 0$, showing that positive extrapolation (self-training) is beneficial.

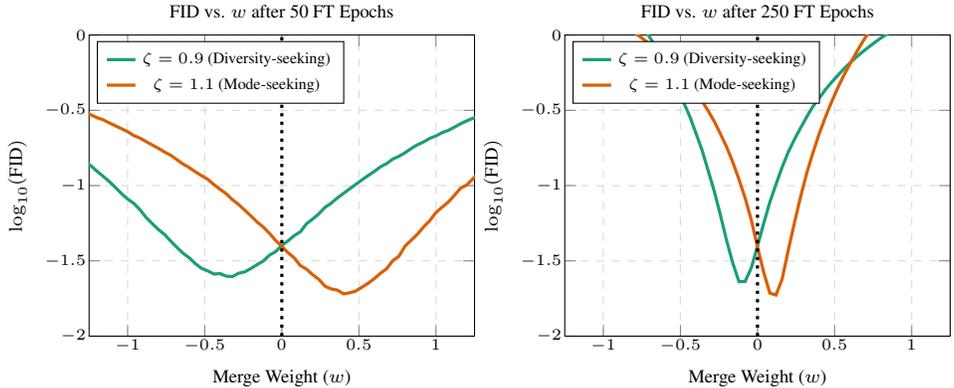
\begin{figure}[h!]
    \centering
    \begin{tikzpicture}
        % Define colors locally in case they are not in the preamble
        \definecolor{colorA}{HTML}{1B9E77}
        \definecolor{colorB}{HTML}{D95F02}
        
        \begin{groupplot}[
            group style={group size=2 by 1, horizontal sep=12mm},
            width=0.48\linewidth,
            height=0.4\linewidth,
            xlabel={\scriptsize Merge Weight ($w$)},
            ylabel={\scriptsize $\log_{10}(\text{FID})$},
            xmin = -1.25,xmax=1.25,
            ymin = -2, ymax=0,
            grid=major,
            grid style={dashed, gray!30},
            legend style={
                font=\tiny,
                at={(0.02,0.98)},
                anchor=north west,
                cells={align=left}
            },
            tick label style={font=\tiny},
            label style={font=\scriptsize},
            title style={font=\scriptsize, yshift=-1ex}
        ]
        
        % Subplot 1: 50 FT Epochs
        \nextgroupplot[title={\scriptsize FID vs. $w$ after 50 FT Epochs}]
            \addplot[colorA, very thick] table [x=w, y=0.9, col sep=comma]{csv/toy/fid_epoch_50_wide.csv};
            \addlegendentry{$\zeta=0.9$ (Diversity-seeking)};
            
            \addplot[colorB, very thick] table [x=w, y=1.1, col sep=comma]{csv/toy/fid_epoch_50_wide.csv};
            \addlegendentry{$\zeta=1.1$ (Mode-seeking)};
            
            \draw[black, dotted, very thick] (axis cs:0, \pgfkeysvalueof{/pgfplots/ymin}) -- (axis cs:0, \pgfkeysvalueof{/pgfplots/ymax});

        % Subplot 2: 250 FT Epochs
        \nextgroupplot[title={\scriptsize FID vs. $w$ after 250 FT Epochs}]
            \addplot[colorA, very thick] table [x=w, y=0.9, col sep=comma]{csv/toy/fid_epoch_250_wide.csv};
            \addlegendentry{$\zeta=0.9$ (Diversity-seeking)};

            \addplot[colorB, very thick] table [x=w, y=1.1, col sep=comma]{csv/toy/fid_epoch_250_wide.csv};
            \addlegendentry{$\zeta=1.1$ (Mode-seeking)};
            
            \draw[black, dotted, very thick] (axis cs:0, \pgfkeysvalueof{/pgfplots/ymin}) -- (axis cs:0, \pgfkeysvalueof{/pgfplots/ymax});
            
        \end{groupplot}
    \end{tikzpicture}
    \caption{
        \textbf{FID vs. Merge Weight ($w$) validation.}
        For the mode-seeking sampler ($\zeta=1.1$), the optimal FID is at $w>0$ (Neon helps). For the diversity-seeking sampler ($\zeta=0.9$), the optimum is at $w<0$ (self-training helps).
    }
    \label{fig:toy_fid_plots}
\end{figure}

\paragraph{Experiment 2: Gradient Alignment vs. Sampler Type.}
In our second experiment, we directly measure the directional alignment between the real and synthetic gradients by computing their cosine similarity, $\cos(\vartheta) = \frac{\langle r_d, P_{\text{Adam}} r_s \rangle}{\|r_d\|_{P_{\text{Adam}}} \|r_s\|_{P_{\text{Adam}}}}$. We estimate the population real-data gradient $r_d$ and the Adam preconditioner $P_{\text{Adam}}$ from a large set of $N_{\text{pop}} = 10^5$ real samples. We then sweep the score scale $\zeta$ across the range $[0.8, 1.25]$ and compute the cosine similarity for each value.

The results in Figure~\ref{fig:toy_cosine_plot} provide a clear visualization of the alignment direction. The cosine similarity is positive for diversity-seeking samplers ($\zeta < 1$), corresponding to an acute angle between the gradients. This confirms they are aligned, and self-training should help. The similarity becomes negative for mode-seeking samplers ($\zeta > 1$), corresponding to an obtuse angle. This confirms they are anti-aligned, and negative extrapolation (\proposedMethod{}) is the correct approach. Furthermore, we note that \textbf{at the neutral point $\zeta=1$, the cosine similarity is still negative.} This provides a powerful validation of our theoretical finding (Appendix~\ref{app:acute-angle-diff}) that any practical, finite-step ODE solver—which our DDPM sampler is an instance of—introduces a small discretization error that is inherently mode-seeking, thus producing a negative alignment even without explicit score scaling.

\begin{figure}[h!]
    \centering
    \begin{tikzpicture}
        % Define color locally in case it's not in the preamble
        \definecolor{colorC}{HTML}{7570B3}
        
        \begin{axis}[
            width=0.6\linewidth,
            height=0.45\linewidth,
            title={\scriptsize Gradient Cosine Similarity vs. Sampler Type},
            xlabel={\scriptsize Sampler Score Scale ($\zeta$)},
            ylabel={\scriptsize Cosine Similarity $\cos(\vartheta)$},
            grid=major,
            grid style={dashed, gray!30},
            tick label style={font=\tiny},
            label style={font=\scriptsize},
            title style={font=\scriptsize, yshift=-1ex},
            ymin=-0.6, ymax=0.6, 
            xmin=0.8, xmax=1.2% Set y-axis limits for cosine similarity
        ]
        
        \addplot[colorC, very thick, no markers] % The 'no markers' option removes the dots
            table [x=s_scale, y=cosine_similarity, col sep=comma]{csv/toy/cosine_similarity_results.csv};
            
        \addplot[black, dashed, domain=0.8:1.25] {0};

        \end{axis}
    \end{tikzpicture}
    \caption{
        \textbf{Direct measurement of the gradient alignment direction.}
        The cosine similarity $\cos(\vartheta)$ is positive for diversity-seeking samplers ($\zeta<1$) and negative for mode-seeking samplers ($\zeta>1$), crossing zero at the neutral point $\zeta=1$.
    }
    \label{fig:toy_cosine_plot}
\end{figure}
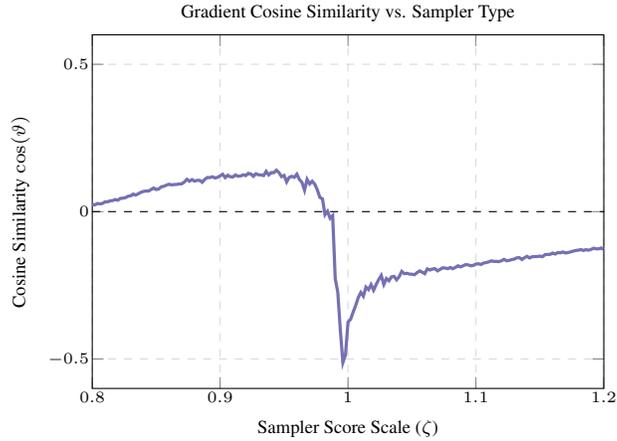

\newpage
\section{Experiments Details}
\label{appendix:experiments_details}

A key advantage of \proposedMethod{} is its implementation simplicity. Given an existing training and generation script for a base model, \proposedMethod{} requires only a minimal add on script that takes two model checkpoints and a weight $w$ to construct the final model parameters. To ensure reproducibility and build directly on prior work, all our experiments start from official public codebases and use publicly available pre trained checkpoints as our base models. The repositories we used for each model family are listed below:
\begin{itemize}
    \item \textbf{Diffusion Models (EDM):} \href{https://github.com/NVlabs/edm/tree/main}{NVlabs/edm}
    \item \textbf{Flow Matching:} \href{https://github.com/atong01/conditional-flow-matching}{atong01/conditional-flow-matching}
    \item \textbf{Autoregressive Models (VAR, xAR):} \href{https://github.com/FoundationVision/VAR}{FoundationVision/VAR} and \href{https://github.com/OliverRensu/xAR}{OliverRensu/xAR}
    \item \textbf{Few Step Models (IMM):} \href{https://github.com/lumalabs/imm}{lumalabs/imm}
\end{itemize}

For the fine tuning stage, we adhere closely to the default training configurations proposed by the original authors for each model. Our primary modification involves adapting the learning rate policy for the fine tuning context. This typically means using a small target learning rate, which in some cases is reached via a linear warmup schedule. All other settings, such as the optimizer and batch size, remain unchanged. During this process, we save model checkpoints periodically (typically every 250k or 500k images seen) to evaluate performance over the course of training.

Our evaluation procedure is as follows. For each saved checkpoint, we perform a hyperparameter search to find the optimal merge weight $w$ (and CFG scale $\gamma$, where applicable). This search is conducted by generating 10k samples per setting to calculate a preliminary FID score. Once the optimal hyperparameters are identified, we generate a final set of 50k samples to compute the final FID score reported in this paper.

Below, we detail the specific configurations for each experiment.

\paragraph{EDM-VP on CIFAR-10.}
\begin{itemize}
    \item \textbf{$\mathcal{S}$ Generation:} Generated with \texttt{--steps=18 --rho=7 --S\_churn=0}.
    \item \textbf{Fine tuning:} Default script of \texttt{--cond=1 --arch=ddpmpp} with a modified \texttt{--lr=1e-4}. For the unconditional experiment, the script used \texttt{--cond=0}.
    \item \textbf{\proposedMethod{} Evaluation:} Grid search over merge weight $w \in [0, 3.0]$.
\end{itemize}

\paragraph{EDM-VP on FFHQ-64.}
\begin{itemize}
    \item \textbf{$\mathcal{S}$ Generation:} Generated with \texttt{--steps=40 --rho=7 --S\_churn=0}.
    \item \textbf{Fine tuning:} Default script of \texttt{--cond=0 --arch=ddpmpp --batch=256 --cres=1,2,2,2 --dropout=0.05 --augment=0.15} with a modified \texttt{--lr=4e-6}.
    \item \textbf{\proposedMethod{} Evaluation:} Grid search over merge weight $w \in [0, 3.0]$.
\end{itemize}

\paragraph{Flow Matching on CIFAR-10.}
\begin{itemize}
    \item \textbf{$\mathcal{S}$ Generation:} Generated using the \texttt{dopri5} ODE solver with \texttt{--integration-steps=100}.
    \item \textbf{Fine tuning:} Default script of \texttt{--ema\_decay=0.9999} with a modified learning rate of \texttt{--lr=2e-4}.
    \item \textbf{\proposedMethod{} Evaluation:} Grid search over merge weight $w \in [0, 3.0]$.
\end{itemize}

\paragraph{xAR-B on ImageNet-256.}
\begin{itemize}
    \item \textbf{$\mathcal{S}$ Generation:} Generated with \texttt{--cfg=2.7 --flow\_steps=40 --num\_iter=256}.
    \item \textbf{Fine tuning:} Default script of \texttt{--model=xar\_base --vae\_embed\_dim=16 --vae\_stride=16} with a modified \texttt{--blr=1e-6}, using a linear warmup schedule over the 7 Mi images seen.
    \item \textbf{\proposedMethod{} Evaluation:} Joint grid search over merge weight $w \in [0, 3.0]$ and CFG scale $\gamma \in [2.7, 5.0]$.
\end{itemize}

\paragraph{xAR-L on ImageNet-256.}
\begin{itemize}
    \item \textbf{$\mathcal{S}$ Generation:} Generated with \texttt{--cfg=2.3 --flow\_steps=50 --num\_iter=256}.
    \item \textbf{Fine tuning:} Default script of \texttt{--model=xar\_large --vae\_embed\_dim=16 --vae\_stride=16} with a modified \texttt{--blr=1e-6}, using a linear warmup schedule over the 7 Mi images seen.
    \item \textbf{\proposedMethod{} Evaluation:} Joint grid search over merge weight $w \in [0, 3.0]$ and CFG scale $\gamma \in [2.3, 5.0]$.
\end{itemize}

\paragraph{VAR-d16 on ImageNet-256.}
\begin{itemize}
    \item \textbf{$\mathcal{S}$ Generation:} Generated with \texttt{--cfg=1.25 --top\_k=900 --top\_p=0.95 --model\_depth=16}.
    \item \textbf{Fine tuning:} Default script of \texttt{--depth=16 --bs=786 --fp16=1 --alng=1e-4}, modified to use a linear warmup to a target learning rate of \texttt{1e-5} over 7.5 Mi images seen.
    \item \textbf{\proposedMethod{} Evaluation:} Joint grid search over merge weight $w \in [0, 2.0]$ and CFG scale $\gamma \in [1.25, 4.0]$.
\end{itemize}

\paragraph{VAR-d30 on ImageNet-512.}
\begin{itemize}
    \item \textbf{$\mathcal{S}$ Generation:} Generated with \texttt{--cfg=2.0 --top\_k=900 --top\_p=0.95 --model\_depth=16}.
    \item \textbf{Fine tuning:} Default script of \texttt{--depth=36 --bs=24 --fp16=1 --alng=5e-6 --saln=1 --pn=512}, modified to use a linear warmup to a target learning rate of \texttt{1e-5} over 3 Mi images seen.
    \item \textbf{\proposedMethod{} Evaluation:} Joint grid search over merge weight $w \in [0, 2.0]$ and CFG scale $\gamma \in [2.0, 4.5]$.
\end{itemize}

\paragraph{IMM on ImageNet-256.}
\begin{itemize}
    \item \textbf{$\mathcal{S}$ Generation:} Generated using the \texttt{imagenet256\_ts\_a2.pkl} model with \texttt{--T=8 --cfg\_scale=1.5}.
    \item \textbf{Fine tuning:} Default training script with a modified learning rate of \texttt{--lr=1e-6}.
    \item \textbf{\proposedMethod{} Evaluation:} For each $T \in \{1, 2, 4, 8\}$, a joint grid search over $w \in [0, 5.0]$ and $\gamma \in [1.0, 3.0]$.
\end{itemize}

\paragraph{Metric Calculation Details.}
For the EDM and flow matching models, we used the official FID calculation script from the \href{https://github.com/NVlabs/edm}{NVlabs/edm} repository. The pre computed reference statistics were downloaded from the URL provided by the authors. For all autoregressive (xAR, VAR) and few step (IMM) models, we used the \href{https://github.com/LTH14/torch-fidelity}{\texttt{torch-fidelity}} library. The reference statistics for ImageNet were sourced from the \href{https://github.com/openai/guided-diffusion}{openai/guided-diffusion} repository. For Precision and Recall, we extracted InceptionV3 features and computed the metrics using the \href{https://github.com/clovaai/generative-evaluation-prdc}{\texttt{prdc}} library with $k=5$.

\paragraph{Practical Note on Normalization Layers.}
The \proposedMethod{} merge, $\theta_{\text{Neon}} = (1+w)\theta_r - w\theta_s$, is applied directly to model parameters. The architectures in our experiments use LayerNorm, GroupNorm, or RMSNorm; since these do not have running statistics, no special handling (e.g., recomputing statistics with a forward pass) is required.

\paragraph{Practical Note on Mask Buffers.}
The \proposedMethod{} merge applies only to the learned parameters ($\theta$) of a model. Architectures like xAR may use fixed buffers for attention masks containing infinity values. These buffers are not parameters and should be excluded from the merge. We follow the standard practice of copying all buffers directly from the base model $\theta_r$.

\paragraph{Practical Note on Numerical Precision.}
Some models use half precision (\texttt{fp16}). Performing the merge directly in \texttt{fp16} using $(1+w)\theta_r - w\theta_s$ can cause numerical overflow. To ensure stability, we recommend one of two approaches:
\begin{enumerate}
    \item Perform the merge in \texttt{fp16} using the more stable formula: $\theta_r - w(\theta_s - \theta_r)$.
    \item Cast weights to a higher precision (e.g., \texttt{fp32}) before merging, then cast back to \texttt{fp16}.
\end{enumerate}
We use the first approach in our implementation for its stability and efficiency.

\paragraph{Practical Note on Efficient Hyperparameter Search.}
While we performed a full grid search for thoroughness, a more efficient search is possible in practice. The relationship between the merge weight $w$ and FID is strongly unimodal and locally quadratic. For finding an optimal $w$, one can use standard 1D optimization algorithms like Brent's Method \citep{brent1973algorithms}. For jointly optimizing $w$ and $\gamma$, this extends to fitting a 2D quadratic surface, which we found requires only six well-distributed points to find a near-optimal configuration.

\newpage
\section{Additional experiments for diffusion and flow matching models} \label{appendix:diffusion_flow}

We extend the precision-recall analysis from Section~\ref{sec:diffusion_flow} to additional diffusion and flow matching experiments. Figure~\ref{fig:additional-fpr} presents the complete FID, precision, and recall curves as a function of merge weight $w$ for EDM-VP on FFHQ-64 and Flow Matching on CIFAR-10.

For EDM-VP on FFHQ-64 (top row), we observe similar dynamics to those discussed in the main text. The FID curves exhibit the characteristic U-shape with optimal values around $w \approx 1.0$--$1.5$, achieving FID as low as 1.12 from a baseline of 2.39. The precision monotonically decreases with increasing $w$, dropping from approximately 0.78 to 0.40 as $w$ increases from 0 to 3. The recall shows the expected inverted-U pattern, peaking near the FID-optimal weight and demonstrating that Neon's improvement stems from recovering under-represented modes. As the fine-tuning budget increases from 1.5~Mi to 3~Mi, the effects become more pronounced: the FID improvement deepens, the precision drop steepens, and the recall peak sharpens.

For Flow Matching on CIFAR-10 (bottom row), the pattern is consistent but with model-specific characteristics. The baseline FID of 3.5 improves to 2.32 at optimal $w \approx 1.0$. The precision-recall trade-off is less extreme than for EDM-VP, with precision declining from approximately 0.73 to 0.55 and recall peaking around 0.72. This suggests that flow matching models may have a different mode coverage profile compared to diffusion models, but still benefit from Neon's redistribution mechanism. The optimal merge weight remains relatively stable across different fine-tuning budgets, indicating robust degradation directions for this architecture.

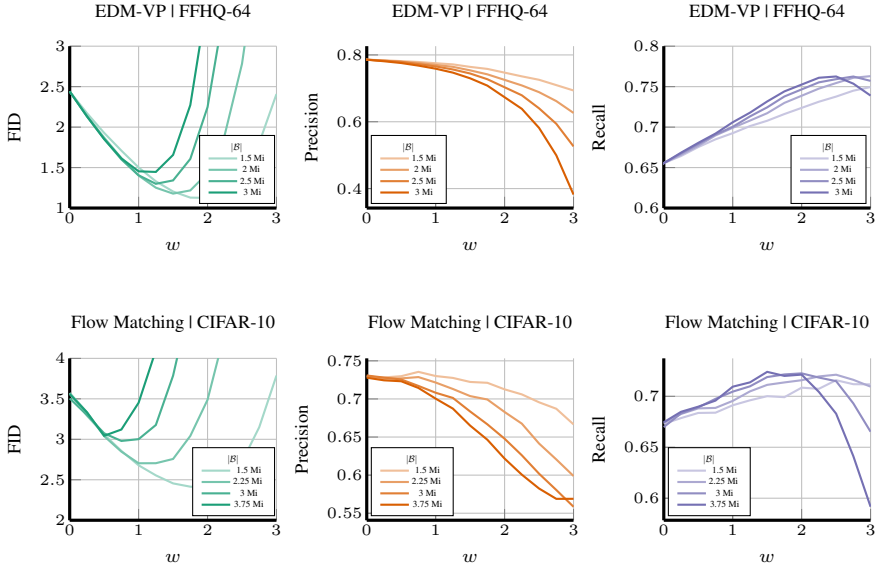
\begin{figure}[h]
\centering
\begin{tikzpicture}
\pgfplotsset{/pgfplots/group/every plot/.append style = {very thick}};
\begin{groupplot}[group style = {group size = 3 by 2, horizontal sep = 12mm, vertical sep = 20mm}, width = 0.31\linewidth]

  \nextgroupplot[
    title={\scriptsize EDM-VP | FFHQ-64},
    ylabel ={\scriptsize FID},
    xlabel={\scriptsize $w$},
    axis x line*=bottom,
    axis y line*=left,
    xmin=0,xmax=3,
    ymin=1,ymax=3,
    grid, legend style = {at={(0.02,0.02)}, nodes={scale=0.35, transform shape}, column sep = 0pt, legend to name = legend12, text=black, cells={align=left},}]
    \addlegendimage{empty legend}
    \addlegendentry{\hspace{-1.2cm}$|\mathcal{B}|$}
    \addplot[draw=colorA!40!white, thick] table [x index=0, y index=2, col sep=comma]{csv/EDM_ffhq/fid_EDM_FFHQ.csv};
    \addplot[draw=colorA!60!white, thick] table [x index=0, y index=3, col sep=comma]{csv/EDM_ffhq/fid_EDM_FFHQ.csv};
    \addplot[draw=colorA!80!white, thick] table [x index=0, y index=4, col sep=comma]{csv/EDM_ffhq/fid_EDM_FFHQ.csv};
    \addplot[draw=colorA, thick] table [x index=0, y index=5, col sep=comma]{csv/EDM_ffhq/fid_EDM_FFHQ.csv};
    \addlegendentryexpanded{1.5 Mi};
    \addlegendentryexpanded{2 Mi };
    \addlegendentryexpanded{2.5 Mi};
    \addlegendentryexpanded{3 Mi};

  \nextgroupplot[
    title={\scriptsize EDM-VP | FFHQ-64},
    ylabel ={\scriptsize Precision},
    xlabel={\scriptsize $w$ },
    axis x line*=bottom,
    axis y line*=left,
    xmin=0,xmax=3,
    grid, legend style = {at={(0,0)}, nodes={scale=0.35, transform shape}, column sep = 0pt, legend to name = legend13, text=black, cells={align=left},}]
    \addlegendimage{empty legend}
    \addlegendentry{\hspace{-1.2cm}$|\mathcal{B}|$}
    \addplot[draw=colorB!40!white, thick] table [x index=0, y index=2, col sep=comma]{csv/EDM_ffhq/precision_ffhq_EDM.csv};
    \addplot[draw=colorB!60!white, thick] table [x index=0, y index=3, col sep=comma]{csv/EDM_ffhq/precision_ffhq_EDM.csv};
    \addplot[draw=colorB!80!white, thick] table [x index=0, y index=4, col sep=comma]{csv/EDM_ffhq/precision_ffhq_EDM.csv};
    \addplot[draw=colorB, thick] table [x index=0, y index=5, col sep=comma]{csv/EDM_ffhq/precision_ffhq_EDM.csv};
    \addlegendentryexpanded{1.5 Mi};
    \addlegendentryexpanded{2 Mi };
    \addlegendentryexpanded{2.5 Mi};
    \addlegendentryexpanded{3 Mi};

  \nextgroupplot[
    title={\scriptsize EDM-VP | FFHQ-64},
    ylabel ={\scriptsize Recall},
    xlabel={\scriptsize $w$ },
    axis x line*=bottom,
    axis y line*=left,
    xmin=0, xmax=3,
    ymin=0.6,ymax=0.80,
    grid, legend style = {at={(0,0)}, nodes={scale=0.35, transform shape}, column sep = 0pt, legend to name = legend14, text=black, cells={align=left},}]
    \addlegendimage{empty legend}
    \addlegendentry{\hspace{-1.2cm}$|\mathcal{B}|$}
    \addplot[draw=colorC!40!white, thick] table [x index=0, y index=2, col sep=comma]{csv/EDM_ffhq/recall_ffhq_EDM.csv};
    \addplot[draw=colorC!60!white, thick] table [x index=0, y index=3, col sep=comma]{csv/EDM_ffhq/recall_ffhq_EDM.csv};
    \addplot[draw=colorC!80!white, thick] table [x index=0, y index=4, col sep=comma]{csv/EDM_ffhq/recall_ffhq_EDM.csv};
    \addplot[draw=colorC, thick] table [x index=0, y index=5, col sep=comma]{csv/EDM_ffhq/recall_ffhq_EDM.csv};
    \addlegendentryexpanded{1.5 Mi};
    \addlegendentryexpanded{2 Mi };
    \addlegendentryexpanded{2.5 Mi};
    \addlegendentryexpanded{3 Mi};

  \nextgroupplot[
    title={\scriptsize Flow Matching | CIFAR-10},
    ylabel ={\scriptsize FID},
    xlabel={\scriptsize $w$},
    axis x line*=bottom,
    axis y line*=left,
    xmin=0,xmax=3,
    ymin=2,ymax=4,
    grid, legend style = {at={(0.02,0.02)}, nodes={scale=0.35, transform shape}, column sep = 0pt, legend to name = legend22, text=black, cells={align=left},}]
    \addlegendimage{empty legend}
    \addlegendentry{\hspace{-1.2cm}$|\mathcal{B}|$}
    \addplot[draw=colorA!40!white, thick] table [x index=0, y index=2, col sep=comma]{csv/Flow_cifar10/fid_flow_cifar10.csv};
    \addplot[draw=colorA!60!white, thick] table [x index=0, y index=3, col sep=comma]{csv/Flow_cifar10/fid_flow_cifar10.csv};
    \addplot[draw=colorA!80!white, thick] table [x index=0, y index=4, col sep=comma]{csv/Flow_cifar10/fid_flow_cifar10.csv};
    \addplot[draw=colorA, thick] table [x index=0, y index=5, col sep=comma]{csv/Flow_cifar10/fid_flow_cifar10.csv};
    \addlegendentryexpanded{1.5 Mi};
    \addlegendentryexpanded{2.25 Mi };
    \addlegendentryexpanded{3 Mi};
    \addlegendentryexpanded{3.75 Mi};

  \nextgroupplot[
    title={\scriptsize Flow Matching | CIFAR-10},
    ylabel ={\scriptsize Precision},
    xlabel={\scriptsize $w$ },
    axis x line*=bottom,
    axis y line*=left,
    xmin=0,xmax=3,
    grid, legend style = {at={(0,0)}, nodes={scale=0.35, transform shape}, column sep = 0pt, legend to name = legend23, text=black, cells={align=left},}]
    \addlegendimage{empty legend}
    \addlegendentry{\hspace{-1.2cm}$|\mathcal{B}|$}
    \addplot[draw=colorB!40!white, thick] table [x index=0, y index=2, col sep=comma]{csv/Flow_cifar10/precision_flow_cifar10.csv};
    \addplot[draw=colorB!60!white, thick] table [x index=0, y index=3, col sep=comma]{csv/Flow_cifar10/precision_flow_cifar10.csv};
    \addplot[draw=colorB!80!white, thick] table [x index=0, y index=4, col sep=comma]{csv/Flow_cifar10/precision_flow_cifar10.csv};
    \addplot[draw=colorB, thick] table [x index=0, y index=5, col sep=comma]{csv/Flow_cifar10/precision_flow_cifar10.csv};
    \addlegendentryexpanded{1.5 Mi};
    \addlegendentryexpanded{2.25 Mi };
    \addlegendentryexpanded{3 Mi};
    \addlegendentryexpanded{3.75 Mi};

  \nextgroupplot[
    title={\scriptsize Flow Matching | CIFAR-10},
    ylabel ={\scriptsize Recall},
    xlabel={\scriptsize $w$ },
    axis x line*=bottom,
    axis y line*=left,
    xmin=0, xmax=3,
    grid, legend style = {at={(0,0)}, nodes={scale=0.35, transform shape}, column sep = 0pt, legend to name = legend24, text=black, cells={align=left},}]
    \addlegendimage{empty legend}
    \addlegendentry{\hspace{-1.2cm}$|\mathcal{B}|$}
    \addplot[draw=colorC!40!white, thick] table [x index=0, y index=2, col sep=comma]{csv/Flow_cifar10/recall_flow_cifar10.csv};
    \addplot[draw=colorC!60!white, thick] table [x index=0, y index=3, col sep=comma]{csv/Flow_cifar10/recall_flow_cifar10.csv};
    \addplot[draw=colorC!80!white, thick] table [x index=0, y index=4, col sep=comma]{csv/Flow_cifar10/recall_flow_cifar10.csv};
    \addplot[draw=colorC, thick] table [x index=0, y index=5, col sep=comma]{csv/Flow_cifar10/recall_flow_cifar10.csv};
    \addlegendentryexpanded{1.5 Mi};
    \addlegendentryexpanded{2.25 Mi };
    \addlegendentryexpanded{3 Mi};
    \addlegendentryexpanded{3.75 Mi};

\end{groupplot}

\node[anchor=south east, xshift=3pt, yshift=-2pt] at (group c1r1.south east) {\pgfplotslegendfromname{legend12}};
\node[anchor=south west, xshift=-2pt, yshift=-2pt] at (group c2r1.south west) {\pgfplotslegendfromname{legend13}};
\node[anchor=south east, xshift=2pt, yshift=-2pt] at (group c3r1.south east) {\pgfplotslegendfromname{legend14}};
\node[anchor=south east, xshift=3pt, yshift=-2pt] at (group c1r2.south east) {\pgfplotslegendfromname{legend22}};
\node[anchor=south west, xshift=-2pt, yshift=-2pt] at (group c2r2.south west) {\pgfplotslegendfromname{legend23}};
\node[anchor=south west, xshift=-2pt, yshift=-2pt] at (group c3r2.south west) {\pgfplotslegendfromname{legend24}};

\end{tikzpicture}
\caption{\textbf{Neon's precision-recall trade-off across diffusion and flow matching architectures.} FID, precision, and recall as functions of merge weight $w$ for EDM-VP on FFHQ-64 with $|\mathcal{S}|=18$k (top row) and Flow Matching on CIFAR-10 with $|\mathcal{S}|=25$k (bottom row), shown across different fine-tuning budgets $\mathcal{B}$. Both architectures exhibit the characteristic pattern: FID reaches a minimum at intermediate $w$ values, precision monotonically decreases, and recall follows an inverted-U curve peaking near the FID optimum.}
\label{fig:additional-fpr}
\end{figure}

\newpage

\section{xAR-B on Imagenet-256 synthesized images}
\label{appendix:xrb}
\begin{figure}[h]%
    \centering
    {\includegraphics[width=15cm]{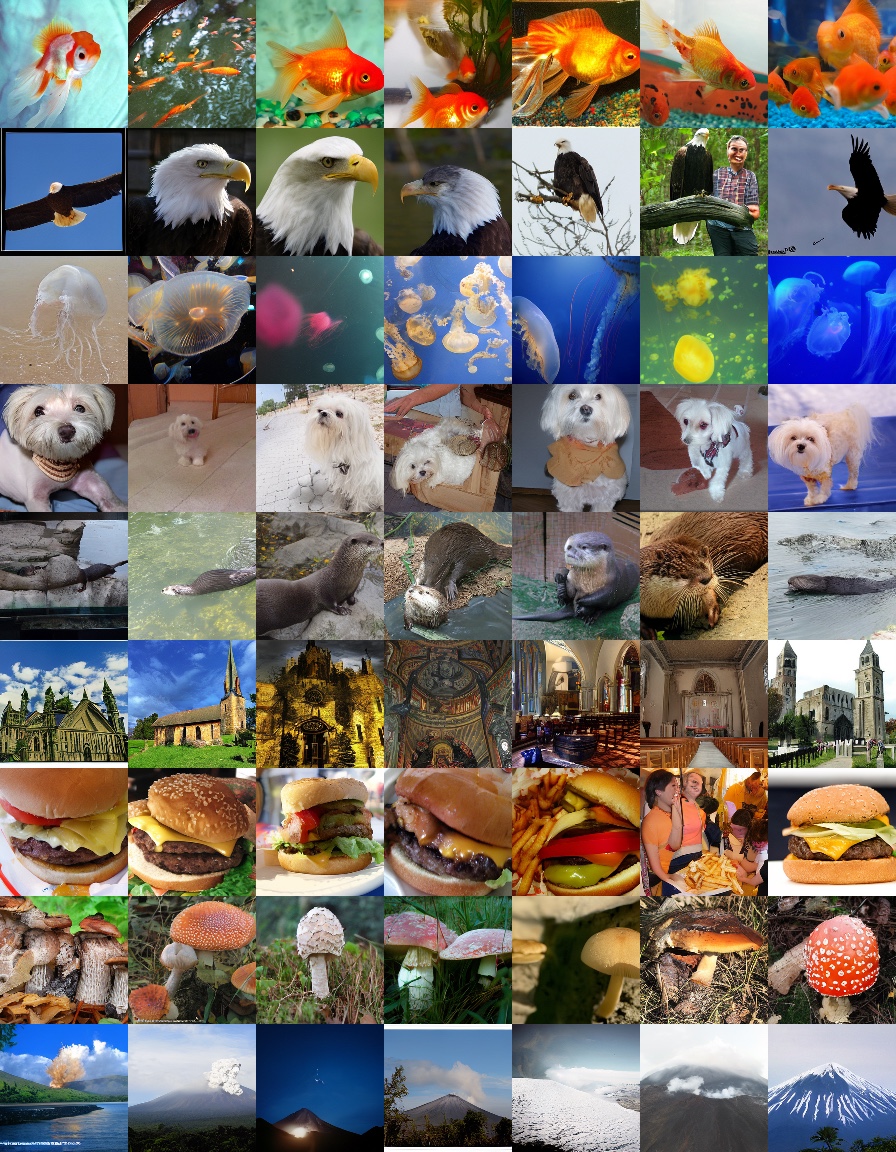}}%
    \caption{Neon with $\mathcal{B}=4.25$ (Mi), $w=1.4$, $\gamma = 3.8$,$|\mathcal{S}|=750\mathrm{k}$, $\mathrm{FID}=1.31$}
\end{figure}

\newpage
\section{xAR-L on Imagenet-256 synthesized images}
\label{appendix:xrl}
\begin{figure}[h]%
    \centering
    {\includegraphics[width=15cm]{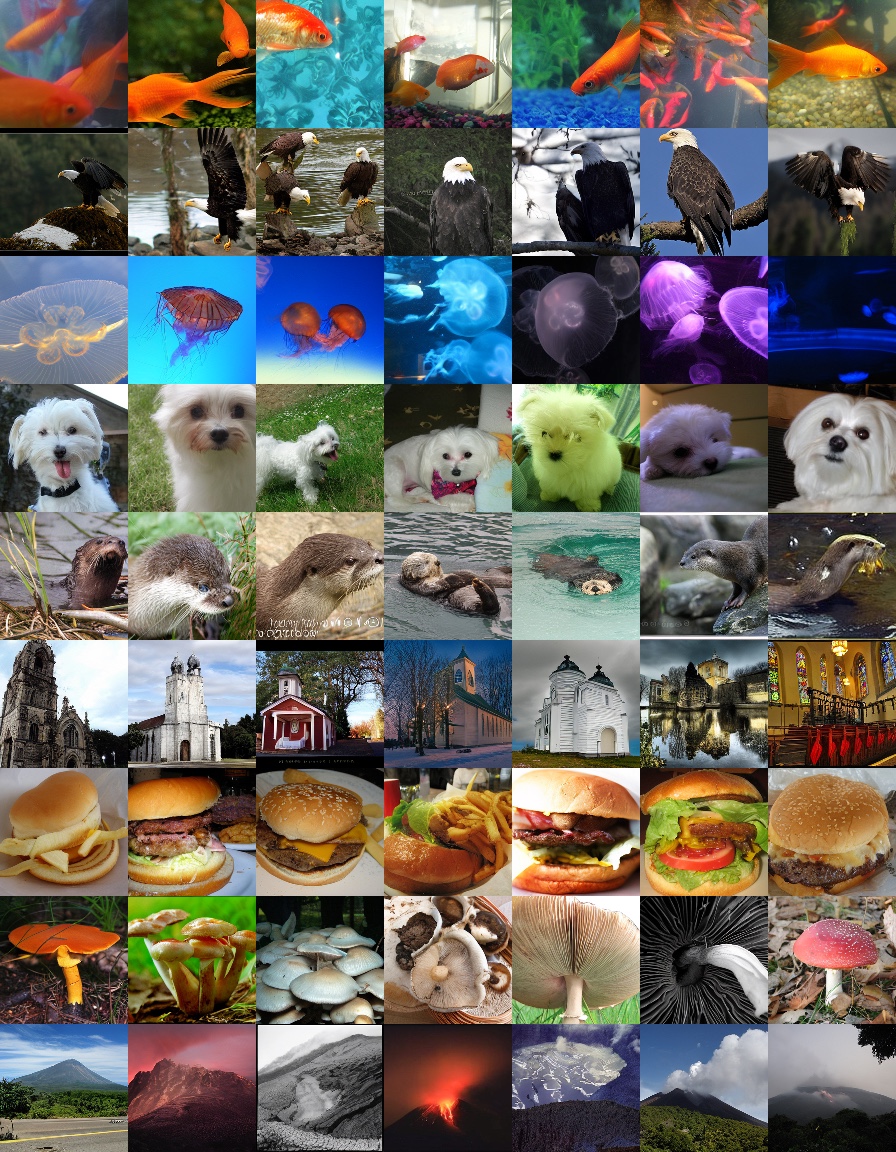}}%
    \caption{Neon with $\mathcal{B}=3.75$ (Mi), $w=1.6$, $\gamma = 2.7$, $|\mathcal{S}|=750\mathrm{k}$,$\mathrm{FID}=1.02$}
\end{figure}

\newpage
\section{VAR-d16 on Imagenet-256 synthesized images}
\label{appendix:vard16}
\begin{figure}[h]%
    \centering
    {\includegraphics[width=15cm]{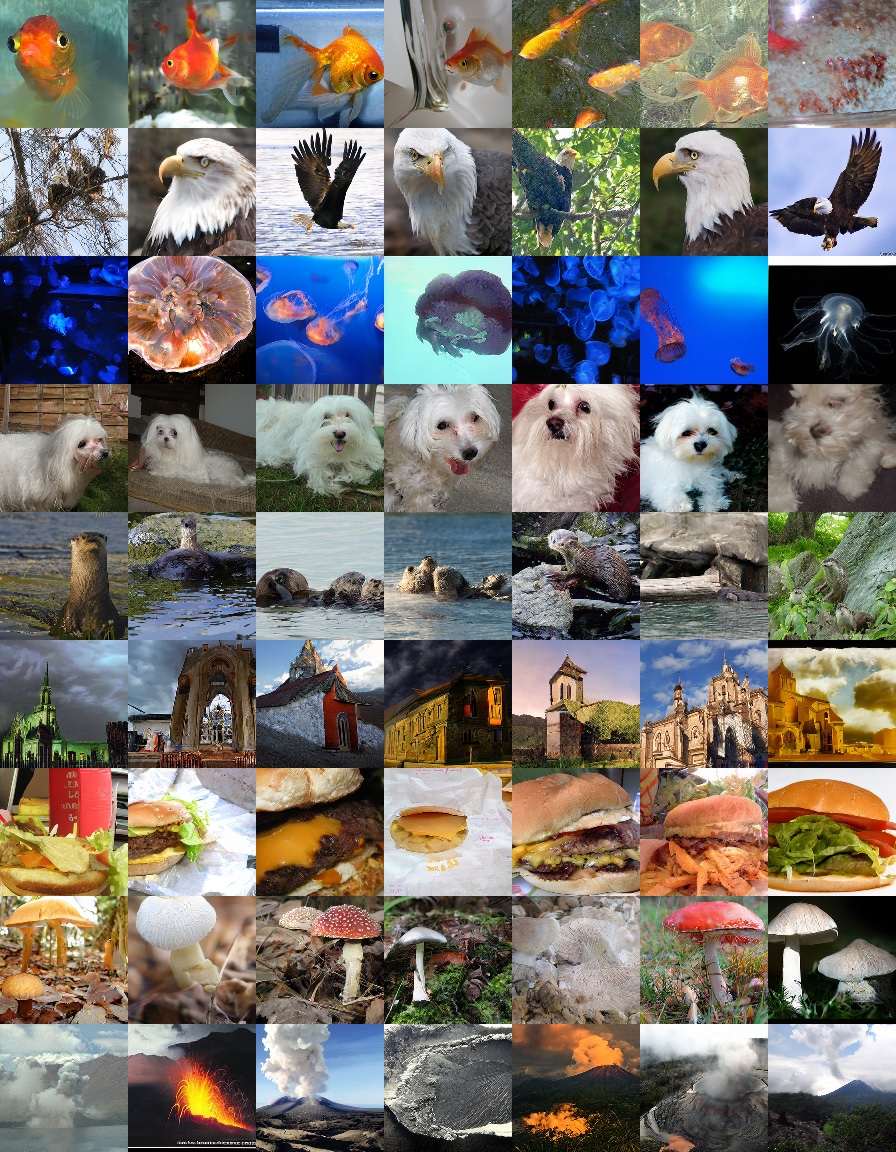}}%
    \caption{Neon with $\mathcal{B}=1.25$ (Mi), $w=1$, $\gamma = 2.9$, $|\mathcal{S}|=750\mathrm{k}$, $\mathrm{FID}=2.01$}
\end{figure}

\newpage
\section{IMM on Imagenet-256 synthesized images}
\label{appendix:imm}
\begin{figure}[h]%
    \centering
    {\includegraphics[width=15cm]{qual/vard16.jpeg}}%
    \caption{Neon with $\mathcal{B}=1.95$(Mi), $w=3.6$, $\gamma = 1.8$, $|\mathcal{S}|=30\mathrm{k}$, $\mathrm{FID}=1.45$}
\end{figure}

\newpage
\section{VAR-d36-s on Imagenet-512 synthesized images}
\label{appendix:vard36}
\begin{figure}[h]%
    \centering
    {\includegraphics[width=15cm]{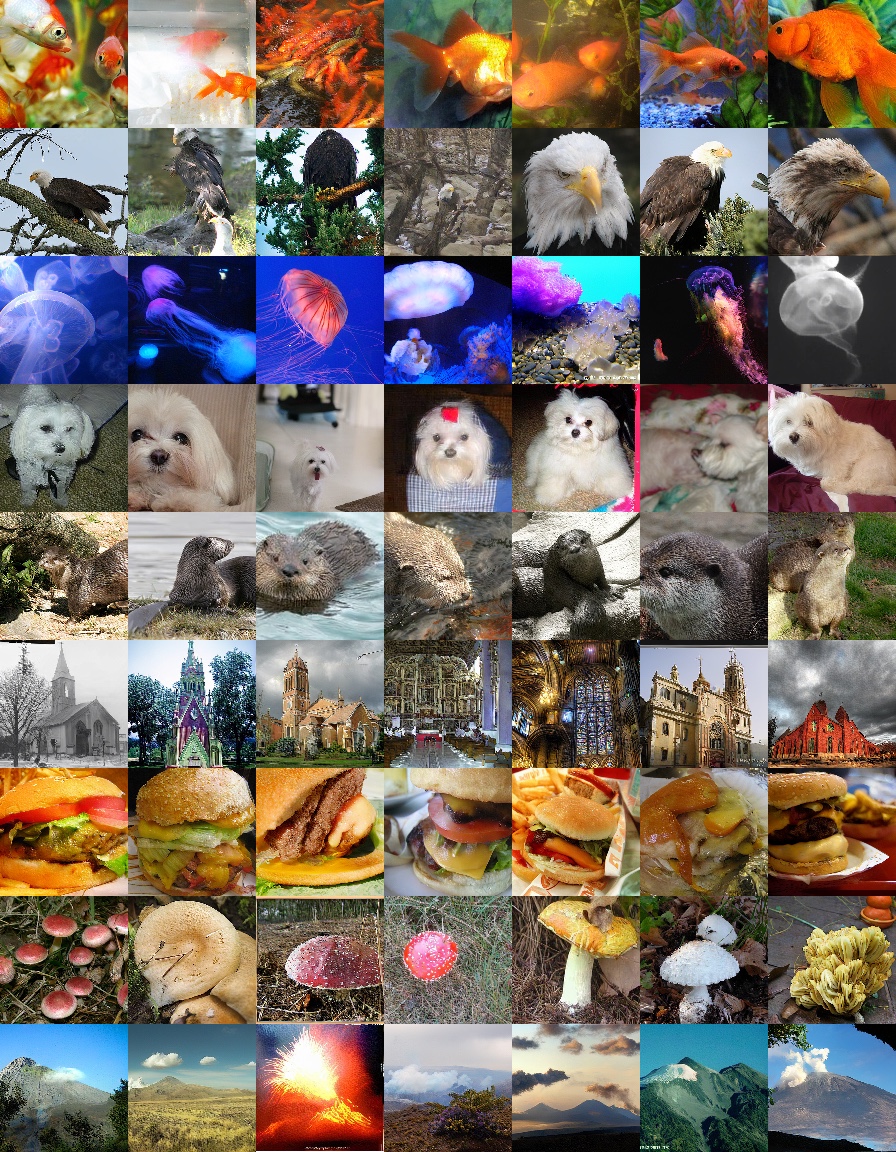}}%
    \caption{Neon with $\mathcal{B}=1.20$ (Mi), $w=0.6$, $\gamma = 3.2$, $|\mathcal{S}|=90\mathrm{k}$, $\mathrm{FID}=1.69$}
\end{figure}
\begin{comment}
\footnotetext{This paper was written with the assistance of large language models (LLMs) for the purpose of clarifying and polishing the text. The LLMs were used to improve grammar, sentence structure, and overall readability.}
\end{comment}

\end{document}